\documentclass[orivec,envcountsame,runningheads]{llncs}

\usepackage{centernot}
\usepackage{amssymb}
\usepackage{algorithm2e}
\usepackage{stmaryrd}
\usepackage{bm}
\usepackage{tikz}
\usetikzlibrary{shapes,calc,arrows,automata}
\usepackage{multirow}
\usepackage{array}
\usepackage{mathtools}
\usepackage{enumitem}
\usepackage[bookmarks,unicode,colorlinks=true]{hyperref}
\usepackage{proof}
\usepackage[capitalize,nameinlink]{cleveref}
\usepackage[location=appendix,manual]{moveproofs}
\usepackage{thm-restate}
\usepackage{subfig}
\usepackage{hhline}
\usepackage{graphicx}
\usepackage{microtype}
\usepackage{cite}
\usepackage{pgfplots}
\pgfplotsset{compat=1.18}
\usepackage{marvosym}



\DontPrintSemicolon

\hypersetup{
  pdftitle={ADCL: Acceleration Driven Clause Learning for Constrained Horn Clauses},
  colorlinks=true,
  linkcolor=blue,
  citecolor=olive,
  filecolor=magenta,
  urlcolor=cyan
}

\newenvironment{proofsketch}{%
  \proof}{\endproof}

\newcommand{\report}[1]{#1}
\newcommand{\submission}[1]{}

\renewcommand{\epsilon}{\varepsilon}
\let\oldphi\phi
\let\oldvarphi\varphi
\renewcommand{\varphi}{\oldphi}
\renewcommand{\phi}{\oldvarphi}

\newcommand{\pred}[1]{\mathsf{#1}}
\newcommand{\pl}[1]{\textsf{#1}}
\newcommand{\tool}[1]{\textsf{#1}}
\def\CC{{\pl{C}\nolinebreak[4]\hspace{-.05em}\raisebox{.4ex}{\tiny\bf ++}}}

\newcommand{\proj}[2]{\langle #1 \rangle_{#2}}
\newcommand{\mgu}{\mathsf{mgu}}
\newcommand{\ground}{\mathsf{grnd}}
\newcommand{\mbp}{\mathsf{sip}}
\newcommand{\unsat}{\mathsf{unsat}}
\newcommand{\unknown}{\mathsf{unknown}}
\newcommand{\sat}{\mathsf{sat}}
\newcommand{\bt}{\mathsf{bt}}
\newcommand{\cond}{\mathsf{cond}}
\newcommand{\accel}{\mathsf{accel}}
\newcommand{\resolve}{\mathsf{res}}
\newcommand{\F}{\pred{F}}
\newcommand{\G}{\pred{G}}
\renewcommand{\H}{\pred{H}}
\newcommand{\Inv}{\pred{Inv}}
\newcommand{\QF}{\mathsf{QF}}

\renewcommand{\AA}{\mathcal{A}}
\newcommand{\LL}{\mathcal{L}}
\newcommand{\VV}{\mathcal{V}}

\newcommand{\rs}{\leadsto_{\mathit{rs}}}


\newcommand{\NN}{\mathbb{N}}

\newcommand{\PP}{\mathcal{P}}



\newcommand{\Def}{\mathrel{\mathop:}=}
\newcommand{\relmiddle}[1]{\mathrel{}\middle#1\mathrel{}}


\renewcommand{\emptyset}{\varnothing}


\newcommand{\oldcomment}[1]{}

\newcommand{\anonymous}[2]{#1}

\DeclareMathOperator{\dom}{dom}

\crefname{equation}{eq.}{equations}%
\crefname{chapter}{chapter}{chapters}%
\crefname{section}{sect.}{sections}%
\crefname{appendix}{app.}{appendices}%
\crefname{enumi}{item}{items}%
\crefname{footnote}{footnote}{footnotes}%
\crefname{figure}{fig.}{figures}%
\crefname{table}{table}{tables}%
\crefname{theorem}{thm.}{theorems}%
\crefname{lemma}{lemma}{lemmas}%
\crefname{corollary}{cor.}{corollaries}%
\crefname{proposition}{proposition}{propositions}%
\crefname{definition}{def.}{definitions}%
\crefname{result}{result}{results}%
\crefname{example}{ex.}{examples}%
\crefname{remark}{remark}{remarks}%
\crefname{note}{note}{notes}%

  \setlength{\textwidth}{122mm}
  \setlength{\textheight}{193mm}

\makeatletter
\RequirePackage[bookmarks,unicode,colorlinks=true]{hyperref}%
   \def\@citecolor{blue}%
   \def\@urlcolor{blue}%
   \def\@linkcolor{blue}%

\def\orcidID#1{\href{http://orcid.org/#1}{\smash{\protect\raisebox{-1.25pt}{\protect\includegraphics{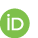}}}}}
\makeatother


\title{ADCL: Acceleration Driven Clause Learning for Constrained Horn Clauses\anonymous{\thanks{funded by
    the Deutsche Forschungsgemeinschaft (DFG, German Research Foundation)
    - 235950644 (Project GI 274/6-2)}}{}}
\author{\anonymous{Florian Frohn\submission{$^{(\href{mailto:florian.frohn@cs.rwth-aachen.de}{\mbox{\Letter}})}$}\orcidID{0000-0003-0902-1994}
    \and Jürgen Giesl\submission{$^{(\href{mailto:giesl@informatik.rwth-aachen.de}{\mbox{\Letter}})}$}\orcidID{0000-0003-0283-8520}}{}}
\authorrunning{Florian Frohn and Jürgen Giesl}
\institute{\anonymous{LuFG Informatik 2, RWTH Aachen University, Aachen,  Germany}{}}

\begin{document}

\maketitle

\begin{abstract}
	\emph{Constrained Horn Clauses} (CHCs) are often used in automated program verification.
	Thus, techniques for (dis-)proving satisfiability of CHCs are a very active field of research.
	On the other hand, \emph{acceleration techniques} for computing formulas that characterize the $N$-fold closure of loops have successfully been used for static program analysis.
	We show how to use acceleration to avoid repeated derivations with recursive CHCs in resolution proofs, which reduces the length of the proofs drastically.
	This idea gives rise to a novel calculus for (dis)proving satisfiability of CHCs,
        called \emph{Acceleration Driven Clause Learning (ADCL)}.
        We implemented this new calculus in \anonymous{our}{the} tool \tool{LoAT} and evaluate it empirically in
        comparison to other state-of-the-art tools.
\end{abstract}

\submission{\vspace*{-.5cm}
\begin{center}
  \includegraphics[scale=0.16]{ValidatedBadge.png} \includegraphics[scale=0.16]{ExtensibleBadge.png} \includegraphics[scale=0.16]{AvailableBadge.png}%
\end{center}
\vspace*{-.5cm}}
\section{Introduction}
\label{sect:Introduction}

\emph{Constrained Horn Clauses} (CHCs) are often used for expressing
verification conditions in automated program verification.
Examples for tools based on CHCs include \tool{Korn} \cite{korn} and \tool{SeaHorn} \cite{seahorn} for verifying \pl{C} and \CC{} programs, \tool{JayHorn} for \pl{Java} programs \cite{jayhorn}, \tool{HornDroid} for \pl{Android} apps \cite{horndroid}, \tool{RustHorn} for \pl{Rust} programs \cite{rusthorn}, and \tool{SmartACE} \cite{smartACE} and \tool{SolCMC} \cite{solcmc} for \pl{Solidity}.
Consequently, techniques for \mbox{(dis-)}proving satisfiability of CHCs (CHC-SAT) are a very active field of research, resulting in powerful tools like \tool{Spacer} \cite{spacer}, \tool{Eldarica} \cite{eldarica}, \tool{FreqHorn} \cite{freqhorn}, \tool{Golem}
\cite{golem}, \tool{Ultimate} \cite{ultimate-chc}, and \tool{RInGEN} \cite{ringen}.

On the other hand, \emph{loop acceleration techniques} have been used successfully for static program analyses during the last years, resulting in tools like \tool{Flata} \cite{bozga10,iosif17} and \tool{LoAT} \cite{loat}.
Essentially, such techniques compute quantifier-free first-order formulas that characterize the $N$-fold closure of the transition relation of loops without branching in their body.
Thus, acceleration techniques can be used when generating verification conditions in order
to replace such loops with the closure of their transition relation.

In this paper, we apply acceleration techniques to CHC-SAT, where we restrict ourselves to linear CHCs, i.e., clauses that contain
at most one positive and one negative literal with uninterpreted predicates.
As our main interest lies in proving \emph{un}satisfiability of CHCs, our approach does not rely on abstractions, in contrast to most other techniques.
Instead, we use acceleration to cut off repeated derivations with recursive CHCs while exploring the state space via resolution.
In this way, the number of resolution steps that are required to reach a counterexample can be reduced drastically, as new CHCs that are ``learned'' via acceleration can simulate arbitrarily many ``ordinary'' resolution steps at once.
\begin{example}
  \label{ex:ex1}
  Consider the following set of CHCs $\Phi$ over the theory of linear integer arithmetic
  (LIA) with a \emph{fact}
$\phi_{\mathsf{f}}$, a \emph{rule} $\phi_{\mathsf{r}}$, and a \emph{query}
  $\phi_{\mathsf{q}}$, where $\top$ and $\bot$ stand for \emph{true} and \emph{false}:\footnote{{\tt chc-LIA-Lin\_052.smt2} from the benchmarks of the CHC
  Competition~'22 \cite{CHC-COMP}}
  \def\scale{0.93}
  \begin{align}
    \scalebox{\scale}{$\top$} & \scalebox{\scale}{${} \implies \Inv(0, 5000) \label{eq:ex1-fact} \tag{\protect{\ensuremath{\phi_{\mathsf{f}}}}}$} \\
    \scalebox{\scale}{$
    \begin{aligned}
      & \Inv(X_1, X_2) \land {}\\
      & \quad ((X_1 < 5000 \land Y_2 = X_2) \lor (X_1 \geq 5000 \land Y_2 = X_2 + 1))
    \end{aligned}$}
      & \scalebox{\scale}{${} \implies \Inv(X_1+1, Y_2) \label{eq:ex1-rule} \tag{\protect{\ensuremath{\phi_{\mathsf{r}}}}}$} \\
    \scalebox{\scale}{$\Inv(10000, 10000)$} & \scalebox{\scale}{${} \implies \bot \label{eq:ex1-query} \tag{\protect{\ensuremath{\phi_{\mathsf{q}}}}}$}
  \end{align}
  Its unsatisfiability can be proven via resolution and arithmetic simplification as shown in Fig.\ \ref{fig1}.
  \begin{figure}[t]
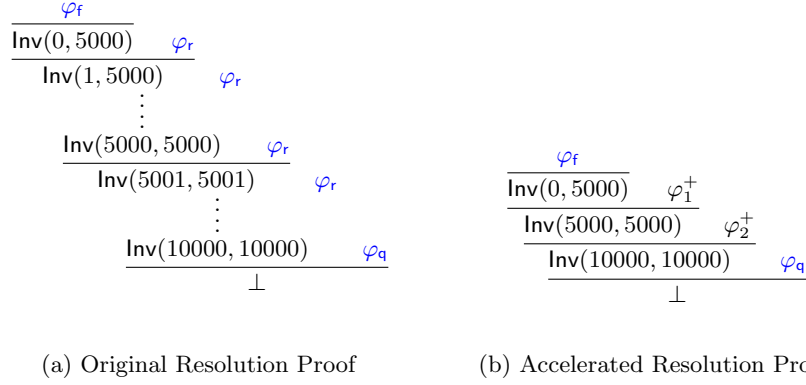

    \subfloat[Original Resolution Proof]{
      \begin{minipage}[b]{0.48\textwidth}
        \[
          \infer{\bot} {
            \infer*{\Inv(10000, 10000)} {
              \infer{\Inv(5001, 5001)}{
                \infer*{\Inv(5000, 5000)}{
                  \infer{\Inv(1, 5000)}{
                    \infer[]{\Inv(0, 5000)}{
                      \ref{eq:ex1-fact}
                    } & \ref{eq:ex1-rule}
                  } & \ref{eq:ex1-rule}
                } & \ref{eq:ex1-rule}
              } & \ref{eq:ex1-rule}
            } & \ref{eq:ex1-query}
          }
        \]
        \label{fig1}
      \end{minipage}
    }
    \subfloat[Accelerated Resolution Proof]{
      \begin{minipage}[b]{0.48\textwidth}
        \[
          \infer{\bot} {
            \infer{\Inv(10000, 10000)} {
              \infer{\Inv(5000, 5000)}{
                \infer[]{\Inv(0, 5000)}{
                  \ref{eq:ex1-fact}
                } & \phi^+_1
              } & \phi^+_2
            } & \ref{eq:ex1-query}
          }
        \]
        \label{fig2}
      \end{minipage}
    }
    \caption{Original and Accelerated Resolution Proof\vspace*{-.3cm}}
  \end{figure}
  The proof requires $10001$ resolution steps.
  Using acceleration techniques, we can derive the following two new CHCs from $\Phi$:
  \begin{alignat}{2}
    \Inv(X_1,X_2) & \land N>0 \land X_1 + N < 5001 && {} \implies \Inv(X_1+N,X_2) \tag{\protect{\ensuremath{\phi^+_1}}} \label{eq:accel1}\\
    \Inv(X_1,X_2) & \land N>0 \land X_1 \geq 5000 && {} \implies \Inv(X_1 + N,X_2 + N) \tag{\protect{\ensuremath{\phi^+_2}}} \label{eq:accel2}
  \end{alignat}
  The first CHC $\phi^+_1$ covers arbitrarily many subsequent resolution steps with \ref{eq:ex1-rule} where $X_1 < 5000$.
  Similarly, the second CHC $\phi^+_2$ covers arbitrarily many steps where $X_1 \geq 5000$.
  Now we can prove unsatisfiability of $\Phi$ with just 3 resolution steps, as shown in Fig.\ \ref{fig2}.
\end{example}

This idea gives rise to a novel calculus for CHC-SAT, called \emph{Acceleration Driven Clause Learning (ADCL)}.
ADCL is refutationally complete and can also prove satisfiability, but it does not necessarily terminate.

So far, our implementation in \anonymous{our}{the} tool \tool{LoAT} is restricted to proving unsatisfiability.
In program verification (which is one of the most important applications of CHC-SAT), satisfiability usually corresponds to safety, i.e., if an error state is reachable in the original program, then the CHCs derived from the program are unsatisfiable.
Hence, \tool{LoAT} can be used to show reachability of error states in program verification.
The ``witness'' of reachability is a resolution proof
ending in a conditional empty clause $\psi \implies \bot$ (where the condition $\psi$ is a formula over
the signature of some background theory like LIA),
together with a model for $\psi$.
Instantiating the variables in the proof according to the model yields a proof on ground instances.
Then this instantiated proof corresponds to a program run that ends in an error state, i.e., a counterexample.

After introducing preliminaries in \Cref{sec:preliminaries}, we formalize ADCL in \Cref{sec:ADCL}.
Next, we discuss how to implement ADCL efficiently in \Cref{sec:implementation}.
In \Cref{sec:experiments}, we discuss related work, and we show that our approach is highly competitive with state-of-the-art CHC-SAT solvers by means of an empirical evaluation.
\report{All proofs can be found in the appendix.}%
\submission{All proofs can be found in the extended version \cite{report}.}

\section{Preliminaries}
\label{sec:preliminaries}

We assume that the reader is familiar with basic concepts from many-sorted first-order logic.
Throughout this paper, $\Sigma$ denotes a many-sorted first-order signature that just contains predicates, i.e., we do not consider uninterpreted functions.
Moreover, $\VV$ denotes a countably infinite set of variables and for each entity $e$, $\VV(e)$ denotes the variables occurring in $e$.
We write $\vec{x}$ for sequences and $x_i$ is the $i^{th}$ element of $\vec{x}$.
In the following, we introduce preliminaries regarding \emph{Constrained Horn Clauses} and \emph{acceleration techniques}.

\paragraph*{Constrained Horn Clauses} are first-order formulas of the form
\begin{align*}
  \forall \vec{X}_1,\ldots,\vec{X}_d,\vec{Y}, \vec{Z}.\ \F_1(\vec{X}_1) \land \ldots \land \F_d(\vec{X}_d) \land \psi & {} \implies \G(\vec{Y}) & \text{or} \\
  \forall \vec{X}_1,\ldots,\vec{X}_d, \vec{Z}.\ \F_1(\vec{X}_1) \land \ldots \land \F_d(\vec{X}_d) \land \psi         & {} \implies \bot
\end{align*}
where $\vec{X}_1,\ldots,\vec{X}_d,\vec{Y}, \vec{Z}$ are pairwise disjoint vectors of
pairwise different variables, $\F_1,\ldots,\F_d,\G \in \Sigma$, $\psi \in \QF(\AA)$ is a quantifier-free
first-order formula over the many-sorted signature $\Sigma_\AA$ of some theory $\AA$, and
$\VV(\psi) \subseteq \vec{X}_1 \cup\ldots \cup\vec{X}_d \cup\vec{Y} \cup
\vec{Z}$. We assume that $\AA$ is a complete theory with equality and
that $\Sigma$ and $\Sigma_\AA$ are disjoint, and we usually omit the leading universal quantifier, i.e., all variables in CHCs are implicitly universally quantified.\footnote{We assume that all arguments of predicates are
  variables.
  This is not a restriction, as one can add equations to $\psi$ to identify
$\Sigma_\AA$-terms with fresh variables.
  To ease the presentation, we also use $\Sigma_\AA$-terms as arguments of predicates in
  examples (e.g., in \Cref{ex:ex1} we wrote
  $\top \implies \Inv(0, 5000)$ instead of  $Y_1 = 0 \land Y_2 = 5000 \implies \Inv(Y_1, Y_2)$).}
Moreover, w.l.o.g., we assume that $\psi$ is in negation normal form.
In this paper, we restrict ourselves to \emph{linear} CHCs.
Thus, we consider CHCs of the following form:

\vspace*{-.15cm}

\noindent
\hspace*{-.2cm}\begin{minipage}{0.37\textwidth}
\begin{align}
  \label{eq:fact}
  \tag{fact}
  \psi                   & {} \implies \G(\vec{Y}) \\
  \label{eq:rule}
  \tag{rule}
  \F(\vec{X}) \land \psi & {} \implies \G(\vec{Y})
\end{align}
\end{minipage}
\hspace*{.1cm}
\begin{minipage}{0.62\textwidth}
    \begin{align}
  \label{eq:query}
  \tag{query}
  \F(\vec{X}) \land \psi & {} \implies \bot \\
  \label{eq:ground-fact}
  \tag{conditional empty clause}
  \psi                   & {} \implies \bot
\end{align}
\end{minipage}
\smallskip

\noindent
The premise and conclusion of a CHC is also called \emph{body} and \emph{head}, a CHC is \emph{recursive} if it is a rule where $\F = \G$, and it is \emph{conjunctive} if $\psi$ is a conjunction of literals.
Throughout this paper, $\phi$ and $\pi$ always denote CHCs.
The \emph{condition} of a CHC $\phi$ is $\cond(\phi) \Def \psi$.
We write $\phi|_{\psi'}$ for the CHC that results from $\phi$ by replacing $\cond(\phi)$ with $\psi'$.
Typically, the original set of CHCs does not contain conditional empty clauses, but in our
setting, such clauses result from resolution proofs that start with a fact and end with a query.
A conditional empty clause is called a \emph{refutation} if its condition is satisfiable.
We also refer to sets of CHCs as \emph{CHC problems}, denoted $\Phi$ or $\Pi$.

We call $\sigma$ an $\AA$-\emph{interpretation} if it is a model of $\AA$ whose carrier only contains ground terms over $\Sigma_\AA$, extended with interpretations for $\Sigma$ and $\VV$. 
Given a first-order formula $\eta$ over $\Sigma \cup \Sigma_\AA$, an $\AA$-interpretation $\sigma$ is a \emph{model} of $\eta$ (written $\sigma \models_\AA \eta$) if it satisfies $\eta$.
If such a model exists, then $\eta$ is \emph{satisfiable}.
As usual, $\models_\AA \eta$ means that $\eta$ is valid (i.e., we have $\sigma \models_\AA \eta$ for all $\AA$-interpretations $\sigma$) and $\eta \equiv_\AA \eta'$ means $\models_\AA \eta \iff \eta'$.
For sets of formulas $H$, we define $\sigma \models_\AA H$ if $\sigma \models_\AA
\bigwedge_{\eta \in H} \eta$.
The \emph{ground instances} of a CHC $\eta \land \psi \implies \eta'$ are:
\[
  \ground(\eta \land \psi \implies \eta') \Def \{\eta\sigma \implies \eta'\sigma \mid \sigma \models_\AA \psi\},
  \]
  where $\eta\sigma$ abbreviates $\sigma(\eta)$, i.e., it results from $\eta$ by instantiating all variables according to $\sigma$.
Since $\AA$ is complete (i.e., either $\models_\AA \psi$ or $\models_\AA \neg\psi$ holds
for every closed formula $\psi$ over $\Sigma_\AA$), $\mathcal{A}$-interpretations $\sigma$
only differ on $\Sigma$ and
$\mathcal{V}$,
and thus we have $\sigma \models_\AA \phi$ iff $\sigma \models_\AA \ground(\phi)$.

In the following, we use ``$::$'' for the concatenation of sequences, where we identify sequences of length $1$ with their elements, i.e., we sometimes write $x::xs$ instead of $[x]::xs$ or $x::y$ instead of $[x,y]$.
As usual, $\mgu(s,t)$ is the most general unifier of $s$ and $t$.
The following definition formalizes resolution (where we disregard the
underlying theory and just use ordinary syntactic unification). If the
corresponding literals of two clauses $\phi,\phi'$ do not unify, then we define their
resolvent to be $\bot \implies \bot$, so that resolution is defined for pairs of arbitrary CHCs.
Note that in our setting, the mgu $\theta$ is always a variable renaming.

\begin{definition}[Resolution]
  \label{def:resolution}
  Let $\phi$ and $\phi'$ be CHCs, where we rename the variables in $\phi$ and $\phi'$ such that they are disjoint.
  If
  \[
    \phi = (\eta \land \psi \implies \F(\vec{x})), \quad \phi' = (\F(\vec{y}) \land \psi' \implies \eta'), \quad \text{and} \quad \theta = \mgu(\F(\vec{x}),\F(\vec{y})),\vspace{-0.75em}
  \]
  \begin{align*}
    \text{then} && \resolve(\phi,\phi') & {} \Def (\eta \land \psi \land \psi' \implies \eta')\theta.\\
    \text{Otherwise,} && \resolve(\phi,\phi') & {} \Def (\bot \implies \bot).
  \end{align*}
  Here, $\eta$ can also be $\top$ and $\eta'$ can also be $\bot$.
  We lift $\resolve$ to non-empty sequences of CHCs by defining
  \[
    \resolve([\phi_1,\phi_2]::\vec{\phi}) \Def \resolve(\resolve(\phi_1,\phi_2)::\vec{\phi}) \qquad \text{and} \qquad \resolve(\phi_1) \Def \phi_1.
  \]
\end{definition}
We implicitly lift notations and terminology for CHCs to sequences of CHCs via resolution.
So for example, we have $\cond(\vec{\phi}) \Def \cond(\resolve(\vec{\phi}))$ and $\ground(\vec{\phi}) \Def \ground(\resolve(\vec{\phi}))$.

\begin{example}
  \label{ex:alternativeProof}
  Consider a variation $\Phi'$ of the CHC problem $\Phi$ from \Cref{ex:ex1} where \ref{eq:ex1-fact} is replaced by
  \begin{equation}
    \label{eq:ex1-alt-fact}
    \tag{\ensuremath{\phi'_{\mathsf{f}}}}
    X_1 \leq 0 \land X_2 \geq 5000 \implies \Inv(X_1,X_2)
  \end{equation}
  To prove its unsatisfiability, one can consider the resolvent of the sequence
  $\vec{\phi} \Def
  [\ref{eq:ex1-alt-fact},\ref{eq:accel1},\ref{eq:accel2},\ref{eq:ex1-query}]$:
  \begin{alignat*}{2}
    &&& \resolve([\ref{eq:ex1-alt-fact},\ref{eq:accel1},\ref{eq:accel2},\ref{eq:ex1-query}]) \\
    &{} = {} && \resolve([\psi \implies \Inv(X_1 + N,X_2),\ref{eq:accel2},\ref{eq:ex1-query}]) \\
    &{} = {} && \resolve([\psi \land N' > 0 \land X_1 {+} N \geq 5000 \implies \Inv(X_1 {+} N {+} N',X_2 {+} N'),\ref{eq:ex1-query}]) \tag{$\dagger$} \\
    &{} \equiv_\AA {} && \resolve([X_1 \leq 0 \land X_2 \geq 5000 \land N' > 0 \implies \Inv(5000+N',X_2+N'),\ref{eq:ex1-query}]) \tag{$\ddagger$} \\
    &{} = {} && (X_1 \leq 0 \land X_2 \geq 5000 \land N' > 0 \land 5000+N' = X_2+N' = 10000 \implies \bot)
  \end{alignat*}
  Here, we have
\[\psi \Def \cond([\ref{eq:ex1-alt-fact},\ref{eq:accel1}]) = X_1 \leq 0 \land X_2 \geq 5000 \land N > 0 \land X_1 + N < 5001.\]
  In the step marked with ($\dagger$), the variable $N'$ results from renaming $N$ in \ref{eq:accel2}.
  In the step marked with ($\ddagger$), we simplified $X_1 + N$ to $5000$ for readability, as $\psi \land X_1 + N \geq 5000$ implies $X_1 + N = 5000$.
  If $\sigma(X_1) = 0$ and $\sigma(X_2) = \sigma(N') = 5000$, then $\sigma \models_\AA \resolve(\vec{\phi})$ and thus $\resolve(\vec{\phi})$ is a refutation, so $\Phi$ is unsatisfiable.
  By instantiating the variables in the proof according to $\sigma$ (and setting $N$ to $5000$, as we had $X_1 + N = 5000$), we obtain an accelerated resolution proof on ground instances that is analogous to Fig.\ \ref{fig2} and serves as a ``witness'' of unsatisfiability, i.e., a ``counterexample'' to $\Phi'$.
\end{example}

\paragraph*{Acceleration Techniques} are used to compute the $N$-fold closure of the transition relation of a loop in program analysis.
In the context of CHCs, applying an acceleration technique to a recursive CHC $\phi$ yields another CHC $\phi'$ which, for any instantiation of a dedicated fresh variable $N \in \VV(\phi')$ with a positive integer, has the same ground instances as $\resolve(\phi^N)$.
Here, $\phi^N$ denotes the sequence consisting of $N$ repetitions of $\phi$.
In the following definition, we restrict ourselves to conjunctive CHCs, since many existing acceleration techniques do not support disjunctions \cite{bozga09a}, or have to resort to approximations in the presence of disjunctions \cite{acceleration-calculus}.

\begin{definition}[Acceleration]
  \label{def:accel}
  An \emph{acceleration technique} is a function $\accel$ that maps a recursive conjunctive CHC $\phi$ to a recursive conjunctive CHC such that $\ground(\accel(\phi)) = \bigcup_{n \in \NN_{\geq 1}} \ground(\phi^n)$.
\end{definition}

\begin{example}
  \label{ex:acceleration}
  In the CHC problem from \Cref{ex:ex1}, $\phi_{\mathsf{r}}$ entails
  \[
    \Inv(X_1, X_2) \land X_1 < 5000 \implies \Inv(X_1 + 1, X_2).
  \]
  From this CHC, an acceleration technique would compute \ref{eq:accel1}.
\end{example}

Note that most theories are not ``closed under acceleration''.
For example, consider the left clause below, which only uses linear arithmetic.
\[
  \F(X, Y) \implies \F(X + Y, Y) \qquad\qquad \F(X,Y) \land N > 0 \implies \F(X + N \cdot Y, Y)
\]
Accelerating it yields the clause on the right, which is not expressible with linear
arithmetic due to the sub-expression $N \cdot Y$.
Moreover, if there is no sort for integers in the background theory $\AA$, then an additional sort for the range of $N$ is required in the formula that results from applying $\accel$.
For that reason, we consider \emph{many-sorted} first-order logic and theories.

\section{Acceleration Driven Clause Learning}
\label{sec:ADCL}

In this section, we introduce our novel calculus ADCL for (dis)proving satisfiability of CHC problems.
In \Cref{subsec:basics}, we start with
important concepts that ADCL is based on.
Then the ADCL calculus itself is presented in \Cref{subsec:adcl}.
Finally, in \Cref{subsec:properties}
we investigate the main properties of ADCL.

\subsection{Syntactic Implicants and Redundancy}
\label{subsec:basics}

Since ADCL relies on acceleration techniques, an important property of ADCL is that it only applies resolution to conjunctive CHCs, even if the analyzed CHC problem is not conjunctive.
To obtain conjunctive CHCs from non-conjunctive CHCs, we use \emph{syntactic implicants}.

\begin{definition}[Syntactic Implicant Projection]
  \label{def:implicant}
  Let $\psi \in \QF(\AA)$ be in negation normal form.
  We define:
  \begin{align*}
    \mbp(\psi,\sigma) & {} \Def \bigwedge \{\ell \text{ is a literal of } \psi \mid \sigma \models_\AA \ell\} && \text{if $\sigma \models_\AA \psi$} \\
    \mbp(\psi) & {} \Def \{ \mbp(\psi,\sigma) \mid \sigma \models_\AA \psi \}\\
    \mbp(\phi) & {} \Def \{\phi|_{\psi} \mid \psi \in \mbp(\cond(\phi))\} && \text{for CHCs $\phi$} \\
    \mbp(\Phi) & {} \Def \bigcup_{\phi \in \Phi} \mbp(\phi) && \text{for sets of CHCs $\Phi$}
  \end{align*}
  Here, $\mbp$ abbreviates \emph{syntactic implicant projection}.
\end{definition}
In contrast to the usual notion of implicants (which just requires that the implicants entail $\psi$), syntactic implicants are restricted to literals from $\psi$ to ensure that $\mbp(\psi)$ is finite.
We call such implicants \emph{syntactic} since \Cref{def:implicant} does not take the semantics of literals into account.
For example, the formula $\psi \Def (X > 0 \land X > 1)$ contains the literals $X > 0$ and $X > 1$, and $\models_\AA X > 1 \implies \psi$, but
$(X > 1) \notin \mbp(\psi) = \{\psi\}$,
because every model
of $X>1$ also satisfies $X>0$.
It is easy to show that $\psi \equiv_\AA \bigvee \mbp(\psi)$, and thus we also have $\Phi \equiv_\AA \mbp(\Phi)$.

\begin{example}\label{pip example}
  In the CHC problem of \Cref{ex:ex1}, we have
  \[
    \mbp(\cond(\ref{eq:ex1-rule})) = \{(X_1 < 5000 \land Y_2 = X_2), (X_1 \geq 5000 \land Y_2 = X_2 + 1)\}.
  \]
\end{example}

Since $\mbp(\phi)$ is worst-case exponential in the size of $\cond(\phi)$, we do not
compute it explicitly:
When resolving with $\phi$, we conjoin $\cond(\phi)$ to the condition of the resulting
resolvent and search for a model $\sigma$. 
This ensures that we do not continue with
resolvents that have unsatisfiable conditions.
Then we replace $\cond(\phi)$ by $\mbp(\cond(\phi),\sigma)$ in the resolvent.
This corresponds to a resolution step with
a conjunctive variant of $\phi$ whose condition is satisfied by $\sigma$.
In other words, our calculus constructs $\mbp(\cond(\phi), \sigma)$ ``on the fly'' when
resolving $\vec{\phi}$ with $\phi$, where $\sigma \models_\AA \cond(\vec{\phi}::\phi)$,
see \Cref{sec:implementation} for details.
In this way, the exponential blowup that results from constructing $\mbp(\phi)$ explicitly can often be avoided.

As ADCL learns new clauses via acceleration, it is important to prefer more general (learned) clauses over more specific clauses in resolution proofs.
To this end, we use the following \emph{redundancy relation} for CHCs.

\begin{definition}[Redundancy Relation]
  \label{def:redundancy}
 For two CHCs $\phi$ and $\pi$, we say that $\phi$ is \emph{(strictly) redundant} w.r.t.\ $\pi$, denoted $\phi \sqsubseteq \pi$ ($\phi \sqsubset \pi$), if $\ground(\phi) \subseteq \ground(\pi)$ ($\ground(\phi) \subset \ground(\pi)$).
  For a set of CHCs $\Pi$, we define $\phi \sqsubseteq \Pi$ ($\phi \sqsubset \Pi$) if $\phi \sqsubseteq \pi$
($\phi \sqsubset \pi$)
  for some $\pi \in \Pi$.
\end{definition}
In the following, we assume that we have oracles for checking redundancy, for satisfiability of $\QF(\AA)$-formulas, and for acceleration.
In practice, we have to resort to incomplete techniques instead.
In \Cref{sec:implementation}, we will explain how our implementation takes that into account.

\subsection{The ADCL Calculus}
\label{subsec:adcl}

A \emph{state} of ADCL consists of a CHC problem $\Pi$, containing the original CHCs and
all \emph{learned} clauses that were constructed by acceleration, the \emph{trace}
$[\phi_i]_{i=1}^k$,
representing a resolution proof, and a sequence $[B_i]_{i=0}^k$ of sets of \emph{blocking clauses}.
Clauses $\phi \sqsubseteq B_i$ must not be used for the $(i+1)^{th}$ resolution step.
In this way, blocking clauses prevent ADCL from visiting the same part of the search space more than once.
ADCL blocks a clause $\phi$ after proving that $\bot$ (and thus $\unsat$) cannot be
derived after adding $\phi$ to the current trace, or if
the current trace $\vec{\phi}::\vec{\phi}'$ ends with $\phi$ and
there is another ``more general''
trace $\vec{\phi}::\vec{\pi}$ such that $\vec{\phi}' \sqsubseteq \vec{\pi}$ and
$|\vec{\phi}'| \geq |\vec{\pi}|$, where one of the two relations is strict.
In the following, $\Phi$ denotes the original CHC problem whose satisfiability is analyzed with ADCL.

\begin{definition}[State]
  \label{def:state}
  A \emph{state} is a triple
  \[
    (\Pi,[\phi_i]_{i=1}^k,[B_i]_{i=0}^k)
  \]
  where $\Pi \supseteq \Phi$ is a CHC problem, $B_i \subseteq \mbp(\Pi)$ for each $0 \leq
  i \leq k$, and $[\phi_i]_{i=1}^k \in \mbp(\Pi)^*$.
  The clauses in $\mbp(\Phi)$ are called \emph{original clauses} and all clauses in $\mbp(\Pi)\setminus \mbp(\Phi)$ are called \emph{learned clauses}.
  A clause $\phi \sqsubseteq B_k$ is \emph{blocked}, and $\phi$ is \emph{active} if
  it is not blocked and $\cond([\phi_i]_{i=1}^{k}::\phi)$ is satisfiable.
  \end{definition}
Now we are ready to introduce our novel calculus.
\def\scale{0.9}
\def\width{0.86\textwidth}
\begin{definition}[ADCL]
  \label{def:calc}
  Let the ``backtrack function'' $\bt$ be defined as
  \[
    \bt(\Pi,[\phi_i]_{i=1}^k,[B_0,\ldots,B_k]) \Def (\Pi,[\phi_i]_{i=1}^{k-1},[B_0,\ldots,B_{k-1} \cup \{\phi_k\}]).
  \]
  Our calculus is defined by the following rules.
  \begin{align}
    & \scalebox{\scale}{\parbox{\width}{
      \[
        \infer{
          \Phi \leadsto (\Phi,[],[\emptyset])
      }{}
      \]
    }}
    \label{calc:init}
    \tag{\textsc{Init}}                                                                                                          \\
    & \scalebox{\scale}{\parbox{\width}{
      \[
        \infer{
          (\Pi,\vec{\phi},\vec{B}) \leadsto (\Pi,\vec{\phi}::\phi,\vec{B}::\emptyset)
        }{
          \phi \in \mbp(\Pi) \text{ is active}
        }
      \]
    }}
    \label{calc:step}
    \tag{\textsc{Step}}                                                                                                          \\
    & \scalebox{\scale}{\parbox{\width}{
      \[
        \infer{
          (\Pi,\vec{\phi}::\vec{\phi}^\circlearrowleft,\vec{B}::\vec{B}^\circlearrowleft) \leadsto (\Pi \cup \{\phi\},\vec{\phi}::\phi,\vec{B}::\{\phi\})
        }{
          \vec{\phi}^\circlearrowleft \text{ is recursive} & |\vec{\phi}^\circlearrowleft| = |\vec{B}^\circlearrowleft| & \accel(\vec{\phi}^\circlearrowleft) = \phi
        }
      \]
    }}
    \tag{\textsc{Accelerate}}
    \label{calc:accelerate}
    \\
    & \scalebox{\scale}{\parbox{\width}{
      \[
        \infer{
          s = (\Pi,\vec{\phi}::\vec{\phi}',\vec{B}) \leadsto \bt(s)
        }{
          \vec{\phi}' \sqsubset \mbp(\Pi) & \text{or} & \vec{\phi}' \sqsubseteq \mbp(\Pi) \land |\vec{\phi}'| > 1
        }
      \]
    }}
    \tag{\textsc{Covered}}
    \label{calc:covered}
    \\
    & \scalebox{\scale}{\parbox{\width}{
      \[
        \infer{
          s = (\Pi,\vec{\phi}::\phi,\vec{B}) \leadsto \bt(s)
        }{
          \text{all rules and queries from $\mbp(\Pi)$ are inactive} & \text{$\phi$ is not a query}
        }
      \]
    }}
    \tag{\textsc{Backtrack}}
    \label{calc:backtrack}
    \\
    & \scalebox{\scale}{\parbox{\width}{
      \[
        \infer{
          (\Pi,\vec{\phi},\vec{B}) \leadsto \unsat
        }{
          \vec{\phi} \text{ is a refutation}
        }
      \]
    }}
    \tag{\textsc{Refute}}
    \label{calc:refute}
    \\
    & \scalebox{\scale}{\parbox{\width}{
      \[
        \infer{
          (\Pi,[],[B]) \leadsto \sat
        }{
          \text{all facts and conditional empty clauses from $\mbp(\Pi)$ are inactive}
        }
      \]
    }}
    \tag{\textsc{Prove}}
    \label{calc:accept}
  \end{align}
\end{definition}

We write $\overset{\textsc{I}}{\leadsto}, \overset{\textsc{S}}{\leadsto}, \ldots$ to indicate that $\ref{calc:init}, \ref{calc:step}, \ldots$ was used for a $\leadsto$-step.
All derivations start with \ref{calc:init}.
\ref{calc:step} adds an active CHC $\phi$ to the trace.
Due to the linearity of CHCs, we can restrict ourselves to proofs that start with a fact
or a conditional empty clause, but such a restriction is not needed for the correctness of
our calculus and thus not enforced.

As soon as $\vec{\phi}$ has a recursive suffix $\vec{\phi}^\circlearrowleft$ (i.e., a suffix $\vec{\phi}^\circlearrowleft$ such that $\resolve(\vec{\phi}^\circlearrowleft)$ is recursive), \ref{calc:accelerate} can be used.
Then the suffix $\vec{\phi}^\circlearrowleft$ is replaced by the accelerated clause $\phi$
and the suffix $\vec{B}^\circlearrowleft$ of sets of blocked clauses that corresponds to
$\vec{\phi}^\circlearrowleft$ is replaced by $\{\phi\}$.
The reason is that for learned clauses, we always have $\resolve(\phi,\phi) \sqsubseteq \phi$, and thus applying $\phi$ twice in a row is superfluous.
So in this way, clauses that were just learned are not used for resolution several times in a row.
As mentioned in \Cref{sec:preliminaries}, the condition of the learned clause may not be expressible in the theory $\AA$.
Thus, when \ref{calc:accelerate} is applied, we may implicitly switch to a richer theory $\AA'$ (e.g., from linear to non-linear arithmetic).

If a suffix $\vec{\phi}'$ of the trace is redundant w.r.t.\ $\mbp(\Pi)$, we can backtrack
via \ref{calc:covered}, which removes the last element from $\vec{\phi}'$
  (but not the rest of $\vec{\phi}'$, since this sequence could now be continued in a different way) and blocks it, such that we do not revisit the corresponding part of the search space.
So here the redundancy check allows us to use more general (learned) clauses, if available.
Here, it is important that we do not backtrack if $\vec{\phi}'$ is a single, \emph{weakly} redundant clause.
Otherwise, \ref{calc:covered} could always be applied after
\ref{calc:step} or \ref{calc:accelerate} and block the last clause from the trace.
Thus, we might falsely ``prove'' satisfiability.

If no further \ref{calc:step} is possible since all CHCs are inactive, then we
\ref{calc:backtrack} as well and block the last clause from $\vec{\phi}$ to avoid
performing the same \ref{calc:step} again.

If we started with a fact and the last CHC in $\vec{\phi}$ is a query, then $\resolve(\vec{\phi})$ is a refutation and \ref{calc:refute} can be used to prove $\unsat$.

Finally, if we arrive in a state where $\vec{\phi}$ is empty and all facts and conditional empty clauses are inactive, then \ref{calc:accept} is applicable as we have exhausted the entire search space without proving unsatisfiability, i.e., $\Phi$ is satisfiable.
Note that we always have $|\vec{B}| = |\vec{\phi}| + 1$, since we need one additional set of blocking clauses to block facts.
While $B_0$ is initially empty (see \ref{calc:init}), it can be populated via \ref{calc:backtrack} or \ref{calc:covered}.
So eventually, all facts may become blocked, such that $\sat$ can be proven via \ref{calc:accept}.

\def\scale{0.99}
\begin{example}
  \label{ex:calc}
  Using our calculus, unsatisfiability of the CHC problem $\Phi$ in \Cref{ex:ex1} can be proven
  as follows:
 \begin{align*}
   \Phi \overset{\textsc{I}}{\leadsto} {} & (\Phi,[],[\emptyset]) \\
   {} \overset{\textsc{S}}{\leadsto} {} & (\Phi,[\ref{eq:ex1-fact}],[\emptyset,\emptyset]) \tag*{\scalebox{\scale}{$\top \implies \Inv(0,5000)$}} \\
   {} \overset{\textsc{S}}{\leadsto} {} & (\Phi,[\ref{eq:ex1-fact},\ref{eq:ex1-rule}|_{\psi_1}],[\emptyset,\emptyset,\emptyset]) \tag*{\scalebox{\scale}{$\Inv(0,5000) \implies \Inv(1,5000)$}} \\
   {} \overset{\textsc{A}}{\leadsto} {} & (\Pi_1,[\ref{eq:ex1-fact},\ref{eq:accel1}], [\emptyset,\emptyset,\{\ref{eq:accel1}\}]) \tag*{\scalebox{\scale}{$\Inv(0,5000) \implies \Inv(1,5000)$}} \\
   {} \overset{\textsc{S}}{\leadsto} {} & (\Pi_1,[\ref{eq:ex1-fact},\ref{eq:accel1},\ref{eq:ex1-rule}|_{\psi_2}], [\emptyset,\emptyset,\{\ref{eq:accel1},\},\emptyset]) \tag*{\scalebox{\scale}{$\Inv(5000,5000) \implies \Inv(5001,5001)$}} \\
   {} \overset{\textsc{A}}{\leadsto} {} & (\Pi_2,[\ref{eq:ex1-fact},\ref{eq:accel1},\ref{eq:accel2}], [\emptyset,\emptyset,\{\ref{eq:accel1}\},\{\ref{eq:accel2}\}]) \tag*{\scalebox{\scale}{$\Inv(5000,5000) {\implies} \Inv(5001,5001)$}} \\
   {} \overset{\textsc{S}}{\leadsto} {} & (\Pi_2,[\ref{eq:ex1-fact},\ref{eq:accel1},\ref{eq:accel2},\ref{eq:ex1-query}], [\emptyset,\emptyset,\{\ref{eq:accel1}\},\{\ref{eq:accel2}\},\emptyset]) \tag*{\scalebox{\scale}{$\Inv(10000,10000) \implies \bot$}} \\
   {} \overset{\textsc{R}}{\leadsto} {} & \unsat
 \end{align*}
 Here, we have:
 \begin{align*}
   \Pi_1 & {} \Def \Phi \cup \{\ref{eq:accel1}\}  & \psi_1 & {} \Def X_1 < 5000 \land Y_2 = X_2 \\
   \Pi_2 & {} \Def \Pi_1 \cup \{\ref{eq:accel2}\} & \psi_2 & {} \Def X_1 \geq 5000 \land Y_2 = X_2 + 1
  \end{align*}
  Beside the state of our calculus, we show a ground instance of the last element $\phi$
  of $\vec{\phi}$ which results from applying a model for $\cond(\vec{\phi})$ to
  $\phi$.
  In our implementation, we always maintain such a model.
  In general, these models are not unique:
  For example, after the first acceleration step, we might use $\Inv(0,5000) \implies \Inv(X_1,5000)$ for any $X_1 \in [1,5000]$.
  The reason is that $\ref{eq:accel1}$ can simulate arbitrarily many resolution steps with $\ref{eq:ex1-rule}|_{\psi_1}$, depending on the choice of $N$.

  After starting the derivation with \ref{calc:init}, we apply the only fact \ref{eq:ex1-fact} via \ref{calc:step}.
  Next, we apply \ref{eq:ex1-rule}, projected to the case $X_1 < 5000$.
  Since \ref{eq:ex1-rule} is recursive, we may apply \ref{calc:accelerate} afterwards, resulting in the new clause $\ref{eq:accel1}$.

  Then we apply \ref{eq:ex1-rule}, projected to the case $X_1 \geq 5000$.
  Note that the current model (resulting in the ground head-literal $\Inv(1,5000)$) cannot
  be extended to a model for $\ref{eq:ex1-rule}|_{\psi_2}$ (which requires $X_1 \geq 5000$).
  However, as the model is not part of the state, we may choose a different one at any
  point, which is important for implementing ADCL via \emph{incremental} SMT, see
  \Cref{sec:implementation}.
  Hence, we can apply $\ref{eq:ex1-rule}|_{\psi_2}$ nevertheless.

  Now we apply \ref{calc:accelerate} again, resulting in the new clause $\ref{eq:accel2}$.
  Finally, we apply the only query \ref{eq:ex1-query} via \ref{calc:step}, resulting in a conditional empty clause with a satisfiable condition, such that we can finish the proof via \ref{calc:refute}.

Later (in \Cref{def:reasonable}), we will define \emph{reasonable strategies} for applying the
rules of our calculus, which ensure that we use
\textsc{Accelerate} instead of applying
\textsc{Step} $10001$ times in our example.

  To see how our calculus proves satisfiability, assume that we replace \ref{eq:ex1-query} with
  \[
    \Inv(10000, X_2) \land X_2 \neq 10000 \implies \bot.
  \]
  Then resolution with \ref{eq:ex1-query} via the rule \ref{calc:step}
  is no longer applicable and our derivation continues as follows after the second application of \ref{calc:accelerate}:
  \begin{align*}
                   & (\Pi_2,[\ref{eq:ex1-fact},\ref{eq:accel1},\ref{eq:accel2}], [\emptyset,\emptyset,\{\ref{eq:accel1}\},\{\ref{eq:accel2}\}])               \\
    {} \overset{\textsc{B}}{\leadsto} {} & (\Pi_2,[\ref{eq:ex1-fact},\ref{eq:accel1}], [\emptyset,\emptyset,\{\ref{eq:accel1},\ref{eq:accel2}\}]) \\
    {} \overset{\textsc{B}}{\leadsto} {} & (\Pi_2,[\ref{eq:ex1-fact}], [\emptyset,\{\ref{eq:accel1}\}]) \\
    {} \overset{\textsc{B}}{\leadsto} {} & (\Pi_2,[], [\{\ref{eq:ex1-fact}\}]) \\
    {} \overset{\textsc{P}}{\leadsto} {} & \sat
  \end{align*}
  For all three \ref{calc:backtrack}-steps, \ref{eq:ex1-query} is clearly inactive, as adding it to $\vec{\phi}$ results in a resolvent with an unsatisfiable condition.
  The first \ref{calc:backtrack}-step is possible since $\ref{eq:ex1-rule}|_{\psi_1}$ and $\ref{eq:accel1}$ are inactive, as they require $X_1 < 5000$ for the first argument $X_1$ of their body-literal, but $\ref{eq:accel2}$ ensures $Y_1 > 5000$ for the first argument $Y_1$ of its head-literal.
  Moreover, $\ref{eq:ex1-rule}|_{\psi_2}$ and $\ref{eq:accel2}$ are blocked, as $\ref{eq:ex1-rule}|_{\psi_2} \sqsubseteq \ref{eq:accel2}$.
  The second \ref{calc:backtrack}-step is performed since $\ref{eq:ex1-rule}|_{\psi_1}$, $\ref{eq:ex1-rule}|_{\psi_2}$,
  $\ref{eq:accel1}$, and $\ref{eq:accel2}$ are blocked
  (as $\ref{eq:ex1-rule}|_{\psi_1} \sqsubseteq \ref{eq:accel1}$ and $\ref{eq:ex1-rule}|_{\psi_2} \sqsubseteq \ref{eq:accel2}$).
  The third \ref{calc:backtrack}-step is possible since $\ref{eq:ex1-rule}|_{\psi_1}$ and $\ref{eq:accel1}$ are blocked, and $\ref{eq:ex1-rule}|_{\psi_2}$ and $\ref{eq:accel2}$ cannot be applied without applying $\ref{eq:accel1}$ first, so they are inactive.
  Thus, we reach a state where the only fact \ref{eq:ex1-fact} is blocked and hence \ref{calc:accept} applies.

  To see an example for \ref{calc:covered}, assume that we apply $\ref{eq:ex1-rule}|_{\psi_1}$ \emph{twice} before using \ref{calc:accelerate}.
  Then the following derivation yields the trace that we obtained after the first acceleration step above:
  \begin{samepage}
  \begin{align*}
    & (\Phi,[\ref{eq:ex1-fact},\ref{eq:ex1-rule}|_{\psi_1}],[\emptyset,\emptyset,\emptyset]) \\
    {} \overset{\textsc{S}}{\leadsto} {} & (\Phi,[\ref{eq:ex1-fact},\ref{eq:ex1-rule}|_{\psi_1},\ref{eq:ex1-rule}|_{\psi_1}],[\emptyset,\emptyset,\emptyset,\emptyset]) \\
    {} \overset{\textsc{A}}{\leadsto} {} & (\Pi_1,[\ref{eq:ex1-fact},\ref{eq:ex1-rule}|_{\psi_1},\ref{eq:accel1}],[\emptyset,\emptyset,\emptyset,\{\ref{eq:accel1}\}]) \\
    {} \overset{\textsc{C}}{\leadsto} {} & (\Pi_1,[\ref{eq:ex1-fact},\ref{eq:ex1-rule}|_{\psi_1}],[\emptyset,\emptyset,\{\ref{eq:accel1}\}]) \tag{as $[\ref{eq:ex1-rule}|_{\psi_1},\ref{eq:accel1}] \sqsubset \ref{eq:accel1}$}\\
    {} \overset{\textsc{C}}{\leadsto} {} & (\Pi_1,[\ref{eq:ex1-fact}],[\emptyset,\{\ref{eq:ex1-rule}|_{\psi_1}\}]) \tag{as $\ref{eq:ex1-rule}|_{\psi_1} \sqsubset \ref{eq:accel1}$}\\
    {} \overset{\textsc{S}}{\leadsto} {} & (\Pi_1,[\ref{eq:ex1-fact},\ref{eq:accel1}],[\emptyset,\{\ref{eq:ex1-rule}|_{\psi_1}\},\emptyset]) \tag{$\dagger$} \label{eq:after-covered}
  \end{align*}
  \end{samepage}
\end{example}

As one can see in the example above, our calculus uses \emph{forward reasoning}, i.e., it starts with a fact and resolves it with rules until a query applies.
Alternatively, one could use \emph{backward reasoning} by starting with a query and resolving it with rules until a fact applies, as in logic programming.

Our calculus could easily be adapted for backward reasoning.
Then it would start resolving with a query and aim for resolving with a fact, while all
other aspects of the calculus would remain unchanged.
Such an adaption would be motivated by examples like
\begin{align*}
  \F(\ldots) \land \ldots & {} \implies \G(\ldots) \\
  \G(\ldots) \land \ldots & {} \implies \H(\ldots) \\
  \G(\ldots) \land \ldots & {} \implies \bot
\end{align*}
where $\H$ is the entry-point of a satisfiable sub-problem.
With forward reasoning, ADCL might spend lots of time on that sub-problem, whereas
unsatisfiability would be proven after just two steps with backward reasoning.
However, in our tests, backward reasoning did not help on any example.
Presumably, the reason is that the benchmark set from our evaluation does not contain examples with such a structure.
Thus, we did not pursue this approach any further.

\subsection{Properties of ADCL}
\label{subsec:properties}

In this section, we investigate the main properties of ADCL.
Most importantly, ADCL is sound.

\begin{restatable}[Soundness]{theorem}{soundness}
  \label{thm:soundness}
  If $\Phi \leadsto^* \sat$, then $\Phi$ is satisfiable.
  If $\Phi \leadsto^* \unsat$, then $\Phi$ is unsatisfiable.
\end{restatable}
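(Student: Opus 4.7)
The plan is to treat the two directions separately, since they rest on rather different arguments. For the unsat direction, I would set up the semantic invariant $\Phi \models_\AA \Pi$ and show it is preserved by every rule of \Cref{def:calc}. Only \ref{calc:init} and \ref{calc:accelerate} modify $\Pi$: after \ref{calc:init} the invariant holds trivially with $\Pi = \Phi$; for \ref{calc:accelerate}, \Cref{def:accel} guarantees $\ground(\accel(\resolve(\vec{\phi}^\circlearrowleft))) = \bigcup_{n \geq 1} \ground(\resolve(\vec{\phi}^\circlearrowleft)^n)$, and by induction on $n$ together with soundness of ordinary resolution, each such clause is entailed by $\Pi$ and hence, by the invariant, by $\Phi$. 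Since syntactic implicants satisfy $\psi \equiv_\AA \bigvee \mbp(\psi)$, every $\phi_i \in \mbp(\Pi)$ occurring in the trace is also a consequence of $\Phi$. When \ref{calc:refute} fires, $\resolve(\vec{\phi})$ is a conditional empty clause $\psi \implies \bot$ with satisfiable $\psi$; by soundness of resolution it is entailed by $\Phi$, so $\Phi$ admits no model and is therefore unsatisfiable.

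For the sat direction, my plan is to construct a model of $\Phi$ from the accepting state $(\Pi,[],[B_0])$. A natural candidate is the least Herbrand-style model $M$ in which $\G(\vec{a}) \in M$ iff some forward-resolution derivation starting from a fact of $\Pi$ produces a conditional clause whose head is $\G(\vec{Y})$ together with a constraint on $\vec{Y}$ satisfied by $\vec{a}$. By the usual minimal-model argument for positive Horn theories, $M$ automatically satisfies every fact and every rule of $\Phi$ (and of $\Pi$, by the invariant above). What remains is to show that $M$ also satisfies every query and every conditional empty clause of $\Phi$, i.e., that no query body is true in $M$ with a satisfiable residual condition. This is the only place where the assumption that \ref{calc:accept} has fired actually enters the argument.

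The main obstacle will be linking the purely syntactic blocking mechanism to this semantic claim. Concretely, I expect to need a second invariant that records, for every blocked clause $\pi$ appearing in some $B_i$, that every refutation derivable by extending the current trace through $\pi$ has either been explicitly checked and rejected (because its condition was found unsatisfiable) or is subsumed---in the sense of \Cref{def:redundancy}---by an alternative trace whose outcomes have already been explored. This invariant has to survive \ref{calc:accelerate}, which collapses an entire recursive suffix into one clause while simultaneously augmenting $\Pi$, as well as \ref{calc:covered}, which blocks a clause based on redundancy of a suffix rather than on direct exhaustion. Once such bookkeeping is in place, the sat direction follows by contradiction: any ground instance of a query violated in $M$ would witness a refutation of $\Phi$ that, by the invariant, the blocking mechanism certifies has already been checked, contradicting the firing of \ref{calc:accept}.
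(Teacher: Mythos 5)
Your \textsc{Unsat} direction is essentially the paper's argument: an invariant $\Phi \models_\AA \Pi$ (the paper proves $\Pi \equiv_\AA \Phi$) preserved by \ref{calc:accelerate} via \Cref{def:accel} and soundness of resolution, and then a refutation in $\mbp(\Pi)^*$ yields unsatisfiability of $\Phi$. That half is fine.

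The \textsc{Sat} direction has a genuine gap. Reducing it to ``no refutation can have been wrongly excluded by the blocking mechanism'' is the right target (your least-model construction is just one way to invoke completeness of resolution for Horn clauses, which the paper also uses), but the ``second invariant'' you rely on is neither proved nor, as stated, true: when \ref{calc:covered} blocks the last clause of a suffix $\vec{\phi}'$ because $\vec{\phi}'$ is redundant w.r.t.\ some learned clause, the more general alternative trace has typically \emph{not} ``already been explored'' --- it will be explored only after backtracking, and it may itself later be blocked in favor of something yet more general. So your invariant cannot be maintained as an induction hypothesis without an argument that this chain of ``subsumed by something more general'' cannot regress forever. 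That is exactly the machinery the paper supplies and that your proposal lacks: it defines a relation $\succ_\Pi$ on traces (shorten a redundant infix, with strictness required when the infix is a single clause), proves it well founded using finiteness of $\mbp(\Pi)$, picks a $\succ_{\Pi}$-minimal refutation $\vec{\phi}$, and then shows by case analysis on \ref{calc:step}, \ref{calc:accelerate}, \ref{calc:covered}, \ref{calc:backtrack} that the last state in which $\vec{\phi}$ is ``enabled'' cannot exist --- each backtracking rule would contradict minimality (e.g.\ \ref{calc:accelerate} would force two consecutive clauses of $\vec{\phi}$ to be equivalent to the learned clause, and \ref{calc:backtrack} yields \emph{strict} redundancy w.r.t.\ a blocked clause). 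Note also that the strict/weak distinction in \Cref{def:redundancy} and in the side condition of \ref{calc:covered} (a single clause must be \emph{strictly} redundant) is precisely what makes this contradiction go through; the paper points out that without it one could falsely ``prove'' $\sat$, and your invariant formulation does not engage with that distinction. Without a minimality or well-foundedness argument of this kind, the final step of your plan --- ``the blocking mechanism certifies the refutation has already been checked'' --- does not follow.
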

\begin{proofsketch}$\!\!\!$\footnote{\report{To improve readability, we only present some proof sketches in
      the paper and refer to the appendix for all full proofs.}%
  \submission{All full proofs can be found in the extended version \cite{report}.}}
  For $\unsat$, we have $\Phi \leadsto^* (\Pi,\vec{\phi},\vec{B}) \leadsto \unsat$ where
  $\Pi \equiv_\AA \Phi$ and $\vec{\phi} \in \mbp(\Pi)^*$ is a refutation.
  For $\sat$, assume that $\Phi$ is unsatisfiable, but $\Phi \leadsto s = (\Phi,[],[\emptyset]) \leadsto^* (\Pi,[],[B]) = s' \leadsto \sat$.
  Then there is a refutation $\vec{\phi} \in \mbp(\Pi)^*$ that is minimal in the sense
  that $\phi_i \not\sqsubset \mbp(\Pi)$ for all $1 \leq i \leq |\vec{\phi}|$ and $\vec{\phi}' \not\sqsubseteq \mbp(\Pi)$ for all infixes $\vec{\phi}'$ of $\vec{\phi}$ whose length is at least $2$.
  We say that $\vec{\phi}$ is \emph{disabled} by a state $(\Pi',\vec{\phi}',\vec{B}')$ if
  $\vec{\phi}'$ has a prefix $[\phi'_i]_{i=1}^k$ such that $\phi_i \equiv_\AA \phi'_i$ for
  all $1 \leq i \leq k$ and $\phi_{k+1} \equiv_{\AA} \phi$ for some $\phi \in B'_k$.
  Then $\vec{\phi}$ is disabled by $s'$, but not by $s$.
  Let $s^{(i)}$ be the last state in the derivation $\Phi \leadsto^* \sat$ where $\vec{\phi}$ is enabled.
  Then $s^{(i)} \overset{\textsc{S}}{\leadsto} s^{(i+1)}$ would
  imply that $\vec{\phi}$ is enabled in $s^{(i+1)}$;
  $s^{(i)} \overset{\textsc{A}}{\leadsto} s^{(i+1)}$ would imply that two consecutive clauses in $\vec{\phi}$ are both equivalent to the newly learned clause, contradicting minimality of $\vec{\phi}$;
  $s^{(i)} \overset{\textsc{C}}{\leadsto} s^{(i+1)}$ would imply that the trace of $s^{(i)}$ is not minimal, which also contradicts minimality of $\vec{\phi}$;
  and $s^{(i)} \overset{\textsc{B}}{\leadsto} s^{(i+1)}$ would imply that an element of $\vec{\phi}$ is strictly redundant w.r.t.\ the last set of blocking clauses in $s^{(i)}$, which again contradicts minimality of $\vec{\phi}$.
  Hence, we derived a contradiction.
  \qed
\end{proofsketch}
\makeproof*{thm:soundness}{
  The following easy lemmas and definitions are needed to prove soundness of our approach.
  We first prove that adding learned clauses results in an equivalent CHC problem.

  \begin{lemma}[$\leadsto$ Preserves Equivalence]
    \label{lem:correct}
    If $(\Pi,\vec{\phi},\vec{B}) \leadsto^* (\Pi',\vec{\phi}',\vec{B}')$, then $\Pi \equiv_\AA \Pi'$.
  \end{lemma}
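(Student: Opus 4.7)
The plan is to proceed by straightforward induction on the number of $\leadsto$-steps. The base case, zero steps, is immediate since then $\Pi = \Pi'$. For the inductive step, I would do a case analysis on the rule used in the last step and compose with the inductive hypothesis, which gives $\Pi \equiv_\AA \Pi''$ where $(\Pi'',\vec{\phi}'',\vec{B}'') \leadsto (\Pi',\vec{\phi}',\vec{B}')$ is the final step. Looking at \Cref{def:calc}, the rules \ref{calc:step}, \ref{calc:covered}, and \ref{calc:backtrack} do not modify the CHC-problem component at all, so $\Pi'' = \Pi'$ and equivalence is trivial. The rules \ref{calc:refute} and \ref{calc:accept} produce terminal states $\unsat$/$\sat$ and therefore cannot occur inside a derivation whose endpoint is again a proper state. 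The only interesting case is \ref{calc:accelerate}.

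For \ref{calc:accelerate} we have $\Pi' = \Pi'' \cup \{\phi\}$ where $\phi = \accel(\vec{\phi}^\circlearrowleft)$ and $\vec{\phi}^\circlearrowleft \in \mbp(\Pi'')^*$ is a recursive suffix of the trace. Since $\Pi'' \subseteq \Pi''\cup\{\phi\}$, the inclusion $\Pi' \models_\AA \Pi''$ is immediate, and it suffices to establish $\Pi'' \models_\AA \phi$. Using the remark after \Cref{def:implicant} that $\Pi'' \equiv_\AA \mbp(\Pi'')$, every clause in $\vec{\phi}^\circlearrowleft$ is a consequence of $\Pi''$. Iterating soundness of resolution (which follows directly from \Cref{def:resolution} applied to ground instances), we obtain that $\resolve((\vec{\phi}^\circlearrowleft)^n)$ is a consequence of $\Pi''$ for every $n \geq 1$. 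By \Cref{def:accel} and the lifting convention $\accel(\vec{\phi}^\circlearrowleft) = \accel(\resolve(\vec{\phi}^\circlearrowleft))$, we have
\[
  \ground(\phi) \;=\; \bigcup_{n \in \NN_{\geq 1}} \ground(\resolve(\vec{\phi}^\circlearrowleft)^n),
\]
so every ground instance of $\phi$ is implied by $\Pi''$. Combined with the property that $\AA$-interpretations are determined by their behavior on ground instances, this yields $\Pi'' \models_\AA \phi$, hence $\Pi'' \equiv_\AA \Pi'$, and composing with the inductive hypothesis finishes the step.

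The most delicate point to handle carefully is the subtle theory shift noted after \ref{calc:accelerate}: the acceleration step may implicitly enrich $\AA$ to a theory $\AA'$ in which $\cond(\phi)$ is expressible. One has to be explicit that the equivalence claim in the conclusion is to be read with respect to whichever theory is current after all acceleration steps of the derivation, and that clauses from $\Pi$ retain their semantics under this conservative extension. Apart from that bookkeeping, the argument is routine; no further machinery beyond soundness of (iterated) resolution, \Cref{def:accel}, and the identity $\Pi \equiv_\AA \mbp(\Pi)$ is required.
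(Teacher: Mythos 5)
Your proposal is correct and follows essentially the same route as the paper's proof: reduce to a single $\leadsto$-step, observe that only \ref{calc:accelerate} changes $\Pi$, and then show $\Pi \models_\AA \accel(\vec{\phi}^\circlearrowleft)$ via $\Pi \equiv_\AA \mbp(\Pi)$, soundness of (iterated) resolution, \Cref{def:accel}, and the fact that $\sigma \models_\AA \phi$ iff $\sigma \models_\AA \ground(\phi)$. Your extra remark on the implicit theory extension $\AA'$ is a reasonable bookkeeping point that the paper's proof leaves implicit.
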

  \begin{proof}
    We only consider the case $(\Pi,\vec{\phi},\vec{B}) \leadsto (\Pi',\vec{\phi}',\vec{B}')$.
    Then the claim also follows for $(\Pi,\vec{\phi},\vec{B}) \leadsto^* (\Pi',\vec{\phi}',\vec{B}')$ by a straightforward induction.

    \ref{calc:accelerate} adds $\accel(\vec{\phi}^\circlearrowleft)$ to $\vec{\phi}$ and $\Pi$.
    Hence, we have to prove that $\sigma \models_\AA \Pi$ implies $\sigma \models_\AA \accel(\vec{\phi}^\circlearrowleft)$.
    We have:
    \begin{align*}
      & \sigma \models_\AA \Pi \\
      {} \iff {} & \sigma \models_\AA \mbp(\Pi) & \tag{since $\mbp(\Pi) \equiv_\AA \Pi$} \\
      {} \implies {} & \sigma \models_\AA \resolve(\vec{\phi}^\circlearrowleft) \tag{by soundness of resolution, as $\vec{\phi}^\circlearrowleft \in \mbp(\Pi)^*$} \\
      {} \iff {} & \sigma \models_\AA \ground(\vec{\phi}^\circlearrowleft) \tag{as $\sigma$'s carrier only contains ground terms over $\Sigma_\AA$} \\
      {} \iff {} & \sigma \models_\AA \bigcup_{n \in \NN_{\geq 1}} \ground((\vec{\phi}^\circlearrowleft)^n) \tag{by soundness of resolution} \\
      {} \iff {} & \sigma \models_\AA \ground(\accel(\vec{\phi}^\circlearrowleft)) \tag{by \Cref{def:accel}} \\
      {} \iff {} & \sigma \models_\AA \accel(\vec{\phi}^\circlearrowleft) \tag{as $\sigma$'s carrier only contains ground terms over $\Sigma_\AA$}
    \end{align*}
    All other rules do not modify $\Pi$. \qed
  \end{proof}
The following relation is needed to prove the soundness of our calculus.
  \begin{definition}[$\succ_\Pi$]
    Given a CHC problem $\Pi$, $\vec{\phi}_1,\vec{\phi}_2,\vec{\phi}_3 \in \mbp(\Pi)^*$, and $\phi \in \mbp(\Pi)$, we define
    \[
      \vec{\phi}_1::\vec{\phi}_2::\vec{\phi}_3 \succ_\Pi \vec{\phi}_1::\phi::\vec{\phi}_3 \qquad \text{if} \qquad \vec{\phi}_2 \sqsubset \phi \quad \text{or} \quad \vec{\phi}_2 \sqsubseteq \phi \land |\vec{\phi}_2| > 1.
    \]
  \end{definition}
  Note that the definition of $\succ_{\Pi}$ is closely related to the definition of \ref{calc:covered}.
  In particular, the trace $\vec{\phi} :: \vec{\phi}'$ is not minimal w.r.t.\ $\succ_{\Pi}$ whenever \ref{calc:covered} is applicable.
  \begin{lemma}
    \label{lem:wf}
    $\succ_\Pi$ is well founded.
  \end{lemma}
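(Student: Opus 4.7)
The plan is to exhibit a well-founded measure on $\mbp(\Pi)^*$ that strictly decreases along every $\succ_\Pi$-step.

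First, I would observe that $\mbp(\Pi)$ is finite. The CHC problem $\Pi$ is finite (it starts finite and grows only via \ref{calc:accelerate}, which adds one learned clause at a time), and by \Cref{def:implicant} each $\mbp(\phi)$ is obtained by conjoining sub-collections of the finitely many literals of $\cond(\phi)$, so $\mbp(\Pi) = \bigcup_{\phi \in \Pi} \mbp(\phi)$ is finite. Since $\sqsubset$ is a strict partial order, its restriction to the finite set $\mbp(\Pi)$ is well-founded. In particular, for each $\pi \in \mbp(\Pi)$ the number $h(\pi) \in \NN$ defined as the length of a longest $\sqsubset$-ascending chain starting at $\pi$ is finite and satisfies $\pi' \sqsubset \pi \implies h(\pi') > h(\pi)$.

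Next I would analyze the two disjuncts in the definition of $\vec{\phi}_1::\vec{\phi}_2::\vec{\phi}_3 \succ_\Pi \vec{\phi}_1::\phi::\vec{\phi}_3$. If $|\vec{\phi}_2| > 1$, then $|\vec{\phi}_1::\phi::\vec{\phi}_3| < |\vec{\phi}_1::\vec{\phi}_2::\vec{\phi}_3|$, so the length strictly decreases. Otherwise we must be in the case $\vec{\phi}_2 \sqsubset \phi$ with $\vec{\phi}_2 = [\phi']$, which just means $\phi' \sqsubset \phi$; then both sequences have the same length and $\vec{\phi}'$ is obtained from $\vec{\phi}$ by replacing $\phi'$ at a single position with a strictly $\sqsubset$-greater clause $\phi$.

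Finally, I would combine these observations into the measure $\mu(\vec{\phi}) \Def (|\vec{\phi}|, M(\vec{\phi}))$, where $M(\vec{\phi})$ is the finite multiset $\{h(\phi_i) \mid 1 \le i \le |\vec{\phi}|\}$ over $\NN$, ordered lexicographically using the standard order on $\NN$ and the Dershowitz--Manna multiset extension of $>$ on $\NN$. Both components are well-founded, so their lexicographic product is well-founded. In the length-decreasing case the first component drops; in the length-preserving case the first component is unchanged and the multiset strictly decreases, since removing $h(\phi')$ and inserting $h(\phi)$ with $h(\phi) < h(\phi')$ is a Dershowitz--Manna decrease. Hence $\mu$ is strictly decreasing along $\succ_\Pi$, proving well-foundedness. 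The main technical step is establishing finiteness of $\mbp(\Pi)$, since this is what turns $\sqsubset$ into a well-founded relation and yields the chain-height function $h$; once this is in hand, the argument is a standard lex/multiset combination.
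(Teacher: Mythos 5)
Your proof is correct. It rests on exactly the two facts the paper uses — the length of the trace never increases along $\succ_\Pi$, and $\mbp(\Pi)$ is finite so that $\sqsubset$ admits no infinite ascending chains there — but you package them differently: the paper argues by contradiction, assuming an infinite $\succ_\Pi$-chain, letting the lengths stabilize, and then applying a pigeonhole argument to find a single position whose entries form an infinite strictly $\sqsubset$-increasing sequence inside the finite set $\mbp(\Pi)$; you instead construct an explicit well-founded measure, the lexicographic combination of the trace length with the Dershowitz--Manna multiset of $\sqsubset$-chain heights $h(\phi_i)$. Your case analysis of the definition of $\succ_\Pi$ is accurate (if $|\vec{\phi}_2|>1$ the length drops; if $|\vec{\phi}_2|=1$ only the first disjunct can apply, giving a single-element replacement by a strictly $\sqsubset$-larger clause, hence a multiset decrease), and the height function $h$ is well defined precisely because $\sqsubset$ is a strict partial order on the finite set $\mbp(\Pi)$ — the same finiteness the paper invokes, which you additionally justify. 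The measure-based formulation avoids the pigeonhole step and yields a slightly more constructive argument; the paper's infinite-descent argument avoids introducing the multiset order. Both are complete proofs of the lemma.
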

  \begin{proof}
    Assume that there is an infinite chain
    \[
      \vec{\phi}^{(1)} \succ_\Pi \vec{\phi}^{(2)} \succ_\Pi \ldots
    \]
    Since $\vec{\phi}^{(i)} \succ_\Pi \vec{\phi}^{(i+1)}$ implies $|\vec{\phi}^{(i)}|
    \geq |\vec{\phi}^{(i+1)}|$ and $>$ is well founded on $\NN$, there is an $i_0$ such that $|\vec{\phi}^{(i)}| = |\vec{\phi}^{(i+1)}|$ for all $i \geq i_0$.
    Thus, for each $i \geq i_0$, there is a unique $j_i$ such that $\phi^{(i)}_{j_i} \sqsubset \phi^{(i+1)}_{j_i}$ by the definition of $\succ_\Pi$.
    Hence, there is at least one $j$ such that $j = j_{i}$ for infinitely many values $i_1,i_2,\ldots$ of $i$.
    Therefore, we have
    \[
      \phi^{(i_1)}_{j} \sqsubset \phi^{(i_2)}_{j} \sqsubset \ldots.
    \]
    Since $\sqsubset$ is transitive and $\phi^{(i_k)}_j \sqsubset \phi^{(i_{k'})}_j$ implies $\phi^{(i_k)}_j \neq \phi^{(i_{k'})}_j$, $\{\phi^{(i_k)}_{j} \mid k \in \NN_{>0}\}$ is infinite.
    However, $\{\phi^{(i_k)}_{j} \mid k \in \NN_{>0}\}$ is a subset of $\mbp(\Pi)$, which is finite, so we derived a contradiction. \qed
  \end{proof}
  Now we are ready to prove soundness of our approach.
  \soundness*
  \begin{proof}
    For $\unsat$, assume $\Phi \leadsto^* (\Pi,\vec{\phi},\vec{B}) \leadsto \unsat$.
    Then $\vec{\phi} \in \mbp(\Pi)^*$ is a refutation by definition of \ref{calc:refute}.
    Thus, by soundness of resolution, $\mbp(\Pi)$ and hence also $\Pi$ is unsatisfiable.
    By \Cref{lem:correct}, we have $\Phi \equiv_\AA \Pi$, and thus $\Phi$ is unsatisfiable, too.

    For $\sat$, we use proof by contradiction.
    Assume that $\Phi$ is unsatisfiable and
    \begin{equation}
      \label{eq:assumption}
      \Phi \leadsto (\Pi^{(1)},\vec{\phi}^{(1)},\vec{B}^{(1)}) \leadsto \ldots \leadsto (\Pi^{(m)},\vec{\phi}^{(m)},\vec{B}^{(m)}) \leadsto \sat.
    \end{equation}
    By \Cref{lem:correct}, we have $\Phi \equiv_\AA \Pi^{(m)}$, so $\Pi^{(m)}$ is also unsatisfiable.
    Thus, $\mbp(\Pi^{(m)})$ is unsatisfiable as well.
    By completeness of binary input resolution for Horn clauses, $\mbp(\Pi^{(m)})^*$ contains at least one refutation.
    Thus, due to \Cref{lem:wf}, $\mbp(\Pi^{(m)})^*$ contains at least one $\succ_{\Pi^{(m)}}$-minimal refutation $\vec{\phi}$.

    Let $k_i = |\vec{\phi}^{(i)}|$ and $s_i = (\Pi^{(i)},\vec{\phi}^{(i)},\vec{B}^{(i)})$ for all $1 \leq i \leq m$.
    We say that $\vec{\phi}$ is \emph{disabled} in $s_i$ if there is a $0 \leq \ell \leq k_i$ and
    a $\phi \in B^{(i)}_\ell$ such that $\phi^{(i)}_j \equiv_\AA \phi_j$ for all $1 \leq j \leq \ell$ and $\phi \equiv_\AA \phi_{\ell+1}$.
    Then $\vec{\phi}$ is disabled in $s_m$.
    To see this, note that we have
    $\phi_1 \sqsubseteq B^{(m)}_0$ by definition of \ref{calc:accept}, and $\phi_1 \not\sqsubset B^{(m)}_0$ as $\vec{\phi}$ is $\succ_{\Pi^{(m)}}$-minimal.
    Hence there is a $\phi \in B^{(m)}_0$ such that $\phi_1 \equiv_\AA \phi$, so
    $\vec{\phi}$ is disabled in $s_m$.
    Let $1 \leq i < m$ be the largest index such that
    $\vec{\phi}$ is enabled in $s_i$.
    Note that such an index exists, as $\vec{\phi}$ is enabled in $s_1$, since $\vec{B}^{(1)} = [\emptyset]$ by definition of \ref{calc:init}.

    If $s_i \overset{\textsc{S}}{\leadsto} s_{i+1}$, then $\vec{\phi}$ is disabled in $s_i$ iff it is disabled in $s_{i+1}$, as
    \[
      B^{(i+1)}_{k_{i+1}} = \emptyset \quad \text{and} \quad  \forall j \in \{1,\ldots,k_{i+1}-1\}.\ \phi^{(i+1)}_j = \phi^{(i)}_j \land B^{(i+1)}_j = B^{(i)}_j
    \]
    by definition of \ref{calc:step}.
    Hence, $s_i \not\overset{\textsc{S}}{\leadsto} s_{i+1}$.

    If $s_i \overset{\textsc{A}}{\leadsto} s_{i+1}$, then
    \[
      B^{(i+1)}_{k_{i+1}} = \{\phi_{k_{i+1}}^{(i+1)}\} \quad \text{and} \quad \forall j \in \{1,\ldots,k_{i+1}-1\}.\ \phi^{(i+1)}_j = \phi^{(i)}_j \land B^{(i+1)}_j = B^{(i)}_j
    \]
    by definition of \ref{calc:accelerate}.
    Hence, since $\vec{\phi}$ is enabled in $s_i$, but disabled in $s_{i+1}$, we get
    $\phi_j^{(i+1)} \equiv_\AA \phi_j$ for all $1 \leq j \leq k_{i+1}$ and $\phi_{k_{i+1}}^{(i+1)} \equiv_\AA \phi_{k_{i+1}+1}$.
    So in particular, we have $\phi_{k_{i+1}} \equiv_\AA \phi_{k_{i+1}+1} \equiv_\AA \phi_{k_{i+1}}^{(i+1)}$.
    By definition of \ref{calc:accelerate}, there is a $\vec{\phi}^\circlearrowleft \in \mbp(\Pi^{(i)})^*$ such that $\phi_{k_{i+1}}^{(i+1)} = \accel(\vec{\phi}^\circlearrowleft)$.
    Hence, we have $[\phi_{k_{i+1}}^{(i+1)},\phi_{k_{i+1}}^{(i+1)}] \sqsubseteq \phi_{k_{i+1}}^{(i+1)}$ and thus $[\phi_{k_{i+1}}, \phi_{k_{i+1}+1}] \sqsubseteq \phi_{k_{i+1}}^{(i+1)}$ by \Cref{def:accel}, contradicting $\succ_{\Pi^{(m)}}$-minimality of $\vec{\phi}$.

    If $s_i \overset{\textsc{C}}{\leadsto} s_{i+1}$ or $s_i \overset{\textsc{B}}{\leadsto} s_{i+1}$, then we have
    \begin{align*}
      B^{(i+1)}_{k_{i+1}} & {} = B^{(i)}_{k_{i+1}} \cup \{\phi_{k_{i}}^{(i)}\},\\
      \forall j \in \{1,\ldots,k_{i+1}\}.\ \phi^{(i+1)}_j & {} = \phi^{(i)}_j, & \text{and} \\
      \forall j \in \{1,\ldots,k_{i+1}-1\}.\ B^{(i+1)}_j & {} = B^{(i)}_j
    \end{align*}
    by definition of $\bt$.
    Since $\vec{\phi}$ is enabled in $s_i$, but disabled in $s_{i+1}$, we get
    $\phi_j^{(i+1)} = \phi_j^{(i)} \equiv_\AA \phi_j$ for all $1 \leq j \leq k_{i+1} = k_i-1$ and $\phi_{k_{i}}^{(i)} \equiv_\AA \phi_{k_{i}}$.
    So in particular, we have $\phi_j^{(i)} \equiv_\AA \phi_j$ for all $1 \leq j \leq k_i$.
    If $s_i \overset{\textsc{C}}{\leadsto} s_{i+1}$, then the definition of
    \ref{calc:covered} implies that $\vec{\phi}^{(i)}$ is not $\succ_{\Pi^{(i)}}$-minimal,
    contradicting $\succ_{\Pi^{(i)}}$-minimality
and therefore also $\succ_{\Pi^{(m)}}$-minimality
    of $\vec{\phi}$.

    If $s_i \overset{\textsc{B}}{\leadsto} s_{i+1}$, then \emph{all} rules and queries are inactive in the state $s_i$ by definition of \ref{calc:backtrack}.
    Thus, we have $\phi_{k_i+1} \sqsubseteq B^{(i)}_{k_i}$.
    Moreover, as $\vec{\phi}$ is enabled in $s_i$, we have $\phi_{k_i+1} \not\equiv_\AA \phi$ for all $\phi \in B^{(i)}_{k_i}$.
    Thus, we get $\phi_{k_i+1} \sqsubset B^{(i)}_{k_i}$, contradicting $\succ_{\Pi^{(m)}}$-minimality of $\vec{\phi}$. \qed
  \end{proof}
}

Another important property of our calculus is that it cannot get ``stuck'' in states other than $\sat$ or $\unsat$.

\begin{restatable}[Normal Forms]{theorem}{normalforms}
  \label{thm:normal forms}
  If $\Phi \leadsto^+ s$ where $s$ is in normal form w.r.t.\ $\leadsto$, then $s \in \{\sat, \unsat\}$.
\end{restatable}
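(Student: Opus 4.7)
The plan is to prove the contrapositive: every reachable state of the form $s = (\Pi,\vec{\phi},\vec{B})$ admits at least one outgoing $\leadsto$-transition. First, I would establish the auxiliary invariant that at any such reachable state, $\cond(\vec{\phi})$ is satisfiable. This would be proved by induction on the length of the derivation $\Phi \leadsto^+ s$: the base state $(\Phi,[],[\emptyset])$ produced by \ref{calc:init} has empty (hence trivially satisfiable) condition; \ref{calc:step} preserves the invariant by definition, since activity of the appended $\phi$ requires $\cond([\phi_i]_{i=1}^k::\phi)$ to be satisfiable; \ref{calc:covered} and \ref{calc:backtrack} only shrink the trace, and the condition of any prefix is implied by the condition of the full sequence (by the recursive construction of $\resolve$), so satisfiability is preserved; and for \ref{calc:accelerate}, \Cref{def:accel} gives $\ground(\vec{\phi}^\circlearrowleft) \subseteq \ground(\accel(\vec{\phi}^\circlearrowleft))$, so the ground instance witnessing satisfiability of the condition with $\vec{\phi}^\circlearrowleft$ is also a ground instance with the learned clause in its place.

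Next, I case-split on whether $\vec{\phi}$ is empty. If $\vec{\phi} = []$, then either some $\phi \in \mbp(\Pi)$ is active, in which case \ref{calc:step} applies, or no clause of $\mbp(\Pi)$ is active; in the latter case, in particular all facts and conditional empty clauses are inactive, so \ref{calc:accept} fires and $s \leadsto \sat$.

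If $\vec{\phi}$ is non-empty with last element $\phi_k$, I consider whether $\phi_k$ is a query. If it is, the invariant gives $\cond(\vec{\phi})$ satisfiable, so $\resolve(\vec{\phi})$ is a conditional empty clause with satisfiable condition, i.e., a refutation, and \ref{calc:refute} yields $s \leadsto \unsat$. Otherwise, $\phi_k$ is not a query: if some rule or query from $\mbp(\Pi)$ is active then \ref{calc:step} applies; if not, I would observe that facts and conditional empty clauses have empty bodies, so their resolvent against any non-empty trace collapses to $(\bot \implies \bot)$ via the fall-through case of \Cref{def:resolution}, which has unsatisfiable condition and is hence never active. Consequently every rule and query from $\mbp(\Pi)$ is inactive, and since $\phi_k$ is not a query, \ref{calc:backtrack} applies.

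The main obstacle I anticipate is the invariant under \ref{calc:accelerate}, where the trace is restructured and the theory may implicitly be enriched to some richer $\AA'$: this step requires carefully combining \Cref{def:accel} on the accelerated suffix with the unchanged prefix to transport a model of the pre-\textsc{Accelerate} trace's condition to a model of the post-\textsc{Accelerate} one. The remaining cases are essentially a direct reading of the side conditions of the rules, so once the invariant is in place the case analysis closes without further work.
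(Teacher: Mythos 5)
Your overall strategy is the same as the paper's: show that every reachable state of the form $(\Pi,\vec{\phi},\vec{B})$ has an outgoing $\leadsto$-step, by a case analysis on the trace. The paper's proof is exactly this case analysis, but much leaner: for $\vec{\phi}=[]$ it observes that \textsc{Step} or \textsc{Prove} is applicable, and for non-empty $\vec{\phi}$ that \textsc{Step} or \textsc{Backtrack} is applicable; it never invokes \textsc{Refute} and therefore never needs your satisfiability invariant. Your empty-trace case and your ``last element is not a query'' case coincide with that argument and are fine.

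The problem lies in your remaining subcase. From ``$\phi_k$ is a query and $\cond(\vec{\phi})$ is satisfiable'' you conclude that $\resolve(\vec{\phi})$ is a conditional empty clause, hence a refutation, so \textsc{Refute} applies. This only holds if the trace \emph{begins} with a fact or a conditional empty clause; if it begins with a rule, the resolvent still contains an uninterpreted body atom, i.e.\ it is a query rather than a conditional empty clause, and \textsc{Refute} is not applicable. Plain $\leadsto$ does not enforce fact-first traces: the paper explicitly states that this restriction is ``not needed for the correctness of our calculus and thus not enforced'', and only reasonable strategies (item (4) of the definition, i.e.\ $\rs$) impose it, whereas the theorem is about $\leadsto$, so your invariant cannot be strengthened to ``the first trace element is a fact or conditional empty clause''. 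In that configuration, as you yourself note, no clause is active (the resolvent's head is $\bot$, so every attempted resolution falls through to $\bot \implies \bot$), and \textsc{Backtrack} is blocked because the last clause is a query; \textsc{Covered} and \textsc{Accelerate} need not be applicable either, so your case analysis does not close there. For comparison, the paper's own two-line proof dispatches the entire non-empty case with ``\textsc{Step} or \textsc{Backtrack}'' and thus silently skates over this same configuration, but as a self-contained proof of the stated theorem your argument has a concrete unjustified step at exactly this point, and you would need an additional assumption (fact-first traces, as under $\rs$) to repair it.
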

\makeproof*{thm:normal forms}{
  \normalforms*
  \begin{proof}
    Let $s = (\Pi,\vec{\phi},\vec{B})$.
    If $\vec{\phi}$ is empty, then
    \ref{calc:step} or \ref{calc:accept} is applicable.
    If $\vec{\phi}$ is non-empty, then either \ref{calc:step} or \ref{calc:backtrack} is applicable.
    Hence, $s$ is not in normal form. \qed
  \end{proof}
}

Clearly, our calculus admits many unintended
derivations, e.g., by applying \ref{calc:step} over and over again with recursive CHCs instead of accelerating them.
To prevent such derivations, a \emph{reasonable strategy} is required.

\begin{definition}[Reasonable Strategy]
  \label{def:reasonable}
  We call a strategy for $\leadsto$ \emph{reasonable} if the following holds:
  \begin{enumerate}[label=(\arabic*)]
  \item \label{it:reasonable1} If $(\Pi,\vec{\phi},\vec{B}) \leadsto^+ (\Pi,\vec{\phi}::\vec{\phi}',\vec{B}')$ for some $\vec{\phi}'$ as in the definition of \ref{calc:covered}, then \ref{calc:covered} is used.
  \item \label{it:reasonable2} \ref{calc:accelerate} is used with higher preference than \ref{calc:step}.
  \item \label{it:reasonable3} \ref{calc:accelerate} is only applied to the shortest
    recursive suffix $\vec{\phi}^\circlearrowleft$ such that
    $\accel(\vec{\phi}^\circlearrowleft)$ is not redundant w.r.t.\ $\mbp(\Pi)$.
  \item \label{it:reasonable4} If $\vec{\phi} = []$, then \ref{calc:step} is only applied with facts or conditional empty clauses.
  \end{enumerate}
  We write $\rs$ for the relation that results from $\leadsto$ by imposing a reasonable strategy.
\end{definition}
\Cref{def:reasonable} \ref{it:reasonable1} ensures that we backtrack if we added a redundant sequence $\vec{\phi}'$ of CHCs to the trace.
However, for refutational completeness (\Cref{thm:refutational-complete}), it is important
that
the application of \ref{calc:covered} is only enforced if
no new clauses have been learned while constructing $\vec{\phi}'$ (i.e., $\Pi$ remains
unchanged in the derivation $(\Pi,\vec{\phi},\vec{B}) \leadsto^+ (\Pi,\vec{\phi}::\vec{\phi}',\vec{B}')$).
The reason is that after applying \ref{calc:accelerate}, the trace might have the form $\vec{\phi} = \vec{\phi}_1::\vec{\phi}_2::\accel(\phi^\circlearrowleft)$ where $\vec{\phi}_2::\accel(\phi^\circlearrowleft) \sqsubset \accel(\phi^\circlearrowleft)$ even if $\vec{\phi}_2::\phi^\circlearrowleft$ was non-redundant before learning $\accel(\phi^\circlearrowleft)$.
If we enforced backtracking via \ref{calc:covered} in such situations (which would yield the
trace $\vec{\phi}_1::\vec{\phi}_2$),
 then to maintain refutational completeness, we would have to ensure that we eventually
 reach a state with the trace $\vec{\phi}_1::\accel(\phi^\circlearrowleft) \sqsubseteq \vec{\phi}$.
 However, this cannot be guaranteed,
 since our calculus does not terminate in general (see \Cref{thm:non-termination}).

\Cref{def:reasonable} \ref{it:reasonable2} ensures that we do not ``unroll'' recursive derivations more than once via \ref{calc:step}, but learn new clauses that cover arbitrarily many unrollings via \ref{calc:accelerate} instead.

\Cref{def:reasonable} \ref{it:reasonable3} has two purposes:
First, it prevents us from learning redundant clauses, as we must not apply \ref{calc:accelerate} if $\accel(\vec{\phi}^\circlearrowleft)$ is redundant.
Second, it ensures that we accelerate ``short'' recursive suffixes first.
The reason is that if $\vec{\phi} = \vec{\phi}_1 :: \vec{\phi}_2 :: \vec{\phi}_3$ where $\vec{\phi}_2 :: \vec{\phi}_3$ and $\vec{\phi}_3$ are recursive, then
\begin{alignat*}{2}
  && \textstyle \ground(\accel(\vec{\phi}_2 :: \vec{\phi}_3))  & \textstyle {} \overset{\rm \Cref{def:accel}}{=} \bigcup_{n \in \NN_{\geq 1}} \ground((\vec{\phi}_2 :: \vec{\phi}_3)^n) \\
  & \textstyle {} \subseteq {} & \textstyle \bigcup_{n \in \NN_{\geq 1}} \bigcup_{m \in \NN_{\geq 1}}
  \ground((\vec{\phi}_2::\vec{\phi}_3^m)^n) & \textstyle {} \overset{\rm \Cref{def:accel}}{=} \ground(\accel(\vec{\phi}_2 :: \accel(\vec{\phi}_3))),
\end{alignat*}
but the other direction (``$\supseteq$'') does not hold in general.
So in this way, we learn more general clauses.

\Cref{def:reasonable}  \ref{it:reasonable4} ensures that the first element of $\vec{\phi}$ is always a fact or a conditional empty clause.
For unsatisfiable CHC problems, the reason is that \ref{calc:refute} will never apply if $\vec{\phi}$ starts with a rule or a query.
For satisfiable CHC problems, \ref{calc:accept} only applies if all facts and conditional empty clauses are blocked.
But in order to block them eventually, we have to add them to the trace via \ref{calc:step}, which is only possible if $\vec{\phi}$ is empty.

Despite the restrictions in \Cref{def:reasonable}, our calculus is still
refutationally complete.

\begin{restatable}[Refutational Completeness]{theorem}{completeness}
  \label{thm:refutational-complete}
  If $\Phi$ is unsatisfiable, then
  \[
    \Phi \rs^* \unsat.
  \]
\end{restatable}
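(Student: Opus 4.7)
The plan is to prove refutational completeness by explicitly constructing a finite reasonable derivation reaching $\unsat$, using minimal refutations as a ``guide''. Since $\Phi \equiv_\AA \mbp(\Phi)$ and $\Phi$ is unsatisfiable, $\mbp(\Phi)$ is unsatisfiable too. By completeness of binary input resolution for Horn clauses, $\mbp(\Phi)^*$ contains at least one refutation, and by well-foundedness of $\succ_{\Pi}$ (the relation introduced in the soundness proof, applied with $\Pi = \Phi$) there exists a $\succ_{\Phi}$-minimal refutation $\vec{\phi}^* = [\phi_1,\ldots,\phi_n] \in \mbp(\Phi)^*$. By linearity of CHCs, $\phi_1$ must be a fact (or conditional empty clause) and $\phi_n$ a query, which is compatible with reasonable-strategy condition~\ref{it:reasonable4}.

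Starting from \ref{calc:init}, I would build a reasonable derivation that ``follows'' $\vec{\phi}^*$ via successive applications of \ref{calc:step}. Each prospective $\phi_i$ is active at the time it is added: the condition $\cond([\phi_1,\ldots,\phi_i])$ is satisfiable because $\vec{\phi}^*$ is a refutation, and no $\phi_i$ is blocked along a straight-line construction since $\vec{B}$ is initially empty. Whenever the current trace admits a recursive suffix that can be accelerated to a non-redundant clause, conditions~\ref{it:reasonable2} and~\ref{it:reasonable3} force \ref{calc:accelerate}, replacing some suffix $\vec{\phi}^\circlearrowleft$ by a learned $\phi^+$ with $\ground(\phi^+) \supseteq \ground(\vec{\phi}^\circlearrowleft)$. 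I would then switch the ``target refutation'' to the one obtained from $\vec{\phi}^*$ by collapsing the corresponding portion using $\phi^+$: this new refutation lies in $\mbp(\Pi)^*$ for the enlarged $\Pi$, and by the same well-foundedness argument still admits a $\succ_{\Pi}$-minimal choice that extends the already-constructed prefix.

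The main difficulty is handling the interplay between \ref{calc:covered} (which condition~\ref{it:reasonable1} may force whenever a redundant suffix appears) and subsequent steps, without looping or losing progress. I would address it by induction on a lexicographic well-founded measure combining the $\succ_{\Pi}$-rank of the current target refutation and the length of its yet-untraversed suffix. The key observation, mirroring the soundness argument, is that a forced \ref{calc:covered} or \ref{calc:backtrack} can only block a clause that is strictly $\sqsubset$-below the corresponding clause of any $\succ_{\Pi}$-minimal target refutation extending the current prefix; otherwise, $\succ_{\Pi}$-minimality of the target would be violated. Hence such backtracks can happen only finitely often along the constructed derivation. Once the trace coincides with a full target refutation, \ref{calc:refute} applies and yields $\unsat$; soundness together with \Cref{thm:normal forms} rules out any alternative terminal state.
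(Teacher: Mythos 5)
Your overall plan---construct a $\rs$-derivation guided by a refutation, collapse accelerated portions of the target, and terminate via a lexicographic measure---matches the spirit of the paper's proof, but the two places where the paper does the real work are exactly where your argument breaks down. First, the claim that no $\phi_i$ is blocked ``since $\vec{B}$ is initially empty'' is false as soon as \ref{calc:accelerate} is applied: by definition it sets the last blocking set to $\{\accel(\vec{\phi}^\circlearrowleft)\}$, so the learned clause and every clause weakly redundant w.r.t.\ it (e.g.\ the original recursive clause you just followed) are blocked at that position. The paper devotes a dedicated argument to this, showing that the blocked (just-learned) clause could not cover any further steps of the target anyway (using $\resolve(\pi,\pi)\sqsubseteq\pi$ for accelerated $\pi$), and it moreover chooses each \ref{calc:step}-clause maximal w.r.t.\ $\sqsubset$ and covering a maximal chunk of the target precisely so that \ref{calc:covered} is never forced. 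You never verify that the clause you want to add next is still active after an acceleration.

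Second, your treatment of forced \ref{calc:covered}/\ref{calc:backtrack} is internally inconsistent and unsubstantiated. In your construction the last clause of the trace is always a clause of the current target, so a forced backtrack would pop and block exactly a target clause---not one ``strictly $\sqsubset$-below the corresponding clause''---thereby disabling the very target you are following. Either you show that such backtracks cannot occur at all (which is what minimality of the target, taken \emph{among extensions of the current trace} and re-established after every acceleration, would give you: a redundant suffix of the \ref{calc:step}-added clauses, or a strictly redundant single clause, could be collapsed or replaced without touching the prefix, contradicting that minimality---essentially the property the paper secures with its maximal-chunk, $\sqsubset$-maximal choice instead of a minimal target), or you must handle recovery after a backtrack, and then your measure does not obviously decrease: the untraversed suffix grows when a clause is popped, and the ``$\succ_\Pi$-rank'' of the new target is not comparable to the old one, in particular because $\Pi$ changes with every acceleration. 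Note also that the paper's completeness proof does not use $\succ_\Pi$-minimal refutations at all (minimality is only needed for soundness), so the minimality-based bookkeeping you rely on is exactly the part that would have to be worked out in detail; as written, the blocking argument and the claim that backtracks are harmless and finitely many are genuine gaps.
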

\begin{proofsketch}
  Given a refutation $\vec{\phi}$, one can inductively define a derivation $\Phi \rs^* \unsat$ where each step applies \ref{calc:accelerate} or \ref{calc:step}.
  For the latter, it is crucial to choose the next clause in such a way that it
  corresponds to as many steps from $\vec{\phi}$ as possible, and that it is maximal
  w.r.t.\ $\sqsubset$, to avoid the necessity to backtrack via \ref{calc:covered}. \qed
\end{proofsketch}
\makeproof*{thm:refutational-complete}{
  \completeness*
  \begin{proof}
    Due to completeness of binary input resolution for Horn Clauses, there is a refutation $\vec{\phi} \in \mbp(\Phi)^*$.
    We inductively define a $\rs$-derivation of the form
    \begin{align*}
      \Phi & {} \rs (\Phi,[],[\emptyset]) =  (\Pi_{0},\vec{\phi}_{0},\vec{B}_{0}) = s_{0} \\
           & {} \rs (\Phi,[\phi_1],[\emptyset,\emptyset]) = (\Pi_{1},\vec{\phi}_{1},\vec{B}_{1}) = s_{1} \\
           & {} \rs (\Pi_{2},\vec{\phi}_{2},\vec{B}_{2}) = s_{2} \\
           & {} \rs \ldots \\
           & {} \rs (\Pi_{k},\vec{\phi}_{k},\vec{B}_{k}) = s_{k} \\
           & {} \rs \unsat
    \end{align*}
    such that $J_i = \{ j \mid 1 \leq j \leq |\vec{\phi}|, [\phi_m]_{m=1}^{j}
    \sqsubseteq \vec{\phi}_{i}\}$ is non-empty for each $1 \leq i \leq k$.
    Here, $\phi_m$ are the elements of $\vec{\phi}$.
    Then the claim follows immediately. In the following, $j_i$ denotes $\max(J_i)$.
    Given the state $s_{i} = (\Pi_{i},\vec{\phi}_{i},\vec{B}_{i})$, we distinguish the following cases:
    \begin{enumerate}[label=(\arabic*)]
    \item \label{it:reasonable-proof1}
      If $\vec{\phi}_{i}$ has a minimal recursive suffix $\vec{\phi}^\circlearrowleft$
      such that $\accel(\vec{\phi}^\circlearrowleft)$ is not redundant, then we apply
      \ref{calc:accelerate}, resulting in the state $s_{i+1}$.
    \item \label{it:reasonable-proof2}
      Otherwise, let $c \in \NN_{\geq 1}$ be the maximal number such that $[\phi_m]_{m=j_i+1}^{j_i+c} \sqsubseteq \mbp(\Pi_{i})$.
      Then we apply \ref{calc:step} with some
      \[
        \pi \in \max_\sqsubset\left\{\pi \in \mbp(\Pi_{i}) \relmiddle{|} [\phi_m]_{m=j_i+1}^{j_i+c} \sqsubseteq \pi\right\}
      \]
      (which may not be unique, as $\sqsubset$ is a partial order), resulting in the state $s_{i+1}$ with  $j_{i+1} = j_i + c$.
    \end{enumerate}
    Note that a clause $\pi$ as in \ref{it:reasonable-proof2} always exists.
    The reason is that $[\phi_m]_{m=j_i+1}^{j_i+1} \sqsubseteq \phi_{j_i+1}$ and $\phi_{j_i+1} \in \mbp(\Pi_{i})$. Hence, there always exists a $c  \in \NN_{\geq 1}$ with $[\phi_m]_{m=j_i+1}^{j_i+c} \sqsubseteq \mbp(\Pi_{i})$.

    To show $\Phi \leadsto s_1 \leadsto \ldots \leadsto s_k$, it remains to show that the clause $\pi$  in \ref{it:reasonable-proof2} is not blocked. To
    this end, we prove
    \[
      \text{If $\pi$ is blocked in state $s_i$, then $[\phi_m]_{m=j_i+1}^{j_i+c} \not\sqsubseteq \pi$ for all $c \geq 1$.}
    \]
    This implies that  the clause $\pi$ in \ref{it:reasonable-proof2} is not blocked and thus \ref{calc:step} is applicable.
    Assume that $\pi$ is blocked in state $s_{i}$.
    Then $\pi$ is the last element of the trace by definition of \ref{calc:accelerate}, and we have $\resolve(\pi,\pi) \sqsubseteq \pi$, since this holds for every accelerated clause $\pi$.
Moreover, $j_i = j_{i-1} + c'$, where $c' \geq 0$ is the maximal number such that $[\phi_m]_{m=1}^{j_{i-1}+c'} \sqsubseteq \vec{\phi}_i$.
    Hence, there is a suffix $\vec{\pi}$ of $[\phi_m]_{m=1}^{j_{i}}$ such that $|\vec{\pi}| \geq c'$, $\vec{\pi} \sqsubseteq \pi$, and
    \begin{equation}
      \label{eq:non-redundant}
      \vec{\pi}::[\phi_m]_{m=j_i+1}^{j_i+c''} \not \sqsubseteq \pi \text{ for all } c'' > 0
    \end{equation}
    by maximality of $c'$.
    Assume that there is a $c \in \NN$ with $[\phi_m]_{m=j_i+1}^{j_i+c} \sqsubseteq \pi$.
    Then we have
    \[
      \vec{\pi}::[\phi_m]_{m=j_i+1}^{j_i+c} \sqsubseteq \resolve(\pi,\pi) \sqsubseteq \pi
    \]
    and hence \eqref{eq:non-redundant} implies $c = 0$.

    Hence, we have $\Phi \leadsto s_0 \leadsto \ldots \leadsto s_k$.
    Next, we prove that this sequence is reasonable.
    We handle each item of \Cref{def:reasonable} individually.

    \begin{description}
    \item[\Cref{def:reasonable} \ref{it:reasonable1}:] This case never applies.
      To show this, we prove that $i' < i$, $\Pi_{i'} = \Pi_i$, and $\vec{\phi}_i = \vec{\phi}_{i'}::\vec{\pi}'$ implies $\vec{\pi}' \not \sqsubset \mbp(\Pi_i)$ if $|\vec{\pi}'| = 1$, and $\vec{\pi}' \not \sqsubseteq \mbp(\Pi_i)$ if $|\vec{\pi}'| > 1$.
      Assume $\Pi_{i'} = \Pi_i$ and $\vec{\phi}_i = \vec{\phi}_{i'}::\vec{\pi}'$ for some $i' < i$.
      Then $\Pi_{i'} = \Pi_i$ implies that $s_{i'+1},...,s_i$ were constructed via \ref{it:reasonable-proof2}.
      Thus,
if $\pi_1'$ is the first CHC of the sequence $\vec{\pi}'$, then we
have $[\phi_m]_{m=j_{i'}+1}^{j_{i'}+c} \sqsubseteq \pi_1'$ for some $c \in \NN_{\geq 1}$, which has to be maximal according to \ref{it:reasonable-proof2}.
If $|\vec{\pi}'| = 1$, then $\vec{\pi}' = \pi_1' \sqsubset \mbp(\Pi_i)$
would contradict the requirement in \ref{it:reasonable-proof2} that
$\pi_1'$ is maximal w.r.t.\ $\sqsubset$.
      Assume $|\vec{\pi}'| > 1$.
      As $s_{i'+1}$ was also constructed via \ref{it:reasonable-proof2}, $|\vec{\pi}'| > 1$ implies $j_i > j_{i'+1}$.
      Hence, we get $[\phi_m]_{m=j_{i'}+1}^{j_{i}} \not\sqsubseteq \mbp(\Pi_i)$, as $j_{i} >
      j_{i'+1} = j_{i'} + c$, but $c$ is the maximal natural number such that
      $[\phi_m]_{m=j_{i'}+1}^{j_{i'}+c} \sqsubseteq \Pi_i$.
Since $[\phi_m]_{m=j_{i'}+1}^{j_{i}} \sqsubseteq \vec{\pi}'$, this implies $\vec{\pi}'
\not\sqsubseteq \mbp(\Pi_i)$.
    \item[\Cref{def:reasonable} \ref{it:reasonable2}:] Trivial, as
      \ref{it:reasonable-proof1} is applied with higher preference than
      \ref{it:reasonable-proof2}.
    \item[\Cref{def:reasonable} \ref{it:reasonable3}:] Trivial due to \ref{it:reasonable-proof1}.
    \item[\Cref{def:reasonable} \ref{it:reasonable4}:] The first \ref{calc:step} in the
      constructed $\leadsto$-sequence uses a CHC $\phi$ such that we have
        $\vec{\phi}' \sqsubseteq \phi$ for a non-empty prefix $\vec{\phi}'$ of $\vec{\phi}$.
      Since $\vec{\phi}$ is a refutation, its first element is a fact or a conditional empty clause.
      Hence, $\resolve(\vec{\phi}')$ is a fact or a conditional empty clause, too.
      Thus, $\phi$ must be a fact or conditional empty clause as well.
    \end{description}

    Next, we prove that each $J_i$ is non-empty, i.e., that for each $1 \leq
    i \leq k$, there is a $1 \leq j \leq |\vec{\phi}|$ such that $[\phi_m]_{m=1}^j \sqsubseteq \vec{\phi}_i$.
    We use induction on $1 \leq
    i \leq k$, where the case $i=1$ is immediate due to the choice of $\pi$ in \ref{it:reasonable-proof2}.
    Let $i > 1$.
    First assume that \ref{it:reasonable-proof1} was applied to the state $s_{i-1}$ and $\vec{\phi}_{i-1} = \vec{\phi}'_{i'}::\vec{\phi}^\circlearrowleft$ for some $i' \in \NN_{\geq 1}$ where $i' < i-1$.
    Then by the induction hypothesis, $J_{i-1}$ is non-empty and thus $j_{i-1}$ exists.
    We get:
    \begin{align*}
      & \ground([\phi_m]_{m=1}^{j_{i-1}}) \\
      {} \subseteq {} & \ground(\vec{\phi}_{i-1}) \tag{IH} \\
      {} = {} & \ground(\vec{\phi}_{i'}::\vec{\phi}^\circlearrowleft) \tag{as $\vec{\phi}_{i-1} = \vec{\phi}'_{i'}::\vec{\phi}^\circlearrowleft$} \\
      {} \subseteq {} & \ground(\vec{\phi}_{i'}::\accel(\vec{\phi}^\circlearrowleft)) \tag{\Cref{lem:ground}, as $\vec{\phi}^\circlearrowleft \sqsubseteq \accel(\vec{\phi}^\circlearrowleft)$} \\
      {} = {} & \ground(\vec{\phi}_{i})
    \end{align*}
    Hence, $j_{i-1} \in J_i$, i.e., $J_i$ is non-empty.
    If \ref{it:reasonable-proof2} was applied to the state $s_{i-1}$, then $J_i$ is trivially non-empty due to the choice of $\pi$ in \ref{it:reasonable-proof2}.

    Finally, note that the resulting sequence $\Phi \leadsto s_0 \leadsto \ldots \leadsto
    s_k$
    is finite: For \ref{it:reasonable-proof2}, we have $j_i < j_{i+1}$, and the sequence $j_1,j_2,\ldots$ is bounded by $|\vec{\phi}|$.
    For \ref{it:reasonable-proof1}, we have $j_i \leq j_{i+1}$, but the (finite) number of recursive infixes of $\vec{\phi}$ that are not yet redundant is decremented.
    Hence, we get a lexicographic termination argument, which ensures that there is an
    $s_k$ with $j_k = |\vec{\phi}|$.  This implies $s_k \leadsto \unsat$. \qed
  \end{proof}
}

However, in general our calculus does not terminate, even with a reasonable
strategy.
Note that even though CHC-SAT is undecidable for, e.g., CHCs over the theory LIA, non-termination of $\rs$ is not implied by soundness of ADCL.
The reason is that we assume oracles for undecidable sub-problems like SMT, checking redundancy, and acceleration.
As acceleration may introduce non-linear integer arithmetic, both SMT and checking redundancy may even become undecidable when analyzing CHCs over a decidable theory like LIA.

To prove non-termination, we extend our calculus by one additional component:
A mapping $\LL: \mbp(\Pi) \to \PP(\mbp(\Phi)^*)$ from $\mbp(\Pi)$ to regular languages
over $\mbp(\Phi)$, where $\PP(\mbp(\Phi)^*)$ denotes the power set of $\mbp(\Phi)^*$.
We will show that this mapping gives rise to an alternative characterization of the ground
instances of $\mbp(\Pi)$, which will be exploited in our non-termination proof
(\Cref{thm:non-termination}). Moreover, this mapping is
  also  used
  in our implementation to check redundancy, see \Cref{sec:implementation}.
To extend our calculus, we lift $\LL$ from $\mbp(\Pi)$ to $\mbp(\Pi)^*$ as follows:
\[
  \LL(\epsilon) \Def \epsilon \qquad \qquad \qquad \LL(\vec{\pi}::\pi) \Def \LL(\vec{\pi})::\LL(\pi)
\]
Here, ``$::$'' is also used to denote language concatenation, i.e., we have
\[
  \LL_1 :: \LL_2 \Def \{ \vec{\pi}_1 :: \vec{\pi}_2 \mid \vec{\pi}_1 \in \LL_1, \vec{\pi}_2 \in \LL_2 \}.
\]
So while we lift other notations to sequences of transitions via
resolution, $\LL(\vec{\tau})$ does \emph{not} stand for $\LL(\resolve(\vec{\tau}))$.

\def\scale{0.875}
\def\width{0.89\textwidth}
\begin{definition}[ADCL with Regular Languages]
  \label{def:calc-reg}
  We extend states (see \Cref{def:state}) by a fourth component $\LL: \mbp(\Pi) \to \PP(\mbp(\Phi)^*)$.
  The rules \ref{calc:init} and \ref{calc:accelerate} of the ADCL calculus (see
  \Cref{def:calc}) are adapted as follows:
  \begin{align}
    & \scalebox{\scale}{\parbox{\width}{
      \[
        \infer{
        \Phi \rs (\Phi,[],[\emptyset],\LL)
        }{\LL(\phi) = \{\phi\} \text{ for all $\phi \in \mbp(\Phi)$}}
      \]
    }}
    \tag{\textsc{Init}} \label{calc:init-reg} \\
    & \scalebox{\scale}{\parbox{\width}{
      \[ \hspace*{-.2cm}
        \infer{
          (\Pi,\vec{\phi}::\vec{\phi}^\circlearrowleft,\vec{B}::\vec{B}^\circlearrowleft,\LL) \rs (\Pi \cup \{\phi\},\vec{\phi}::\phi,\vec{B}::\{\phi\},\LL')
        }{
          \vec{\phi}^\circlearrowleft \text{ is recursive} & |\vec{\phi}^\circlearrowleft|
          = |\vec{B}^\circlearrowleft| & \accel(\vec{\phi}^\circlearrowleft) = \phi & \LL'
          = \LL \uplus  (\phi \mapsto \LL(\vec{\phi}^\circlearrowleft)^+)
        }
      \]
    }}
    \tag{\textsc{Accelerate}} \label{calc:accelerate-reg}
  \end{align}
  All other rules from \Cref{def:calc} leave the last component of the state unchanged.
\end{definition}
Here, $\LL(\pi)^+$ denotes the ``Kleene plus'' of $\LL(\pi)$, i.e., we have
\[
  \LL(\pi)^+ \Def \bigcup_{n \in \NN_{\geq 1}} \LL(\pi)^n.
\]
Note that \Cref{def:calc-reg} assumes a reasonable strategy (indicated by the notation $\rs$).
Hence, when \ref{calc:accelerate} is applied, we may assume $\phi \notin \mbp(\Pi) = \dom(\LL)$.
Otherwise, $\phi$ would be redundant and hence a reasonable strategy would not allow the
application of \ref{calc:accelerate}.
For this reason, we may write ``$\uplus$'' in the definition of $\LL'$.

The following lemma allows us to characterize the ground instances of elements of $\mbp(\Pi)$ via $\LL$.
Here, we lift $\ground$ to sets by defining $\ground(X) \Def \bigcup_{x \in X} \ground(x)$, where $X$ may be a set of CHCs or a language over CHCs.
Thus, $\ground(\LL(\pi))$ is the set of all ground instances of the final resolvents of the sequences in $\LL(\pi)$.

\begin{restatable}{lemma}{redundant}
  \label{lem:redundant}
  If  $\Phi \rs^* (\Pi,\vec{\phi},\vec{B},\LL)$ and $\pi \in \mbp(\Pi)$, then $\ground(\pi) = \ground(\LL(\pi))$.
\end{restatable}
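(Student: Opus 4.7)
The plan is to proceed by induction on the length of the derivation $\Phi \rs^* (\Pi, \vec{\phi}, \vec{B}, \LL)$. For the base case, the only step is \ref{calc:init-reg}, where $\Pi = \Phi$ and $\LL(\phi) = \{\phi\}$ for every $\phi \in \mbp(\Phi)$, so $\ground(\pi) = \ground(\{\pi\}) = \ground(\LL(\pi))$ holds trivially for each $\pi \in \mbp(\Pi)$. For the inductive step, observe that the rules \ref{calc:step}, \ref{calc:covered}, \ref{calc:backtrack}, and \ref{calc:refute}/\ref{calc:accept} leave both $\Pi$ and $\LL$ unchanged, so the claim transfers from the previous state. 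The only nontrivial case is \ref{calc:accelerate-reg}, which introduces the fresh clause $\phi = \accel(\vec{\phi}^\circlearrowleft)$ with $\LL'(\phi) \Def \LL(\vec{\phi}^\circlearrowleft)^+$.

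For this new clause, I would unfold both sides: by \Cref{def:accel},
\[
  \ground(\phi) = \bigcup_{n \in \NN_{\geq 1}} \ground((\vec{\phi}^\circlearrowleft)^n),
\]
while $\LL(\vec{\phi}^\circlearrowleft)^+ = \bigcup_{n \geq 1} \LL(\vec{\phi}^\circlearrowleft)^n$ gives
\[
  \ground(\LL'(\phi)) = \bigcup_{n \in \NN_{\geq 1}} \ground\bigl(\LL(\vec{\phi}^\circlearrowleft)^n\bigr).
\]
Hence it suffices to show, for every $n \geq 1$, that $\ground((\vec{\phi}^\circlearrowleft)^n) = \ground(\LL(\vec{\phi}^\circlearrowleft)^n)$. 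This reduces the goal to a single auxiliary lemma: for any $\vec{\pi} = [\pi_1, \ldots, \pi_k] \in \mbp(\Pi)^*$ with $\ground(\pi_i) = \ground(\LL(\pi_i))$ for all $i$ (which holds by the outer IH), we have $\ground(\vec{\pi}) = \ground(\LL(\vec{\pi}))$.

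I would prove this auxiliary lemma by induction on $k$, exploiting the associativity of $\resolve$ and the fact that ground instances of a resolution chain correspond exactly to sequences of ``chainable'' ground CHCs whose head and body literals unify on the nose. Concretely, since $\LL(\vec{\pi}) = \LL(\pi_1) :: \ldots :: \LL(\pi_k)$ by definition, one can rewrite
\[
  \ground(\LL(\vec{\pi})) = \bigcup_{\vec{\sigma}_1 \in \LL(\pi_1), \ldots, \vec{\sigma}_k \in \LL(\pi_k)} \ground(\vec{\sigma}_1 :: \ldots :: \vec{\sigma}_k),
\]
and the key calculation is that $\ground(\vec{\sigma}_1 :: \vec{\sigma}_2) = \ground([\resolve(\vec{\sigma}_1), \resolve(\vec{\sigma}_2)])$ (which follows from associativity of resolution and the definition of $\ground$ on sequences), combined with the equality $\ground([\rho_1, \rho_2]) = \{\eta \implies \eta'' \mid \exists \eta'. \, (\eta \implies \eta') \in \ground(\rho_1), (\eta' \implies \eta'') \in \ground(\rho_2)\}$ that characterizes ground instances of a two-element chain.

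The main obstacle will be the compositionality argument in the auxiliary lemma: while intuitive, spelling out that ground instances of resolution distribute over concatenation of sequences requires care with the mgu's (which in our setting are variable renamings) and with the chaining condition between adjacent links. Once that ``ground resolution composes pointwise'' fact is established as a small lemma (essentially a corollary of the definition of $\resolve$ on sequences together with $\sigma \models_\AA \phi \iff \sigma \models_\AA \ground(\phi)$), the rest of the proof is bookkeeping, and the induction on derivation length closes cleanly.
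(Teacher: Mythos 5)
Your proposal is correct and takes essentially the same route as the paper: induction on the length of the $\rs$-derivation, with all rules except \textsc{Accelerate} trivial, and the \textsc{Accelerate} case reduced via \Cref{def:accel} to the compositionality fact that $\ground$ of a resolution sequence equals $\ground$ of the concatenation of the languages of its elements -- exactly the content the paper factors into its auxiliary lemmas (resolution distributes over $\ground$ and over $\LL$, plus associativity of resolution), which you correctly identify as the part needing the mgu/lifting argument. The only cosmetic difference is that you apply this compositionality to each power $(\vec{\phi}^\circlearrowleft)^n$ directly, whereas the paper first establishes $\ground(\vec{\phi}^\circlearrowleft)=\ground(\LL'(\vec{\phi}^\circlearrowleft))$ and then pushes $\ground$ through the Kleene plus using the same distribution lemma.
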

\makeproof*{lem:redundant}{
  \redundant*
  \begin{proof}
    We use induction on the length of the $\rs$-derivation.
    If $\Phi \rs (\Pi,\vec{\phi},\vec{B},\LL)$, then the claim follows from the definition of \ref{calc:init-reg}.
    Assume
    \[
      \Phi \rs^* (\Pi',\vec{\phi}',\vec{B}',\LL') \rs (\Pi,\vec{\phi},\vec{B},\LL).
    \]
    If the last step is not \ref{calc:accelerate}, then the claim follows from the induction hypothesis.
    Assume that the last step is \ref{calc:accelerate}.
    Then we have $\Pi = \Pi' \cup \{\phi\}$ and $\LL = \LL' \uplus \phi \mapsto \LL'(\vec{\phi}^\circlearrowleft)^+$.
    If $\pi \neq \phi$, then the claim follows from the induction hypothesis, so assume $\pi = \phi$.
    Then it remains to show
    \begin{equation}
      \label{main statement lemma redundant}
      \ground(\phi) = \ground(\LL(\phi)) = \ground(\LL'(\vec{\phi}^\circlearrowleft)^+),
      \text{ where } \phi = \accel(\vec{\phi}^\circlearrowleft).
    \end{equation}
    By the induction hypothesis, we have $\ground(\phi') = \ground(\LL'(\phi'))$ for all $\phi' \in \vec{\phi}^\circlearrowleft$.
    We first show
    \begin{equation}
      \label{eq:resolve-reg}
      \ground(\vec{\phi}^\circlearrowleft) = \ground(\LL'(\vec{\phi}^\circlearrowleft)).
    \end{equation}
    To this end, we show that $\vec{\pi} \in \mbp(\Pi')^*,\pi \in \mbp(\Pi')$, $\ground(\vec{\pi}) = \ground(\LL'(\vec{\pi}))$ and $\ground(\pi) = \ground(\LL'(\pi))$ imply $\ground(\vec{\pi}::\pi) = \ground(\LL'(\vec{\pi}::\pi))$.
    Then \eqref{eq:resolve-reg} follows by a straightforward induction over $|\vec{\phi}^\circlearrowleft|$.
    We have:
    \begin{align*}
      & \ground(\vec{\pi}::\pi) \\
      {} = {} & \{\phi \in \resolve(\ground(\vec{\pi}),\ground(\pi)) \mid {} \models_\AA \cond(\phi) \} \tag{\Cref{{lem:distribute}}} \\
      {} = {} & \{\phi \in \resolve(\ground(\LL'(\vec{\pi})),\ground(\LL'(\pi))) \mid {} \models_\AA \cond(\phi) \} \tag{IH} \\
      {} = {} & \ground(\resolve(\LL'(\vec{\pi}),\LL'(\pi))) \tag{\Cref{{lem:distribute}}} \\
      {} = {} & \ground(\resolve(\LL'(\vec{\pi}::\pi))) \tag{\Cref{lem:distribute2}} \\
      {} = {} & \ground(\LL'(\vec{\pi}::\pi)) \tag{by the definition of $\ground$}
    \end{align*}
    Now we prove \eqref{main statement lemma redundant}:
    \begin{align*}
      & \ground(\accel(\vec{\phi}^\circlearrowleft)) \\
      {} = {} & \bigcup_{n \in \NN_{\geq 1}} \ground((\vec{\phi}^\circlearrowleft)^n) \tag{\Cref{def:accel}} \\
      {} = {} & \bigcup_{n \in \NN_{\geq 1}} \{ \phi \in \ground(\vec{\phi}^\circlearrowleft)^n \mid {} \models_\AA \cond(\phi) \} \tag{\Cref{{lem:distribute}}} \\
      {} = {} & \bigcup_{n \in \NN_{\geq 1}} \{ \phi \in \ground(\LL'(\vec{\phi}^\circlearrowleft))^n \mid
      {} \models_\AA \cond(\phi) \} \tag{by \eqref{eq:resolve-reg}} \\
      {} = {} & \bigcup_{n \in \NN_{\geq 1}} \ground(\LL'(\vec{\phi}^\circlearrowleft)^n) \tag{\Cref{{lem:distribute}}} \\
      {} = {} & \ground(\LL'(\vec{\phi}^\circlearrowleft)^+) \tag{by the definition of Kleene plus}
    \end{align*}
    \qed
  \end{proof}
}

Now we are ready to prove that, even with a reasonable strategy, ADCL does not
terminate.

\begin{restatable}[Non-Termination]{theorem}{nonterm}
  \label{thm:non-termination}
  There exists a satisfiable CHC problem $\Phi$ such that $\Phi \not\rs^* \sat$.
  Thus, $\rs$ does not terminate.
\end{restatable}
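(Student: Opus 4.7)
The plan is to exhibit a concrete satisfiable CHC problem $\Phi$ for which $\Phi \not\rs^* \sat$. Combined with soundness (\Cref{thm:soundness}) and the normal-form theorem (\Cref{thm:normal forms}), this already implies non-termination of $\rs$: no $\rs$-derivation from $\Phi$ can end in $\unsat$ (by soundness, since $\Phi$ is satisfiable) or in $\sat$, so every maximal $\rs$-derivation must be infinite, and in particular one such derivation exists.

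I would construct $\Phi$ with no query, so that \ref{calc:refute} never fires, together with a single fact $\top \implies F(\vec{0})$ and recursive rules whose iterates can be composed into infinitely many pairwise distinct growth patterns (for instance two independent counter-increment rules whose interleavings admit every rational ``ratio'' of $X$- and $Y$-growth). The key tool is \Cref{lem:redundant}: it identifies $\ground(\phi)$ with $\ground(\LL(\phi))$, reducing non-redundancy of a newly learned clause to non-inclusion of its associated regular language in any previously registered one. By choosing the acceleration targets so that their languages $\LL(\vec{\phi}^\circlearrowleft)^+$ remain pairwise incomparable over $\mbp(\Phi)$, item~(3) of \Cref{def:reasonable} keeps permitting (and item~(2) requiring) \ref{calc:accelerate}-steps on genuinely fresh suffixes, producing an infinite chain of learned clauses.

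To rule out $\sat$, I would maintain two invariants along the derivation. First, the unique fact $\phi_{\mathsf{f}}$ is never strictly redundant w.r.t.\ $\mbp(\Pi)$, because every learned clause is a recursive rule whose body contains a literal over $\Sigma$, and thus no ground instance can subsume $\top \implies F(\vec{0})$. Second, whenever the trace equals $[\phi_{\mathsf{f}}]$, at least one rule is active, since its resolvent with $\phi_{\mathsf{f}}$ has a satisfiable condition. Together these prevent \ref{calc:covered} and \ref{calc:backtrack} from ever inserting $\phi_{\mathsf{f}}$ into $B_0$, so \ref{calc:accept} is never enabled.

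The main obstacle will be verifying, for the concrete $\Phi$, that the intended acceleration targets are indeed the shortest non-redundant recursive suffixes at every stage, and that the mandatory \ref{calc:covered}-detours prescribed by item~(1) of \Cref{def:reasonable} do not derail the construction by prematurely forcing the trace back to $[\phi_{\mathsf{f}}]$ with all rules blocked. Formalising this requires an explicit inductive description of $\LL(\phi)$ for each learned $\phi$ after each stage, so that the non-inclusion of languages can be argued by a uniform pattern (e.g.\ by reading off arithmetic invariants such as growth ratios from the words of each language) rather than case-by-case.
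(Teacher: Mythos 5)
There is a genuine gap, and it sits exactly where your proposal is vaguest. To establish $\Phi \not\rs^* \sat$ you must show that in \emph{every} reasonable derivation the fact can never be pushed into $B_0$, i.e.\ that \ref{calc:backtrack} (or \ref{calc:covered}) is never applicable when the trace is $[\phi_{\mathsf{f}}]$. Your second invariant — ``whenever the trace equals $[\phi_{\mathsf{f}}]$, at least one rule is active, since its resolvent with $\phi_{\mathsf{f}}$ has a satisfiable condition'' — conflates satisfiability of the resolvent with activity: a clause is inactive also when it is \emph{blocked}, and clauses at position $1$ of the trace do get blocked, namely by learned clauses that are added to $B_1$ upon backtracking and that render the original rules redundant (this is precisely how \ref{calc:accept} becomes applicable in \Cref{ex:calc}). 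So the real obligation is to prove that the learned clauses can \emph{never} cover the whole reachable search space after $\phi_{\mathsf{f}}$, and that is the central mathematical content of the theorem, not a bookkeeping detail as your last paragraph suggests. Moreover, your concrete candidate (independent counter increments realizing all growth ratios) does not deliver this: redundancy $\sqsubseteq$ is defined via ground instances, i.e.\ via the \emph{net} input/output relation, and for order-insensitive systems all interleavings of the two increments collapse into finitely many accelerated clauses (e.g.\ $\F(X,Y)\land N>0 \implies \F(X+N,Y)$, its $Y$-analogue, and the clause obtained by accelerating their composition), after which every longer suffix is redundant, \ref{calc:covered}/\ref{calc:backtrack} cascade, and ADCL legitimately derives $\sat$. ``Pairwise incomparable languages'' is the right intuition, but you give no construction that actually produces infinitely many semantically incomparable resolvents, nor an argument that their existence blocks $\sat$ (rather than merely permitting one infinite derivation).

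The paper's proof supplies exactly the two ingredients you are missing. First, a combinatorial lemma about the fourth component $\LL$: every language built from singletons, concatenation and Kleene plus contains at most one square-free word, so $\LL(\Pi)$ contains only finitely many square-free words in any reachable state. Second, a CHC problem whose resolution sequences encode the differences of the Thue--Morse sequence, so that for every $n$ there is a \emph{unique} square-free sequence $\vec{\phi}_n$ with ground instance $\top \implies \pred{ThueMorse}(n,w_n)$ — here order genuinely matters, unlike in your counter example. Finally, instead of maintaining invariants over all derivations, the paper closes the argument by contradiction with soundness: if $\Phi \rs^* \sat$, pick an $m$ whose sequence is not captured by $\LL(\Pi)$, add the query $\pred{ThueMorse}(I,X)\land I=m\land X=w_m \implies \bot$, and show the same derivation still yields $\sat$ for the now unsatisfiable problem $\Phi'$, contradicting \Cref{thm:soundness}. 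Your direct-invariant route would additionally force you to characterize $\LL(\Pi)$ along arbitrary reasonable derivations, which the paper's indirect argument deliberately avoids.
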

\begin{proofsketch}
   One can construct a satisfiable CHC problem $\Phi$
   such that all (infinitely many) resolution sequences with $\Phi$ are \emph{square-free}, i.e., they do not contain a non-empty subsequence of the form $\vec{\phi}::\vec{\phi}$.
   For example,
   this can be achieved by encoding the differences between subsequent numbers of the Thue-Morse sequence
  \cite{oeis-thue-morse,oeis-thue-morse-difference}.
  As an invariant of our calculus, $\LL(\Pi)$ just contains finitely many square-free
  words for any reachable state $(\Pi,\vec{\phi},\vec{B},\LL)$.
  As $\ground(\Pi) = \ground(\LL(\Pi))$, this means that $\Pi$ cannot cover
  all resolution sequences with $\Phi$.
  Thus, the assumption $\Phi \rs^* \sat$ results in a contradiction.
  \qed
\end{proofsketch}
\makeproof*{thm:non-termination}{
  \nonterm*
  \begin{proof}
    ~\\[-2.5em]
    \section*{Proof Idea:}

    We will construct a satisfiable CHC problem $\Phi$ such that $\Phi \not\rs^* \sat$.
    Since $\Phi$ is satisfiable, we also have $\Phi \not\rs^* \unsat$ by the soundness of our calculus (\Cref{thm:soundness}).
    But by \Cref{thm:normal forms}, the only normal forms w.r.t.
    $\leadsto$ (and also w.r.t.\ $\rs$) are $\sat$ and $\unsat$.
    Therefore, $\rs$ does not terminate when starting with $\Phi$.
    In fact, it is not even normalizing (i.e., there does not \emph{exist} any derivation to a normal form).

    \section*{Construction of $\Phi$:}

    To construct $\Phi$, we consider the Thue-Morse sequence $[v_i]_{i \in \NN}$
    \cite{oeis-thue-morse}.
    Let $w_i \Def v_{i+1} - v_i$.
    The resulting infinite sequence $[w_i]_{i \in \NN}$ over the alphabet $\{-1,0,1\}$ is well-known to be \emph{square-free}, i.e., it does not contain a non-empty infix $u :: u$ \cite{oeis-thue-morse-difference}.
    Let $\Phi$ be a satisfiable CHC problem that contains
    \begin{equation}
      \label{eq:nt1}
      \tag{\protect{\ensuremath{\phi_f}}}
      \top \implies \pred{ThueMorse}(0,1),
    \end{equation}
    the following CHCs $\Phi_\pred{ThueMorse}$,
    \begin{align*}
      \pred{ThueMorse}(I,X) \land X = -1 \land J = I+1 & {} \implies \pred{Next}(J)  \\
      \pred{ThueMorse}(I,X) \land X = 0 \land J = I+1  & {} \implies \pred{Next}(J)  \\
      \pred{ThueMorse}(I,X) \land X = 1 \land J = I+1  & {} \implies \pred{Next}(J),
    \end{align*}
    and CHCs $\Phi_{\pred{Next}}$ that are equivalent to
    \begin{equation}
      \pred{Next}(I) \land X = w_I \implies \pred{ThueMorse}(I,X).
    \end{equation}
    Note that $\Phi_{\pred{Next}}$ exists, since $[w_i]_{i \in \NN}$ is computable and linear CHCs are Turing complete.
    W.l.o.g., we assume $\mbp(\Phi) = \Phi$ in the sequel.
    Then for each $n \in \NN$, there is a unique sequence
    \[
      \vec{\phi}_n \in \ref{eq:nt1} :: (\Phi_{\pred{ThueMorse}} :: \Phi_{\pred{Next}}^+)^{n}
    \]
    such that $\ground(\vec{\phi}_n) = \{\top \implies \pred{ThueMorse}(n,w_n)\}$.
    Hence, $\proj{\vec{\phi}_n}{\Phi_\pred{ThueMorse}}$ (i.e., the sequence that results from $\vec{\phi}_n$ by omitting all CHCs that are not contained in $\Phi_\pred{ThueMorse}$) is square-free.

    \section*{Contradicting $\Phi \rs^* \sat$:}

    Assume
    \[
      \Phi \rs s_1 \rs \ldots \rs s_k = (\Pi,[],[B],\LL) \rs \sat.
    \]

    \subsection*{1. Finitely many square-free words in $\proj{\LL(\Pi)}{{\Phi_\pred{ThueMorse}}}$:}

    We first prove that any finite union $\LL = \bigcup_{i=1}^c \LL_i$ of languages $\LL_i$ over a finite alphabet that are built from singleton languages, concatenation, and Kleene plus just contains finitely many square-free words.
    To this end, it suffices to prove that each $\LL_i$ contains at most one square-free word.
    Let $1 \leq i \leq c$ be arbitrary but fix.
    We use induction on $\LL_i$.
    If $\LL_i$ is a singleton language, then the claim is trivial.
    If $\LL_i = \LL'::\LL''$, then the square-free words in $\LL_i$ are a subset of
    \begin{equation}
      \label{eq:square-free}
      \{w'::w'' \mid w' \in \LL', w'' \in \LL'', w' \text{ and } w'' \text{ are square-free}\}.
    \end{equation}
    By the induction hypothesis, there is at most one square-free word $w' \in \LL''$ and at most one square-free word $w'' \in \LL''$.
    Thus, the size of \eqref{eq:square-free} is at most one.
    If $\LL_i = (\LL')^+$, then $\LL_i$ contains the same square-free words as $\LL'$.
    To see this, let $w \in (\LL')^+$ be square-free.
    Then there are $n \in \NN_{\geq 1}$, $v_1,\ldots,v_n \in \LL'$ such that $v_1::\ldots::v_n = w$.
    Since $w$ is square-free, each $v_i$ must be square-free, too.
    As $\LL'$ contains at most one square-free word by the induction hypothesis, we get
    $v_1 = \ldots = v_n$.
    Since $w$ is square-free, this implies $n = 1$ (or that $w$ and the $v_i$ are the
    empty word).

    Note that
    \[
      \LL(\Pi)= \bigcup_{\pi \in \Pi} \LL(\pi)
    \]
    is a finite union of languages over $\Phi$ that are built from singleton languages, concatenation, and Kleene plus.
    Then
    \[
      \proj{\LL(\Pi)}{\Phi_\pred{ThueMorse}} \Def \{\proj{w}{\Phi_\pred{ThueMorse}}\mid w \in \LL(\Pi)\}
    \]
    is a finite union of languages over $\Phi_\pred{ThueMorse}$ that are built from singleton languages, concatenation, and Kleene plus.
    Thus, $\proj{\LL(\Pi)}{\Phi_\pred{ThueMorse}}$ just contains finitely many square-free words.

    \subsection*{2. Extending $\Phi$ to an unsatisfiable CHC problem $\Phi'$:}

    Thus, there is an $m > 1$ such that $\proj{\vec{\phi}_m}{\Phi_\pred{ThueMorse}} \notin \proj{\LL(\Pi)}{\Phi_\pred{ThueMorse}}$.
    To see that, note that the words $\proj{\vec{\phi}_n}{\Phi_\pred{ThueMorse}}$ are square-free and pairwise different, but $\proj{\LL(\Pi)}{\Phi_\pred{ThueMorse}}$ only contains finitely many square-free words.
    Let
    \begin{align*}
      \phi_q & {}\Def \pred{ThueMorse}(I,X) \land I = m \land X = w_m \implies \bot, \\
      \Phi'  & {} \Def \Phi \cup \{\phi_q\}, & & \text{and} \\
      s_i    & {} \Def (\Pi_i,\vec{\pi}_i,\vec{B}_i,\LL_i) & & \text{for each $i \in [1,k]$.}
    \end{align*}

    \subsection*{3. $\Phi' \leadsto^* \sat$:}

    Then we get
    \[
      \Phi' \rs s'_1 \rs \ldots \rs s'_k \rs \sat
    \]
    where for each $1 \leq i \leq k$, we have
    \[
      \Pi'_i \Def \Pi_i \cup \{\phi_q\}, \qquad \LL'_i \Def \LL_i \uplus \phi_q \mapsto \{\phi_q\}, \qquad \text{and} \qquad s'_i \Def (\Pi'_i, \vec{\pi}_i, \vec{B}_i,\LL'_i).
    \]
    To prove this, it suffices to show $\Phi' \rs^k s'_k$, as we trivially have $s'_k \leadsto \sat$ since $s'_k$ and $s_k$ contain the same facts and no conditional empty clauses.
    To this end, we prove $\Phi' \rs^i s'_i$ for all $1 \leq i \leq k$ by induction on $i$.

    \subsubsection*{3.1. Induction Base:}

    For $i = 1$, the only applicable rule is \ref{calc:init}, i.e., we have
    \begin{align*}
      \Phi & {} \overset{\textsc{I}}{\leadsto}_{\mathit{rs}} (\Phi,[],[\emptyset],\phi \mapsto \{\phi\}) = (\Pi_1,\vec{\pi}_1,\vec{B}_1,\LL_1) & \text{and} \\
      \Phi' & {} \overset{\textsc{I}}{\leadsto}_{\mathit{rs}} (\Phi',[],[\emptyset],\phi \mapsto \{\phi\}) = (\Pi'_1,\vec{\pi}_1,\vec{B}_1,\LL'_1).
    \end{align*}

    \subsubsection*{3.2. Induction Step -- Trivial Cases:}

    For $i > 1$, we perform a case analysis on the rule that is used for the step $s_{i-1} \rs s_i$.
    \ref{calc:init} does not apply to $s_{i-1}$.
    For \ref{calc:step}, if $\phi$ was active in the $i^{th}$ step starting from $\Phi$, then it is also active in the $i^{th}$ step starting from $\Phi'$, as $\vec{B}_{i-1}$ is the same in both derivations.
    Moreover, we get the same set $B$ and thus, the same $\vec{B}_i$, since the same clauses are learned in the first $i-1$ (identical) steps of both ${\rs}$-derivations.
    For \ref{calc:covered}, we have $\vec{\phi}' \sqsubset \Pi_i$ iff $\vec{\phi}' \sqsubset \Pi'_i$, and $\vec{\phi}' \sqsubseteq \Pi_i$ iff $\vec{\phi}' \sqsubseteq \Pi'_i$.
    The reason is that every model of
$\Phi$ is also a model of 
    $\vec{\phi}'$ and thus $\vec{\phi}' \sqsubset \phi_q$ would imply unsatisfiability of $\Phi$, but $\Phi$ is satisfiable.
    For the same reason, we have $\accel(\vec{\phi}^\circlearrowleft) \sqsubseteq \Pi_i$ iff $\accel(\vec{\phi}^\circlearrowleft) \sqsubseteq \Pi'_i$ in the case of \ref{calc:accelerate}.

    \subsubsection*{3.3. Induction Step -- \ref{calc:backtrack}:}

    For \ref{calc:backtrack}, we have to show that $\phi_q$ is inactive.
    Since $\phi_q$ is clearly not blocked, we have to show that $\cond(\vec{\pi}_{i-1}::\phi_q)$ is unsatisfiable.
    Assume otherwise.
    Then we have:
    \begin{align*}
      \vec{\phi}_m          & {} = \ref{eq:nt1}::\vec{\phi}                                    &  & \text{for some } \vec{\phi} \in \Phi^*    \\
      \vec{\pi}_{i-1}       & {} = \ref{eq:nt1}::\vec{\pi}::\pi                                &  & \text{for some } \vec{\pi}::\pi \in \Pi^+ \\
      \top                  & {} \implies \pred{ThueMorse}(m,w_m) \in \ground(\vec{\pi}_{i-1})                                                \\
      \pred{ThueMorse}(0,1) & {} \implies \pred{ThueMorse}(m,w_m) \in \ground(\vec{\pi}::\pi)
    \end{align*}
    To see why the equality $\vec{\pi}_{i-1} = \ref{eq:nt1}::\vec{\pi}::\pi$ holds (i.e., why $|\vec{\pi}_{i-1}| > 1$), note that $\vec{\pi}_{i-1} = [\ref{eq:nt1}]$ would imply $m=0$, contradicting $m>1$ (i.e., then $\cond(\vec{\pi}_{i-1}::\phi_q)$ would be unsatisfiable).

    \paragraph*{\underline{Step 3.3.1.}
      We show $\vec{\phi} \in \LL(\Pi)$:}

    Note that some clause from $\Phi_{\pred{ThueMorse}}$ is blocked in state $s_{i-1}$ (as \ref{calc:backtrack}
    applies to $s_{i-1}$). The reason is that the head symbol of the last clause in $\vec{\pi}_{i-1}$ is $\pred{ThueMorse}$. Thus, one of the clauses $\phi'$ of $\Phi_{\pred{ThueMorse}}$ would be applicable (i.e., $\cond(\vec{\pi}_{i-1}::\phi')$ is satisfiable). But since the clause $\phi'$ is inactive, it must be blocked.
    Thus, there is a $j < i-1$ such that $\vec{\pi}_j = \vec{\pi}_{i-1}$ and no element of $\Phi_{\pred{ThueMorse}}$ is blocked (right after adding $\pi$ to the trace via \ref{calc:accelerate} or \ref{calc:step}). We cannot have $j = i-1$ because when adding $\pi$ to the trace via \ref{calc:accelerate} or \ref{calc:step}, we reach a state where no clause of $\Phi_{\pred{ThueMorse}}$ is blocked.

    \paragraph*{Case 3.3.1.1. $\vec{\pi}$ contains clauses with body symbol $\pred{ThueMorse}$:}

    If $\vec{\pi}$ contains $c>0$ clauses with body symbol $\pred{ThueMorse}$, then $\pi_1$ is one of them, as $\vec{\pi}_j$ starts with $\ref{eq:nt1}$.
    Thus, $\vec{\pi}::\pi$ is recursive, i.e., $\vec{\pi}_j$ has a recursive suffix with ground instance
    \begin{equation}
      \label{eq:m-steps}
      \pred{ThueMorse}(0,1) \implies \pred{ThueMorse}(m,w_m).
    \end{equation}
    If $\accel(\vec{\pi}::\pi) \sqsubseteq \Pi_j$, then $\Pi_j \subseteq \Pi$ contains a clause with ground instance \eqref{eq:m-steps}.

    Otherwise, a reasonable strategy needs to apply \ref{calc:accelerate} in state $s_j$.
    Then we get $\accel(\vec{\pi}::\pi) \sqsubseteq \Pi$ by induction on $c$:
    If $c=1$, then $\accel(\vec{\pi}::\pi) \in \Pi_{j+1} \subseteq \Pi$.
    If $c>1$ and $\accel(\vec{\pi}::\pi) \not\sqsubseteq \Pi_{j+1}$, then $\vec{\pi}_{j+1}$ has a recursive suffix $\vec{\phi}^\circlearrowleft \sqsubseteq \vec{\pi}::\pi$ (due to \Cref{lem:assoc}) such that $\accel(\vec{\phi}^\circlearrowleft)$ is not redundant, and hence a reasonable strategy again needs to apply \ref{calc:accelerate} in state $s_{j+1}$.
    Thus, $\accel(\vec{\pi}::\pi) \sqsubseteq \Pi$ follows from the induction hypothesis

    So $\Pi$ contains a clause with ground instance \eqref{eq:m-steps}.
    Thus, we have
    \begin{equation}
      \label{eq:contains-ground}
      \eqref{eq:m-steps} \in \ground(\Pi) \overset{\rm{\Cref{lem:redundant}}}{=} \ground(\LL(\Pi)).
    \end{equation}
    Note that for all $\vec{\phi}' \in \Phi^*$, we have
    \[
      \eqref{eq:m-steps} \in \ground(\vec{\phi}') \qquad \text{iff} \qquad \vec{\phi} = \vec{\phi}'
    \]
    since otherwise, $\vec{\phi}_m$ would not be unique.
    Hence, \eqref{eq:contains-ground} implies $\vec{\phi} \in \LL(\Pi)$.

    \paragraph*{Case 3.3.1.2. $\vec{\pi}$ contains no clause with body symbol $\pred{ThueMorse}$:}

    Now consider the case that $\vec{\pi}$ does not contain a clause with body symbol $\pred{ThueMorse}$.
    As $\vec{\pi}_j$ starts with \ref{eq:nt1}, this implies $\vec{\pi} = []$.
    So we get
    \[
      \eqref{eq:m-steps} \in \ground(\pi) \overset{\rm{\Cref{lem:redundant}}}{=} \ground(\LL(\pi)) \overset{\pi \in \Pi}{\subseteq} \ground(\LL(\Pi)).
    \]
    Hence, $\vec{\phi} \in \LL(\Pi)$ follows as in the previous case.

    \paragraph*{\underline{Step 3.3.2.} We show that $\vec{\phi} \in \LL(\Pi)$ implies $\proj{\vec{\phi}_m}{\Phi_\pred{ThueMorse}} \in \proj{\LL(\Pi)}{\Phi_\pred{ThueMorse}}$:}

    We have shown $\vec{\phi} \in \LL(\Pi)$. Hence, we have
    \[
      \proj{\vec{\phi}}{\Phi_\pred{ThueMorse}} \in \proj{\LL(\Pi)}{\Phi_\pred{ThueMorse}}.
    \]
    As $\proj{\vec{\phi}}{\Phi_\pred{ThueMorse}} = \proj{\vec{\phi}_m}{\Phi_\pred{ThueMorse}}$, this contradicts the observation
    \[
      \proj{\vec{\phi}_m}{\Phi_\pred{ThueMorse}} \notin \proj{\LL(\Pi)}{\Phi_\pred{ThueMorse}}.
    \]

    \section*{4. $\Phi' \leadsto^* \sat$ contradicts Soundness}

    This contradicts the soundness of $\rs$, because $\Phi' = \Phi \cup \{\phi_q\}$ is unsatisfiable.
    The reason is that $\sigma \models_\AA \Phi$ implies $\sigma \models_\AA \pred{ThueMorse}(n,w_n)$ for all $n \in \NN$.
    Hence, our assumption was wrong and we have $\Phi \not\rs^* \sat$. \qed
  \end{proof}
}
The construction from the proof of \Cref{thm:non-termination} can also be used to show that there are non-terminating derivations $\Phi \rs s_1 \rs s_2 \rs \ldots$ where $\Phi$ is unsatisfiable.
However, in this case there is also another derivation $\Phi \rs^* \unsat$ due to refutational completeness (see \Cref{thm:refutational-complete}).

\section{Implementing ADCL}
\label{sec:implementation}

We now explain how we implemented ADCL efficiently in \anonymous{our}{the} tool \tool{LoAT}.
Here we focus on proving unsatisfiability.
The reason is that our implementation cannot prove $\sat$ at the moment, since it uses certain approximations that are incorrect for $\sat$, as detailed below.
Thus, when applying \ref{calc:accept}, our implementation returns $\unknown$ instead of $\sat$.
Our implementation uses \tool{Yices} \cite{yices} and \tool{Z3} \cite{z3} for SMT solving.
Moreover, it is based on the acceleration technique from \cite{acceleration-calculus}, whose implementation solves recurrence relations with \tool{PURRS} \cite{purrs}.

\subsubsection*{Checking Redundancy} To check redundancy in \ref{calc:accelerate} (as
required for reasonable strategies in \Cref{def:reasonable}), we use the fourth component $\LL$ of states introduced in \Cref{def:calc-reg}.
More precisely, for \ref{calc:accelerate}, we check if $\LL(\vec{\phi}^\circlearrowleft)^+ \subseteq \LL(\phi)$ holds for some learned clause $\phi$.
In that case, $\accel(\vec{\phi}^\circlearrowleft)$ is redundant due to \Cref{lem:redundant}.
Since $\LL(\vec{\phi}^\circlearrowleft)^+ \subseteq \LL(\phi)$ is simply an inclusion check for regular languages, it can be implemented efficiently using finite automata.
Our implementation uses the automata library \tool{libFAUDES} \cite{faudes}.

However, this is just a sufficient criterion for redundancy.
For example, a learned clause might be redundant w.r.t.\ an original clause, but such redundancies cannot be detected using $\LL$.
To see this, note that we have $|\LL(\phi)| = 1$ if $\phi$ is an original clause, but $|\LL(\phi)| = \infty$ if $\phi$ is a learned clause.

For \ref{calc:covered}, we also check redundancy via $\LL$, but if $\vec{\phi}' = \phi'$, i.e., if $|\vec{\phi}'| = 1$, then we only apply \ref{calc:covered} if $\phi'$ is an original clause.
Then $\LL(\phi') \subseteq \LL(\phi)$ for some $\phi \neq \phi'$ implies that $\phi$ is a learned clause.
Hence, we have $\LL(\phi') \subset \LL(\phi)$, as $|\LL(\phi')| = 1 < |\LL(\phi)| = \infty$.
This is just a heuristic, as even $\LL(\phi') \subset \LL(\phi)$ just implies $\phi' \sqsubseteq \phi$, but not $\phi' \sqsubset \phi$.
To see this, consider an original clause $\phi = (\F(X) \implies \F(0))$.
Then $\LL(\phi) = \{\phi\}$, $\accel(\phi) \equiv_\AA \phi$ (but not necessarily $\accel(\phi) = \phi$, as $\accel(\phi)$ and $\phi$ might differ syntactically), and $\LL(\accel(\phi)) =
\LL(\phi)^+$.
So we have
  $\LL(\phi) \subset \LL(\accel(\phi))$ and $\phi \sqsubseteq \accel(\phi)$, but $\phi \not\sqsubset \accel(\phi)$.
This is uncritical for proving $\unsat$, but a potential soundness issue for proving
$\sat$, which is one reason why our current implementation cannot prove $\sat$.

\subsubsection*{Implementing {\normalfont \ref{calc:step}} and Blocked Clauses}
To find an active clause in \ref{calc:step}, we proceed as described before \Cref{def:redundancy}, i.e., we search for a suitable element of $\mbp(\Pi)$ ``on the fly''.
So we search for a clause $\phi \in \Pi$ whose body-literal unifies with the head-literal of $\resolve(\vec{\phi})$ using an mgu $\theta$.
Then we use an SMT solver to check whether
\begin{equation}
  \label{eq:impl-resolution}
  \tag{\sc Step--SMT}
  \theta(\cond(\resolve(\vec{\phi}))) \land \theta(\cond(\phi)) \land \bigwedge_{\mathclap{\pi \in B \cap \mbp(\phi)}} \neg\theta(\cond(\pi))
\end{equation}
is satisfiable, where $B$ is the last element of $\vec{B}$.
Here, we assume that $\resolve(\vec{\phi})$ and $\phi$ are variable disjoint (and thus the
$\mgu$ $\theta$ exists).
If we find a model $\sigma$ for \ref{eq:impl-resolution}, then we apply \ref{calc:step}
with $\phi|_{\mbp(\cond(\phi),\sigma)}$.
So to exclude blocked clauses, we do not use the redundancy check based on $\LL$ explained above, but we conjoin the negated conditions of certain blocked clauses to \ref{eq:impl-resolution}.
To see why we only consider blocked clauses from $\mbp(\phi)$,
consider the case that $B = \{ \pi \}$ is a singleton.
Note that both $\theta(\cond(\phi))$ and $\theta(\cond(\pi))$ might contain variables that
do not occur as arguments of predicates in the (unified) head- or
body-literals.
So if
\begin{alignat*}{3}
  \phi & {} \equiv_\AA \forall \vec{X},\vec{Y}_{\phi},\vec{X}'.\ && \F(\vec{X}) \land \psi_{\phi}(\vec{X},\vec{Y}_{\phi},\vec{X}') && {} \implies \G(\vec{X}'), \\
  \phi' & {} \equiv_\AA \forall \vec{X},\vec{Y}_{\phi'},\vec{X}'.\ && \F(\vec{X}) \land \psi_{\phi'}(\vec{X},\vec{Y}_{\phi'},\vec{X}') && {} \implies \G(\vec{X}'), \qquad \text{and}\\
  \pi & {} \equiv_\AA \forall \vec{X},\vec{Y}_\pi,\vec{X}'.\ && \F(\vec{X}) \land \psi_\pi(\vec{X},\vec{Y}_\pi,\vec{X}') && {} \implies \G(\vec{X}'),
\end{alignat*}
for some $\phi' \in \mbp(\phi)$, then $\phi' \sqsubseteq \pi$ iff
\begin{equation}
  \label{eq:blocked}
  \tag{\protect{\ensuremath{{\sqsubseteq}\textsc{--equiv}}}}
  \models_\AA \psi_{\phi'} \implies \exists \vec{Y}_\pi.\ \psi_\pi.
\end{equation}
Thus, to ensure that we only find models $\sigma$ such that $\mbp(\cond(\phi),\sigma)$ is not blocked by $\pi$, we would have to conjoin
\[
  \neg(\psi_{\phi} \implies \exists \vec{Y}_\pi.\ \psi_\pi) \quad \equiv_\AA \quad \psi_\phi \land \forall \vec{Y}_{\pi}.\ \neg\psi_\pi
\]
to the SMT problem.
Unfortunately, as SMT solvers have limited support for quantifiers, such an encoding is impractical.
Hence, we again use a sufficient criterion for redundancy:
If
\begin{equation}
  \label{eq:blocked-approx}
  \tag{\protect{\ensuremath{{\sqsubseteq}\textsc{--sufficient}}}}
  \models_\AA \psi_{\phi'} \implies \psi_\pi,
\end{equation}
then \ref{eq:blocked} trivially holds as well.
So to exclude conjunctive variants $\phi'$ of $\phi$ where \ref{eq:blocked-approx} is valid, we add
\begin{equation}
  \label{eq:blocked-smt}
  \tag{\protect{\ensuremath{{\centernot{\sqsubseteq}}\textsc{--sufficient}}}}
  \neg(\psi_\phi \implies \psi_\pi) \quad \equiv_\AA \quad \psi_\phi \land \neg\psi_\pi
\end{equation}
to the SMT problem.
If $\vec{Y}_\pi \not\subseteq \vec{Y}_{\phi}$,
then satisfiability of \ref{eq:blocked-smt} is usually trivial.
Thus, to avoid increasing the size of the SMT problem unnecessarily, we only add \ref{eq:blocked-smt} to
the SMT problem if $\pi \in \mbp(\phi)$.
Instead, we could try to rename variables from $\vec{Y}_{\pi}$ to enforce $\vec{Y}_\pi \subseteq \vec{Y}_{\phi}$.
However, it is difficult to predict which renaming is the ``right'' one, i.e., which renaming would allow us to prove redundancy.

If $B \cap \mbp(\phi)$ contains several clauses $\pi_1,\ldots,\pi_\ell$, then \ref{eq:blocked-approx} becomes
\begin{equation}
  \label{eq:blocked-approx-set}
  \tag{\protect{\ensuremath{{\sqsubseteq}\textsc{--sufficient}^+}}}
  \models_\AA \psi_{\phi'} \implies \cond(\pi_1)\ \text{ or }\ \ldots \text{ or }\ \models_\AA \psi_{\phi'} \implies \cond(\pi_\ell)\
\end{equation}
Instead, our encoding excludes syntactic implicants $\phi'$ of $\phi$ where
\begin{equation}
  \label{eq:blocked-approx-necessary}
  \tag{\protect{\ensuremath{{\sqsubseteq}\textsc{--insufficient}^+}}}
  \models_\AA \psi_{\phi'} \implies \cond(\pi_1) \lor \ldots \lor \cond(\pi_\ell)
\end{equation}
which is a necessary, but not a sufficient condition for \ref{eq:blocked-approx-set}.
To see why this is not a problem, first note that \ref{eq:blocked-approx-set} trivially holds if $\psi_{\phi'} \in \{\cond(\pi_i) \mid 1 \leq i \leq \ell\}$.
Otherwise, we have
\[
\models_\AA \left(\cond(\pi_1) \lor \ldots \lor \cond(\pi_\ell) \right) \implies 
\bigvee \mbp(\psi_\phi) \setminus \{\psi_{\phi'}\}
\]
 because we assumed $\psi_{\phi'} \notin \{\cond(\pi_i) \mid 1 \leq i \leq \ell\}$, which
 implies $\{\cond(\pi_i) \mid 1 \leq i \leq \ell\} \subseteq \mbp(\psi_\phi) \setminus
 \{\psi_{\phi'}\}$.
 Together with \ref{eq:blocked-approx-necessary}, this implies
 \[ \models_\AA  \psi_{\phi'}  \implies 
 \bigvee \mbp(\psi_\phi) \setminus \{\psi_{\phi'}\}.\]
Therefore, we have $\psi_\phi \equiv_\AA \bigvee \mbp(\psi_\phi) \setminus \{\psi_{\phi'}\}$.
Thus, we may assume that \ref{eq:blocked-approx-necessary} implies redundancy without loss
of generality.
The reason is that we could analyze the following equivalent CHC problem instead of $\Pi$, otherwise:
\[
  (\Pi \setminus \{\phi\}) \cup (\mbp(\phi) \setminus \{\phi'\})
\]

Hence, in \ref{eq:impl-resolution}, we add (a variable-renamed variant of)
\begin{align*}
  \neg(\psi_{\phi} \implies \cond(\pi_1) \lor \ldots \lor \cond(\pi_\ell)) \quad \equiv_\AA \quad & \psi_\phi \land \neg\cond(\pi_1) \land \ldots \land \neg\cond(\pi_\ell) \\
  \equiv_\AA \quad & \cond(\phi) \land \bigwedge_{\mathclap{\pi \in B \cap \mbp(\phi)}} \neg\cond(\pi)
\end{align*}
to the SMT problem.

\begin{example}
  Consider the state \eqref{eq:after-covered} from \Cref{ex:calc}.
  First applying \ref{calc:step} with $\ref{eq:ex1-rule}|_{\psi_1}$ and then applying \ref{calc:covered} yields
  \[
    (\Pi_1,\vec{\phi}, [\emptyset,\{\ref{eq:ex1-rule}|_{\psi_1}\},\{\ref{eq:ex1-rule}|_{\psi_1}\}])
  \]
  where $\vec{\phi} = [\ref{eq:ex1-fact},\ref{eq:accel1}]$.
  When attempting a \ref{calc:step} with an element of $\mbp(\ref{eq:ex1-rule})$, we get:

  \noindent
  \resizebox{\textwidth}{!}{
    \begin{minipage}{\textwidth}
    \begin{align*}
      \theta(\cond(\vec{\phi})) & {} \equiv_\AA X_1 = 0 \land X_2 = 5k \land N > 0
\land X_1' < 5001
      \land X'_1 = X_1 + N \land X'_2 = X_2 \\
       \theta(\cond(\ref{eq:ex1-rule})) & {} \equiv_\AA ((X'_1 < 5k \land Y_2 = X'_2) \lor (X'_1 \geq 5k \land Y_2 = X'_2 + 1)) \\
      \hspace{1.3em} \bigwedge_{\mathclap{\pi \in B \cap \mbp(\ref{eq:ex1-rule})}} \neg
      \theta(\cond(\pi)) & {} = \neg \theta(\cond(\ref{eq:ex1-rule}|_{\psi_1})) \equiv_\AA
      X'_1 \geq 5k \lor  Y_2 \neq X'_2
    \end{align*}
    \end{minipage}
  }

  \smallskip
  \noindent
  Here, $5k$ abbreviates $5000$.
  Then \ref{eq:impl-resolution} is equivalent to
  \[
    X_1 = 0 \land X_2 = N = X'_1 = X'_2 = 5k \land Y_2 = 5001.
  \]
  Hence, we have $\sigma \models_\AA X'_1 \geq 5k \land Y_2 = X'_2 + 1$ for the unique model $\sigma$ of \ref{eq:impl-resolution}, i.e., $\sigma$ satisfies the second disjunct of $\cond(\ref{eq:ex1-rule})$.
  Thus, we add $\ref{eq:ex1-rule}|_{\mbp(\cond(\ref{eq:ex1-rule}), \sigma)} = \ref{eq:ex1-rule}|_{\psi_2}$ to the
  trace.
\end{example}

\subsubsection*{Leveraging Incremental SMT} The search for suitable models can naturally be implemented via incremental SMT solving:
When trying to apply \ref{calc:step}, we construct $\theta$ in such a way that $\theta(\cond(\resolve(\vec{\phi}))) = \cond(\resolve(\vec{\phi}))$.
This is easily possible, as $\theta$ just needs to unify predicates whose arguments are duplicate free and pairwise disjoint vectors of variables.
Then we push
\begin{equation}
  \label{eq:impl-resolution2}
  \tag{\sc Incremental}
  \theta(\cond(\phi)) \land \bigwedge_{\pi \in B \cap \mbp(\phi)} \neg\theta(\cond(\pi))
\end{equation}
to the SMT solver.
If the model from the previous resolution step can be extended to satisfy
\ref{eq:impl-resolution2}, then the SMT solver can do so, otherwise it searches for another model.
If it fails to find a model, we pop \ref{eq:impl-resolution2}, i.e., we remove it from the current SMT problem.
\ref{calc:accelerate} can be implemented similarly by popping $\theta(\cond(\vec{\phi}^\circlearrowleft))$ and pushing $\theta(\cond(\phi))$ instead.

Note that satisfiability of $\vec{\phi}$ is an invariant of ADCL.
Hence, as soon as the last element of $\vec{\phi}$ is a query, \ref{calc:refute} can be applied without further SMT checks.
Otherwise, if \ref{calc:step} cannot be applied with any clause, then \ref{calc:backtrack} or \ref{calc:accept} can be applied without further SMT queries.

\subsubsection*{Dealing with Incompleteness} As mentioned in \Cref{sec:ADCL}, we assumed that we have oracles for checking redundancy, satisfiability of $\QF(\AA)$-formulas, and acceleration when we formalized ADCL.
As this is not the case in practice, we now explain how to proceed if those techniques fail or approximate.

As explained above, SMT is needed for checking activity in \ref{calc:step}.
If the SMT solver fails, we assume inactivity.
Thus, we do not exhaust the entire search space if we falsely classify active clauses as inactive.
Hence, we may miss refutations, which is another reason why our current implementation
cannot prove $\sat$.

Regarding acceleration, our implementation of $\accel$ may return under-ap\-prox\-i\-ma\-tions, i.e., we just have $\ground(\accel(\phi)) \subseteq \bigcup_{n \in \NN_{\geq 1}} \ground(\phi^n)$.
While this is uncritical for correctness by itself (as learned clauses are still entailed
by $\Phi$), it weakens our heuristic for redundancy via $\LL$, as we no longer have
$\ground(\phi) = \ground(\LL(\phi))$, but just $\ground(\phi) \subseteq
\ground(\LL(\phi))$ for learned clauses $\phi$.

Another pitfall when using under-approximating acceleration techniques is that we may have $\vec{\phi}^\circlearrowleft \not\sqsubseteq \accel(\vec{\phi}^\circlearrowleft)$.
In this case, applying \ref{calc:accelerate} can result in an inconsistent trace where $\cond(\vec{\phi})$ is unsatisfiable.
To circumvent this problem, we only add $\accel(\vec{\phi}^\circlearrowleft)$ to the trace after removing $\vec{\phi}^\circlearrowleft$ if doing so results in a consistent trace.
Here, we could do better by taking the current model $\sigma$ into account when accelerating $\vec{\phi}^\circlearrowleft$ in order to ensure $\sigma \models_\AA \cond(\accel(\vec{\phi}^\circlearrowleft))$.
We leave that to future work.

\subsubsection*{Restarts} When testing our implementation, we noticed that several instances ``jiggled'', i.e., they were solved in some test runs, but failed in others.
The reason is a phenomenon that is well-known in SAT solving, called ``heavy-tail behavior''.
Here, the problem is that
the solver sometimes gets ``stuck'' in a part of the state space whose exploration is very
expensive, even though finding a solution in another part of the search space is well
within the solver's capabilities.
This problem also occurs in our implementation,
due to the depth-first strategy of our solver (where derivations may even be
non-terminating, see \Cref{thm:non-termination}).
To counter this problem, SAT solvers use restarts \cite{restarts}, where one of the most popular approaches has been proposed by Luby et al.\ \cite{luby}.
For SAT solving, the idea is to restart the search after a certain number of conflicts, where the number of conflicts for the next restart is determined by the \emph{Luby sequence}, scaled by a parameter $u$.
When restarting, randomization is used to avoid revisiting the same part of the search space again.
We use the same strategy with $u=10$, where we count the number of learned clauses instead of the number of conflicts.
To restart the search, we clear the trace, change the seed of the SMT solver (which
may result in different models
such that we may use
different syntactic implicants), and shuffle
the vectors of clauses (to change the order in which clauses are used for
resolution).

\section{Related Work and Experiments}
\label{sec:experiments}

We presented the novel ADCL calculus for (dis)proving satisfiability of CHCs.
Its distinguishing feature is its use of acceleration for learning new clauses.
For unsatisfiability, these learned clauses often enable very short resolution proofs for CHC problems whose original clauses do not admit short resolution proofs.
For satisfiability, learned clauses often allow for covering the entire (usually infinite) search space by just considering finitely many resolution sequences.

\subsubsection*{Related Work}

The most closely related work is \cite{iosif12}, where acceleration is used in two ways: (1) as preprocessing and (2) to generalize interpolants in a CEGAR loop.
In contrast to (1), we use acceleration ``on the fly'' to accelerate resolvents.
In contrast to (2), we do not use abstractions, so our learned clauses can directly be used in resolution proofs.
Moreover, \cite{iosif12} only applies acceleration to conjunctive clauses, whereas we accelerate conjunctive variants of arbitrary clauses.
So in our approach, acceleration techniques are applicable more often, which is particularly useful for finding long counterexamples.
However, our approach \emph{solely} relies on acceleration to handle recursive CHCs, whereas \cite{iosif12} incorporates acceleration techniques into a CEGAR loop, which can also analyze recursive CHCs without accelerating them.
Thus, the approach from \cite{iosif12} is orthogonal to ADCL.
Both (1) and (2) are implemented in \tool{Eldarica}, but according to its authors, (2) is just supported for transition systems, but not yet for CHCs.
Hence, we only considered (1) in our evaluation (named \tool{Eld.} Acc.\ below).
Earlier, an approach similar to (2) has been proposed in \cite{caniart08}, but to the best of our knowledge, it has never been implemented.

\emph{Transition power abstraction} (TPA) \cite{golem} computes a sequence of over-ap\-prox\-i\-ma\-tions for transition systems where the $n^{th}$ element captures $2^n$ instead of just $n$ steps of the transition relation.
So like ADCL, TPA can help to find long refutations quickly, but in contrast to ADCL, TPA
relies on over-approximations.

Some leading techniques for CHC-SAT like GPDR \cite{GPDR} and, in particular, the
\tool{Spacer} algorithm \cite{spacer}, are adaptions of the \tool{IC3}
algorithm \cite{ic3} from transition systems to CHCs.
\tool{IC3} computes a sequence of abstractions of reachable states, aiming to find an abstraction that is inductive w.r.t.\ the transition relation and implies safety.

Other approaches for CHC-SAT are based on interpolation \cite{eldarica,ultimate-chc}, CEGAR and predicate abstraction \cite{eldarica,GrebenshchikovLPR12}, automata \cite{ultimate-chc}, machine learning \cite{freqhorn,synthhorn}, bounded model checking (BMC) \cite{bmc}, or combinations thereof.

Related approaches for transition systems include \cite{fast} and \cite{flata}.
The approach of \cite{fast} uses acceleration to analyze a sequence of \emph{flattenings} of a given transition system, i.e., under-approximations without nested loops,  until a counterexample is found or a fixpoint is reached.
Like ADCL, this approach does not terminate in general.
However, it does terminate for so-called \emph{flattable} systems.
Whether ADCL terminates for flattable systems as well is an interesting question for future work.
In contrast to ADCL, \cite{fast} has no notion of learning or redundancy, so that the same computations may have to be carried our several times for different flattenings.

The technique of \cite{flata} also lifts acceleration techniques to transition systems, but circumvents non-termination by using approximative acceleration techniques in the presence of disjunctions.
In contrast, ADCL handles disjunctions via syntactic implicants.
Like ADCL, \cite[Alg.~2]{flata} learns new transitions (Line 9), but only if they are non-redundant (Line 8).
However, it applies acceleration to all syntactic self-loops, whereas ADCL explores the
state space starting from facts, such that only reachable loops are accelerated.
Note that the approach from \cite{flata} is very similar to the approach that has been used by earlier versions of \tool{LoAT} for proving non-termination \cite{loat}.
We recently showed in \cite{adcl-nt} that for the purpose of proving non-termination,
ADCL is superior to \tool{LoAT}'s earlier approach.

Finally, \cite{underapprox15} uses under-approximating acceleration techniques to enrich
the control-flow graph of \pl{C} programs in order to find ``deep counterexamples'', i.e., long refutations.
In contrast to ADCL, \cite{underapprox15} relies on external model checkers for finding
counterexamples, and it has no notion of redundancy so that the model checker may explore
``superfluous'' paths that use original instead of accelerated edges of the control-flow graph.

Regarding acceleration, there are many results regarding classes of loops over integer variables where linear arithmetic suffices to express their transitive closure, i.e., they can be accelerated within a decidable theory.
The most important such classes are Difference Bounds \cite{differenceBounds}, Octagons \cite{bozga09a}, Finite Monoid Affine Relations \cite{finkel02}, and Vector Addition Systems with States \cite{vass}.
In an orthogonal line of research, monotonicity-based acceleration techniques have been developed \cite{underapprox15,fmcad19,acceleration-calculus}.
While the latter provide fewer theoretical guarantees in terms of completeness and whether the result can be expressed in a decidable logic or not, they are not restricted to loops whose transitive closure is definable in linear arithmetic.

Regarding other theories, the technique from \cite{vass} for Vector Addition Systems with States has also been applied to systems over rationals \cite{silverman19}.
Similarly, monotonicity-based approaches immediately carry over to rationals or reals.
The only approach for acceleration in the presence of Boolean variables that we are aware of is \cite{SchrammelJ11}.
However, this technique yields over-approximations.

Finally, some acceleration techniques for arrays have been proposed, e.g., \cite{aligators,accel-arr}.
The approach of \cite{accel-arr} improves the framework of \cite{vampire-arr} to reason
about programs with arrays using a first-order theorem prover by integrating specialized
techniques for dealing with array accesses where the indices are monotonically increasing or decreasing.
The technique of \cite{aligators} uses quantifier elimination techniques to accelerate loops where arrays can be separated into \emph{read-} and \emph{write-only} arrays.

\subsubsection*{Experiments}

So far, our implementation of ADCL in \tool{LoAT} is restricted to integer arithmetic.
Thus, to evaluate our approach, we used the examples from the category LIA-Lin (linear
CHCs with linear integer arithmetic) from the CHC competition~'22 \cite{CHC-COMP}, which
contains numerous CHC problems resulting from actual program
verification tasks.
Somewhat surprisingly, these examples contain additional features like variables of type
{\tt Bool} and the operators {\tt div} and {\tt mod}.
Since variables of type {\tt Bool} are used in most of the examples, we extended our implementation with rudimentary support for {\tt Bool}s.
In particular, we implemented a simplistic acceleration technique for {\tt Bool}s (note
that we cannot use the approach of \cite{SchrammelJ11}, as it yields over-approximations).
We excluded the $72$ examples that use {\tt div} or {\tt mod}, as those operators are not supported by our implementation.

To accelerate CHCs where some variables are of type {\tt Bool}, we use an adaption of the acceleration calculus from \cite{acceleration-calculus}.
To apply it to $\phi \Def (\F(\vec{X}) \land \psi \implies \F(\vec{Y}))$, $\phi$ needs to
be \emph{deterministic}, i.e., there must be a substitution $\theta$ such that
$\psi
\models_\AA \vec{Y} = \theta(\vec{X})$ and $\VV(\theta(\vec{X})) \subseteq
\vec{X}$.
Then \tool{LoAT} has to compute a \emph{closed form}, i.e., a vector $\vec{C}$ such that $\vec{C} \equiv_\AA \theta^N(\vec{X})$.
For integer variables, closed forms are computed via recurrence solving.
For Boolean variables $B$, \tool{LoAT} can only construct a closed form if there is a $k \in \NN$ such that $\theta^k(B)$ does not contain Boolean variables, or $\theta^k(B) = \theta^{k+1}(B)$.
Once a closed form has been computed, the calculus from \cite{acceleration-calculus} can be applied.
However, in the presence of Booleans, it has to be restricted to theory-agnostic acceleration techniques.
So more precisely, in the presence of Booleans, only the acceleration techniques \emph{monotonic increase} and \emph{monotonic decrease} from \cite{acceleration-calculus} can be used.

Using the remaining $427$ examples, we compared our implementation with the leading
CHC-SAT solvers \tool{Spacer}~\cite{spacer} (which is part of \tool{Z3}~\cite{z3}),
\tool{Eldarica}~\cite{eldarica}, and \tool{Golem}~\cite{golem}.
Additionally, we compared with \tool{Z3}'s implementation of BMC.
As mentioned above, \tool{Eldarica} supports acceleration as preprocessing.
Thus, besides \tool{Eldarica}'s default configuration (which does not use acceleration), we also compared with a configuration \tool{Eld.} Acc.\ where we enabled this feature.
By default, \tool{Golem} uses the \tool{Spacer} algorithm, but
the \tool{Spacer} implementation in \tool{Z3} performed better
in our tests.
Thus, we used \tool{Golem}'s implementation of TPA instead, which targets similar classes
of examples like ADCL, as explained above.
We used \tool{Z3} 4.11.2, \tool{Eldarica} 2.0.8, and \tool{Golem} 0.3.0 and ran our experiments on \tool{StarExec} \cite{starexec} with a wallclock timeout of $300$s, a cpu timeout of $1200$s, and a memory limit of $128$GB per example.

The results can be seen in \Cref{tab1}.
We evaluated all tools on the $209$ examples that do not use {\tt Bool}s (\textsf{Int only}) and on the entire benchmark set (\textsf{Int \& Bool}).
The table on the left shows that \tool{LoAT} is very competitive w.r.t.\ proving $\unsat$ in terms of solved instances.
The entries in the column ``unique'' show the number of examples where the respective tool
succeeds and all others fail. Here we disregard \tool{Eld.} Acc., as it would be pointless to consider
several variants of the same algorithm in such a comparison.
If we consider \tool{Eld.} Acc.\ instead of \tool{Eldarica}, then the numbers change
according to the values given in parentheses.

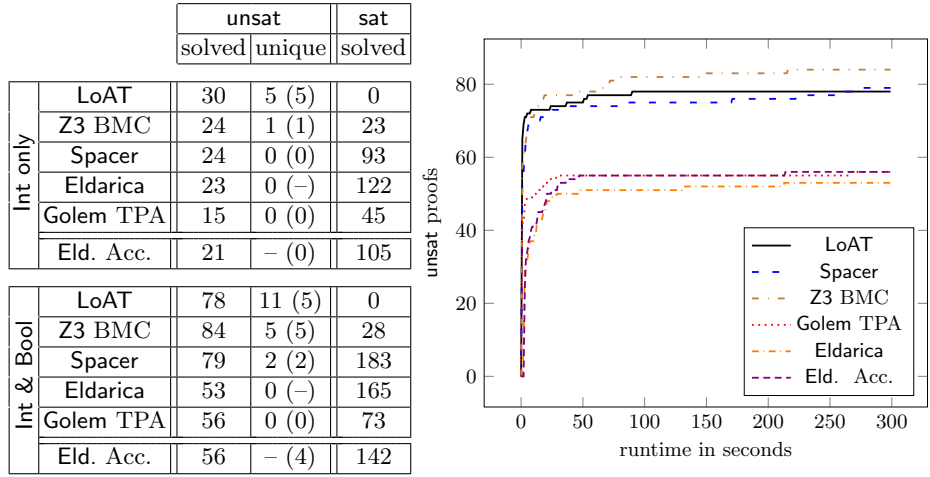
\begin{figure}[t]
  \begin{minipage}{0.438\textwidth}
    \begin{tabular}{|c|c||c|c||c|}
      \hhline{~~---} \multicolumn{1}{c}{} & \multicolumn{1}{c|}{} & \multicolumn{2}{c||}{$\unsat$} & $\sat$ \\
      \hhline{~~---} \multicolumn{1}{c}{} & \multicolumn{1}{c|}{} & solved & unique & solved \\
      \hhline{~~---} \multicolumn{2}{c}{} \\[-0.5em] \hline

      \multirow{4}{*}{\rotatebox{90}{\sf Int only \hspace{1.75em}}} & \tool{LoAT} & 30 & 5 (5) & 0 \\
      \hhline{~----} & \tool{Z3} BMC & 24 & 1 (1) & 23 \\
      \hhline{~----} & \tool{Spacer} & 24 & 0 (0) & 93 \\
      \hhline{~----} & \tool{Eldarica} & 23 & 0 (--) & 122 \\
      \hhline{~----} & \tool{Golem} TPA & 15 & 0 (0) & 45 \\
      \hhline{|~|====} & \tool{Eld.} Acc. & 21 & -- (0) & 105 \\
      \hline \multicolumn{2}{c}{} \\[-0.5em]

      \hline \multirow{4}{*}{\rotatebox{90}{\sf Int \& Bool \hspace{1em}}} & \tool{LoAT} & 78 & 11 (5) & 0 \\
      \hhline{~----} & \tool{Z3} BMC & 84 & 5 (5) & 28 \\
      \hhline{~----} & \tool{Spacer} & 79 & 2 (2) & 183 \\
      \hhline{~----} & \tool{Eldarica} & 53 & 0 (--) & 165 \\
      \hhline{~----} & \tool{Golem} TPA & 56 & 0 (0) & 73 \\
      \hhline{|~|====} & \tool{Eld.} Acc. & 56 & -- (4) & 142 \\
      \hline
    \end{tabular}
  \end{minipage}
  \begin{minipage}{0.557\textwidth}
    \vspace{1em}
    \begin{tikzpicture}[scale=0.86]
      \begin{axis}[
        legend pos=south east,
        xlabel=runtime in seconds,
        ylabel=$\unsat$ proofs,]
        \addplot[color=black,solid,thick] table[col sep=comma,header=false,x index=0,y index=1] {loat_rt_all.csv};
        \addlegendentry{\tool{LoAT}}
        \addplot[color=blue,loosely dashed,thick] table[col sep=comma,header=false,x index=0,y index=1] {spacer_rt_all.csv};
        \addlegendentry{\tool{Spacer}}
        \addplot[color=brown,loosely dashdotted,thick] table[col sep=comma,header=false,x index=0,y index=1] {bmc_rt_all.csv};
        \addlegendentry{\tool{Z3} BMC}
        \addplot[color=red,dotted,thick] table[col sep=comma,header=false,x index=0,y index=1] {golem_rt_all.csv};
        \addlegendentry{\tool{Golem} TPA}
        \addplot[color=orange,dashdotted,thick] table[col sep=comma,header=false,x index=0,y index=1] {eld_rt_all.csv};
        \addlegendentry{\tool{Eldarica}}
        \addplot[color=violet,densely dashed,thick] table[col sep=comma,header=false,x index=0,y index=1] {eld_stac_rt_all.csv};
        \addlegendentry{\tool{Eld.\ } Acc.}
      \end{axis}
    \end{tikzpicture}
  \end{minipage}
  \caption{Comparing \tool{LoAT} with other CHC solvers}
  \label{tab1}
\end{figure}

\begin{table}[h]
  \begin{center}
    \begin{tabular}{|c|c|c|}
      \hline
      example & \tool{LoAT}'s refutation & original refutation \\ \hline \hline
      {\tt chc-LIA-Lin\_043.smt2} & 6 & 965553 \\ \hline
      {\tt chc-LIA-Lin\_045.smt2} & 2 & 684682683 \\ \hline
      {\tt chc-LIA-Lin\_047.smt2} & 3 & 72536 \\ \hline
      {\tt chc-LIA-Lin\_059.smt2} & 3 & 100000001 \\ \hline
      {\tt chc-LIA-Lin\_154.smt2} & 2 & 134217729 \\ \hline
      {\tt chc-LIA-Lin\_358.smt2} & 12 & 400005 \\ \hline
      {\tt chc-LIA-Lin\_362.smt2} & 12 & 400005 \\ \hline
      {\tt chc-LIA-Lin\_386.smt2} & 15 & 600003 \\ \hline
      {\tt chc-LIA-Lin\_401.smt2} & 8 & 200005 \\ \hline
      {\tt chc-LIA-Lin\_402.smt2} & 4 & 134217723 \\ \hline
      {\tt chc-LIA-Lin\_405.smt2} & 9 & 100012 \\ \hline
    \end{tabular}
  \end{center}
  \caption{Comparing lengths of refutation}\vspace*{-1cm}
  \label{tab2}
\end{table}

The numbers indicate that \tool{LoAT} is particularly powerful on examples that operate on {\tt Int}s only, but it is also competitive for proving unsatisfiability of examples that may operate on {\tt Bool}s, where it is only slightly weaker than \tool{Spacer} and \tool{Z3} BMC.
This is not surprising, as the core of \tool{LoAT}'s approach are its acceleration techniques, which have been designed for integers.
In contrast, \tool{Spacer}'s algorithm is similar to GPDR \cite{GPDR}, which generalizes
the \tool{IC3} algorithm \cite{ic3} from transition systems over Booleans to transition systems over theories (like integers), and BMC is theory agnostic.

The figure on the right shows how many proofs of unsatisfiability were found within a given runtime, by each tool.
Here, all examples (\textsf{Int \& Bool}) are taken into account.
\tool{LoAT} finds many proofs of unsatisfiability quickly (73 proofs within 8s).
\tool{Z3} BMC catches up after 12s (73 proofs for both \tool{LoAT} and \tool{Z3} BMC) and
takes over the lead after 14s (\tool{LoAT} 73, \tool{Z3} BMC 74).
\tool{Spacer} catches up with \tool{LoAT} after
260s.

To illustrate \tool{LoAT}'s ability to find short refutations, \Cref{tab2} compares the number of resolution steps in \tool{LoAT}'s ``accelerated'' refutations (that also use learned clauses) with the corresponding refutations that only use original clauses.
Here, we restrict ourselves to those instances that can only be solved by \tool{LoAT}, as
the unsatisfiable CHC problems that can also be solved by other tools usually already
admit quite short refutations without learned clauses.
To compute the length of the original refutations, we instrumented each predicate with an additional argument $\mathit{c}$.
Moreover, we extended the condition of each fact $\psi \implies \G(..., \mathit{c})$ with $\mathit{c} = 1$ and the condition of each rule $\F(..., \mathit{c}) \land \psi \implies \G(..., \mathit{c}')$ with $\mathit{c}' = \mathit{c} + 1$.
Then the value of $\mathit{c}$ before applying a query corresponds to the number of
resolution steps that one would need if one only used original clauses, and it can be extracted from the model found by the SMT solver.
The numbers clearly show that learning clauses via acceleration allows to reduce the length of refutations dramatically.
In $76$ cases, \tool{LoAT} learned clauses with non-linear arithmetic.

Our implementation is open-source and available on Github.
For the sources, a pre-compiled binary, and more information on our evaluation, we refer to \cite{artifact,website}.
In future work, we plan to extend our implementation to also prove $\sat$, and we will investigate how to construct models for satisfiable CHC problems.
Moreover, we want to add support for further theories by developing specialized acceleration techniques.
Furthermore, we intend to lift ADCL to non-linear CHCs.

\bibliographystyle{splncs04}
\providecommand{\noopsort}[1]{}

\report{\begin{appendix}

  \section{Additional Definitions}

  Here, we provide two additional definitions that are used in the proofs.

  \begin{definition}[Resolution with Ground Instances]
    \label{def:resolution-ground}
    Let $\phi$ and $\phi'$ be ground instances of CHCs.\footnote{Note that in our setting, in general a ground instance of a CHC is not a CHC, because here the predicates are not applied to variables anymore.
    }
    If
    \[
      \phi = (\eta \implies \F(\vec{s})) \quad \text{and} \quad \phi' = (\F(\vec{s}) \implies \eta'),
    \]
 \begin{align*}
    \text{then} && \resolve(\phi,\phi') & {} \Def \eta \implies \eta'. \\
    \text{Otherwise,} && \resolve(\phi,\phi') & {} \Def (\bot \implies \bot).
 \end{align*}
 Here, $\eta$ can also be $\top$ and $\eta'$ can also be $\bot$.
    For non-empty sequences of ground instances, $\resolve$ is defined analogously to sequences of CHCs.
  \end{definition}

  \begin{definition}[Liftings of Resolution]
  \label{def:resolution-lifted}
    We define:
  \begin{align*}
    \resolve(\LL) & {} \Def \{ \resolve(\vec{\pi}) \mid \vec{\pi} \in \LL \} & \text{for sets of sequences of CHCs $\LL$} \\
    \resolve(\Phi_1,\Phi_2) & {} \Def \{ \resolve(\phi_1,\phi_2) \mid \phi_1 \in \Phi_1, \mathrlap{\phi_2 \in \Phi_2 \}} & \text{for sets of CHCs $\Phi_1,\Phi_2$} \\
    \resolve(\LL_1,\LL_2) & {} \Def \resolve(\resolve(\LL_1),\resolve(\LL_2)) & \text{for sets of sequences of CHCs $\LL_1,\LL_2$}
  \end{align*}
  For sets of (sequences of) ground instances of CHCs, we lift $\resolve$ analogously.
\end{definition}

  \section{Auxiliary Lemmas}\label{Auxiliary Lemmas}

This appendix contains a number of auxiliary lemmas that are needed for the proofs of our
main results. We start with a lemma which essentially corresponds to the classical lifting
lemma of resolution.

  \begin{lemma}[Resolution Distributes over $\ground$]
    \label{lem:distribute}
    For two CHCs $\phi_1, \phi_2$ we have:
    \[
      \ground(\resolve(\phi_1, \phi_2)) = \{\phi \in \resolve(\ground(\phi_1),\ground(\phi_2)) \mid \;\models_\AA \cond(\phi)\}
    \]
  \end{lemma}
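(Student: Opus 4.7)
The plan is to prove equality of the two sets by mutual inclusion, unfolding \Cref{def:resolution}, \Cref{def:resolution-ground}, \Cref{def:resolution-lifted}, and the definition of $\ground$. Write $\phi_1 = (\eta_1 \land \psi_1 \implies \F(\vec{x}))$ and $\phi_2 = (\F(\vec{y}) \land \psi_2 \implies \eta_2)$ with disjoint variables, and let $\theta = \mgu(\F(\vec{x}), \F(\vec{y}))$; recall from the remark before \Cref{def:resolution} that $\theta$ is always a variable renaming in our setting. The boundary case where $\theta$ does not exist is easy: then $\resolve(\phi_1,\phi_2) = (\bot \implies \bot)$, so its $\ground$-set is empty, and every ground resolvent must also fail (since unifiability of ground head-literals would entail unifiability of the non-ground ones), so the right-hand side is empty as well. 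In the symmetric degenerate cases where $\phi_1$ is not of the form required by \Cref{def:resolution} (e.g.\ $\phi_1$ has head $\bot$), both sides are again empty.

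For the inclusion ``$\subseteq$'', take $g \in \ground(\resolve(\phi_1,\phi_2))$. By definition, $g = (\eta_1 \implies \eta_2)\theta\sigma$ for some $\sigma$ with $\sigma \models_\AA (\psi_1 \land \psi_2)\theta$. Composing substitutions, $\theta\sigma \models_\AA \psi_1$ and $\theta\sigma \models_\AA \psi_2$, so
\[
g_1 \Def (\eta_1 \implies \F(\vec{x}))\theta\sigma \in \ground(\phi_1), \qquad g_2 \Def (\F(\vec{y}) \implies \eta_2)\theta\sigma \in \ground(\phi_2).
\]
Since $\theta$ unifies $\F(\vec{x})$ and $\F(\vec{y})$, the head of $g_1$ equals the body-literal of $g_2$, so $\resolve(g_1,g_2) = g$ via \Cref{def:resolution-ground}. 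Finally, $\cond(g) = \top$ (ground instances carry no residual condition), so $g$ belongs to the right-hand side.

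For ``$\supseteq$'', take $\phi \in \resolve(\ground(\phi_1),\ground(\phi_2))$ with $\models_\AA \cond(\phi)$. The latter rules out $\phi = (\bot \implies \bot)$, so we must be in the successful case of \Cref{def:resolution-ground}: there are $g_1 = (\eta_1 \implies \F(\vec{x}))\tau_1 \in \ground(\phi_1)$ with $\tau_1 \models_\AA \psi_1$ and $g_2 = (\F(\vec{y}) \implies \eta_2)\tau_2 \in \ground(\phi_2)$ with $\tau_2 \models_\AA \psi_2$ such that $\F(\vec{x})\tau_1 = \F(\vec{y})\tau_2$. Because $\vec{X}_1 \cap \vec{X}_2 = \varnothing$, we can combine $\tau_1$ and $\tau_2$ into a single substitution $\tau$ which is then a unifier of $\F(\vec{x})$ and $\F(\vec{y})$. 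By the defining property of the mgu, $\tau = \theta\sigma'$ for some $\sigma'$; then $\sigma' \models_\AA (\psi_1 \land \psi_2)\theta$, and $\phi = (\eta_1 \implies \eta_2)\theta\sigma'$ is by definition a ground instance of $\resolve(\phi_1,\phi_2)$.

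The main obstacle is the bookkeeping around substitutions — verifying that the combined $\tau$ is well-defined thanks to variable disjointness, that $\theta$ being a variable renaming lets us treat $\theta\sigma$ and $\sigma\theta$ interchangeably on the relevant variables, and that the degenerate unification-failure case is handled consistently on both sides. Once these details are aligned, the lemma follows by a direct unfolding of the definitions.
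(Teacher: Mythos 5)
Your proof is correct and follows essentially the same route as the paper's: handle the unification-failure case where both sides are empty, then exploit variable disjointness and the mgu's factoring property to pass between pairs of models $(\sigma_1,\sigma_2)$ of the two conditions and a single model of the conjoined, $\theta$-instantiated condition (the paper's step $(\dagger)$). The only difference is presentational — you argue by mutual inclusion where the paper writes one chain of set equalities — so no substantive gap remains.
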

  \begin{proof}
    If $\resolve(\phi_1, \phi_2) = (\bot \implies \bot)$, then $\resolve(\pi_1,\pi_2) = (\bot \implies \bot)$ for all ground instances $\pi_1$ of $\phi_1$ and $\pi_2$ of $\phi_2$.
    Thus,
    \[
      \ground(\resolve(\phi_1, \phi_2)) = \{\phi \in \resolve(\ground(\phi_1),\ground(\phi_2)) \mid \; \models_\AA \cond(\phi) \} = \emptyset.
    \]
    Assume $\resolve(\phi_1, \phi_2) \neq (\bot \implies \bot)$.
    Then we have $\phi_1 = (\eta_1 \land \psi_1 \implies \eta'_1)$, $\phi_2 = (\eta_2 \land \psi_2 \implies \eta'_2)$, and $\theta = \mgu(\eta'_1, \eta_2)$.
    Then
    \begin{align*}
      & \{\phi \in \resolve(\ground(\phi_1),\ground(\phi_2)) \mid \; \models_\AA \cond(\phi) \} \\
      {} = {} & \{\resolve(\pi_1, \pi_2) \mid \pi_1 \in \ground(\phi_1), \pi_2 \in \ground(\phi_2), \models_\AA \cond(\phi)\} \tag{\Cref{def:resolution-lifted}} \\
      {} = {} & \{\resolve(\phi_1\sigma_1, \phi_2\sigma_2) \mid \; \sigma_1 \models_\AA \psi_1, \sigma_2 \models_\AA \psi_2, \eta'_1\sigma_1 = \eta_2\sigma_2\} \tag{def.\ of $\ground$} \\
      {} = {} & \{\eta_1\sigma_1 \implies \eta'_2\sigma_2 \mid \sigma_1 \models_\AA \psi_1, \sigma_2 \models_\AA \psi_2, \eta'_1\sigma_1 = \eta_2\sigma_2\} \tag{\Cref{def:resolution-ground}} \\
      {} = {} & \{\eta_1\theta\sigma \implies \eta'_2\theta\sigma \mid \sigma \models_\AA \psi_1\theta, \sigma \models_\AA \psi_2\theta\} \tag{$\dagger$} \\
      {} = {} & \{\eta_1\theta\sigma \implies \eta'_2\theta\sigma \mid \sigma \models_\AA \psi_1\theta \land \psi_2\theta\} \\
      {} = {} & \ground((\eta_1 \land \psi_1 \land \psi_2 \implies \eta'_2)\theta) \tag{by def.\ of $\ground$}\\
      {} = {} & \ground(\resolve(\phi_1, \phi_2)) \tag{\Cref{def:resolution}}
    \end{align*}
    The step marked with $(\dagger)$ holds as $\theta$ is the most general unifier of
    $\eta'_1$ and $\eta_2$ and as $\phi_1$ and $\phi_2$ are assumed to be variable disjoint in
    \Cref{def:resolution}.
    Therefore, there exists a model $\sigma$ such that $x\sigma_1 = x\theta \sigma$ for all $x \in \VV(\phi_1)$ and $x\sigma_2 = x\theta \sigma$ for all $x \in \VV(\phi_2)$. \qed
  \end{proof}

  \begin{lemma}[Associativity of Resolution]
    \label{lem:assoc}
    \[
      \ground(\resolve(\resolve(\phi_1,\phi_2), \phi_3)) = \ground(\resolve(\phi_1,\resolve(\phi_2,\phi_3)))
    \]
  \end{lemma}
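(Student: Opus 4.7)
The plan is to reduce both sides to the same set of ground clauses by two nested applications of the previous lemma (Resolution Distributes over $\ground$). Concretely, I would first unfold each side using \Cref{lem:distribute}: the left-hand side becomes
\[
  \{\phi \in \resolve(\ground(\resolve(\phi_1,\phi_2)),\ground(\phi_3)) \mid\; \models_\AA \cond(\phi)\},
\]
and applying \Cref{lem:distribute} a second time to $\ground(\resolve(\phi_1,\phi_2))$ rewrites this in terms of ground instances $\pi_1 \in \ground(\phi_1)$, $\pi_2 \in \ground(\phi_2)$, $\pi_3 \in \ground(\phi_3)$. The right-hand side unfolds symmetrically, with the outer \Cref{lem:distribute} applied to the pair $(\phi_1,\resolve(\phi_2,\phi_3))$ and the inner one to $\resolve(\phi_2,\phi_3)$.

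After these unfoldings, it remains to prove associativity at the ground level: for any three ground CHC-instances $\pi_1, \pi_2, \pi_3$ with satisfiable condition of the final resolvent,
\[
  \resolve(\resolve(\pi_1,\pi_2),\pi_3) = \resolve(\pi_1,\resolve(\pi_2,\pi_3)).
\]
This is essentially a propositional observation from \Cref{def:resolution-ground}: ground resolution simply requires $\mathsf{head}(\pi_1) = \mathsf{body}(\pi_2)$ and $\mathsf{head}(\pi_2) = \mathsf{body}(\pi_3)$, and in both bracketings the final clause is $\mathsf{body}(\pi_1) \land \mathsf{cond}(\pi_1) \land \mathsf{cond}(\pi_2) \land \mathsf{cond}(\pi_3) \implies \mathsf{head}(\pi_3)$. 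I would handle the failure cases (where any of the intermediate unifications fail, yielding $\bot \implies \bot$) by a small case distinction showing that both sides then produce $\bot \implies \bot$, whose $\ground$ is empty.

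Putting these together gives that both sides of the claimed equation equal
\[
  \{ \mathsf{body}(\pi_1) \land \psi \implies \mathsf{head}(\pi_3) \mid \pi_i \in \ground(\phi_i),\ \mathsf{head}(\pi_1)=\mathsf{body}(\pi_2),\ \mathsf{head}(\pi_2)=\mathsf{body}(\pi_3),\ \models_\AA \psi \},
\]
which establishes the lemma.

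The main obstacle I anticipate is bookkeeping around the substitutions and satisfiability side conditions: the intermediate mgus on each side act on different variable sets, and one has to verify (as in the step marked $(\dagger)$ in the proof of \Cref{lem:distribute}) that a ground-instance description is invariant under the chosen bracketing. Once this is checked, the propositional associativity at the ground level is immediate, and the two applications of \Cref{lem:distribute} combine cleanly.
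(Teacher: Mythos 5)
Your proposal is correct and takes essentially the same route as the paper: both use \Cref{lem:distribute} to descend to ground instances and then establish associativity at the ground level directly from \Cref{def:resolution-ground} (the paper does this element-wise via the chain $\eta_1 \implies \eta_2$, $\eta_2 \implies \eta_3$, $\eta_3 \implies \eta_4$). One cosmetic remark: in the paper's formalization ground instances carry no condition, so the ground resolvent is simply $\mathsf{body}(\pi_1) \implies \mathsf{head}(\pi_3)$, and the satisfiability filter inherited from \Cref{lem:distribute} only serves to discard $\bot \implies \bot$, exactly as in your failure-case discussion.
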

  \begin{proof}
    Let $\phi \in \ground(\resolve(\resolve(\phi_1,\phi_2), \phi_3))$.
    Due to \Cref{lem:distribute}, there are ground instances $\phi'_1,\phi'_2,\phi'_3$ of $\phi_1,\phi_2,\phi_3$ such that $\phi = \resolve(\resolve(\phi'_1,\phi'_2), \phi'_3)$.
    Since $\phi'_1,\phi'_2,\phi'_3$ are ground, there exist ground atoms $\eta_1, \ldots, \eta_4$ such that
    \begin{align*}
      & \resolve(\resolve(\phi'_1,\phi'_2), \phi'_3) \\
      {} = {} & \resolve(\resolve(\eta_1 \implies \eta_2,\eta_2 \implies \eta_3), \eta_3 \implies \eta_4) \\
      {} = {} & \resolve(\eta_1 \implies \eta_3, \eta_3 \implies \eta_4) \\
      {} = {} & (\eta_1 \implies \eta_4) \\
      {} = {} & \resolve(\eta_1 \implies \eta_2, \eta_2 \implies \eta_4) \\
      {} = {} & \resolve(\eta_1 \implies \eta_2, \resolve(\eta_2 \implies \eta_3, \eta_3 \implies \eta_4)) \\
      {} = {} & \resolve(\phi'_1, \resolve(\phi'_2, \phi'_3)) \tag*{\qed}
    \end{align*}
  \end{proof}

  \begin{lemma}[Composition of Ground Instances]
    \label{lem:ground}
    If $\ground(\vec{\phi}_1) \subseteq \ground(\vec{\pi}_1)$ and $\ground(\vec{\phi}_2) \subseteq \ground(\vec{\pi}_2)$, then
    \[
      \ground(\vec{\phi}_1::\vec{\phi}_2) \subseteq \ground(\vec{\pi}_1::\vec{\pi}_2).
    \]
  \end{lemma}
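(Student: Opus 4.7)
The plan is to first reduce the statement to the pointwise case for individual CHCs: if $\ground(\phi_1) \subseteq \ground(\pi_1)$ and $\ground(\phi_2) \subseteq \ground(\pi_2)$, then $\ground(\resolve(\phi_1,\phi_2)) \subseteq \ground(\resolve(\pi_1,\pi_2))$. This pointwise claim follows directly from \Cref{lem:distribute}: any $\phi \in \ground(\resolve(\phi_1,\phi_2))$ can be written as $\resolve(\phi'_1,\phi'_2)$ for ground instances $\phi'_i \in \ground(\phi_i)$ whose composed condition is $\AA$-valid; by the hypotheses $\phi'_i \in \ground(\pi_i)$, so $\phi \in \resolve(\ground(\pi_1),\ground(\pi_2))$, and applying \Cref{lem:distribute} in the reverse direction places $\phi$ in $\ground(\resolve(\pi_1,\pi_2))$.

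To lift this to sequences, I would establish by induction on $|\vec{\phi}_2|$, using the recursive definition of $\resolve$ on sequences together with \Cref{lem:assoc}, the auxiliary identity
\[
  \ground(\resolve(\vec{\phi}_1 :: \vec{\phi}_2)) = \ground(\resolve(\resolve(\vec{\phi}_1),\,\resolve(\vec{\phi}_2))),
\]
so that concatenation of traces corresponds, up to ground instances, to a single resolution step between the two resolvents. Applying the hypotheses in the form $\ground(\resolve(\vec{\phi}_i)) = \ground(\vec{\phi}_i) \subseteq \ground(\vec{\pi}_i) = \ground(\resolve(\vec{\pi}_i))$ together with the pointwise claim established above then yields the desired inclusion
\[
  \ground(\vec{\phi}_1 :: \vec{\phi}_2) \;\subseteq\; \ground(\vec{\pi}_1 :: \vec{\pi}_2).
\]

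The main obstacle is the bookkeeping around the left-associative recursive definition of $\resolve$ on sequences: unfolding shows that $\resolve(\vec{\phi}_1 :: \vec{\phi}_2)$ is computed from the left, so to factor it as a single resolution of $\resolve(\vec{\phi}_1)$ with $\resolve(\vec{\phi}_2)$ one has to repeatedly re-bracket the computation using \Cref{lem:assoc}. Since that associativity lemma is stated only at the level of $\ground$, the reduction has to be carried out on ground-instance sets throughout, which is precisely why the present lemma is phrased as an inclusion of $\ground$-sets rather than of syntactic CHCs. Aside from this, the argument is purely formal manipulation of the definitions, with no further theory-specific ingredients needed.
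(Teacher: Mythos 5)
Your proposal is correct and takes essentially the same route as the paper's proof, which is a single equational chain combining \Cref{lem:assoc} (to rewrite $\ground(\vec{\phi}_1::\vec{\phi}_2)$ as $\ground(\resolve(\resolve(\vec{\phi}_1),\resolve(\vec{\phi}_2)))$) with \Cref{lem:distribute} (to reduce to sets of resolvents of ground instances, where the inclusions follow by monotonicity). Your factoring into a pointwise claim plus a re-bracketing identity is just a reorganization of that argument; if anything, it is slightly more explicit than the paper about the induction needed to lift \Cref{lem:assoc} from three clauses to sequences.
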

  \begin{proof}
    \begin{align*}
      & \ground(\vec{\phi}_1::\vec{\phi}_2)                                                                                                      \\
      {} = {}         & \ground(\resolve(\resolve(\vec{\phi}_1),\resolve(\vec{\phi}_2))) \tag{\Cref{lem:assoc}}                                                  \\
      {} = {}         & \{\phi \in \resolve(\ground(\vec{\phi}_1),\ground(\vec{\phi}_2)) \mid \; \models_\AA \cond(\phi)\} \tag{\Cref{lem:distribute}}           \\
      {} \subseteq {} & \{\phi \in \resolve(\ground(\vec{\pi}_1),\ground(\vec{\phi}_2)) \mid \; \models_\AA \cond(\phi)\} \tag{$\ground(\vec{\phi}_1) \subseteq \ground(\vec{\pi}_1)$} \\
      {} \subseteq {} & \{\phi \in \resolve(\ground(\vec{\pi}_1),\ground(\vec{\pi}_2)) \mid \; \models_\AA \cond(\phi)\} \tag{$\ground(\vec{\phi}_2) \subseteq \ground(\vec{\pi}_2)$} \\
      {} = {}         & \ground(\resolve(\vec{\pi}_1),\resolve(\vec{\pi}_2)) \tag{\Cref{lem:distribute}}                                                         \\
      {} = {}         & \ground(\vec{\pi}_1::\vec{\pi}_2) \tag{\Cref{lem:assoc}}
    \end{align*}
    \qed
  \end{proof}

  \begin{lemma}[Resolution Distributes over $\LL$]
    \label{lem:distribute2}
    If $\Phi \leadsto^* (\Pi,\vec{\phi},\vec{B},\LL)$, $\vec{\pi} \in \mbp(\Pi)^*$, and $\pi \in \mbp(\Pi)$, then
    \[
      \ground(\resolve(\LL(\vec{\pi}::\pi)))) = \ground(\resolve(\LL(\vec{\pi}),\LL(\pi))).
    \]
  \end{lemma}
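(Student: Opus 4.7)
The plan is to unfold the definitions of $\LL$ on sequences and of $\resolve$ on sets (Def.~\ref{def:resolution-lifted}) to rewrite both sides as unions indexed by the same set of pairs, and then establish the claim pointwise via induction using Lemma~\ref{lem:assoc}.

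After unfolding, using $\LL(\vec{\pi}::\pi) = \LL(\vec{\pi})::\LL(\pi)$ and $\ground(X) = \bigcup_{x\in X} \ground(x)$, the left-hand side becomes $\bigcup_{\vec{w}\in\LL(\vec{\pi}),\,\vec{u}\in\LL(\pi)} \ground(\resolve(\vec{w}::\vec{u}))$, while the right-hand side becomes $\bigcup_{\vec{w}\in\LL(\vec{\pi}),\,\vec{u}\in\LL(\pi)} \ground(\resolve(\resolve(\vec{w}),\resolve(\vec{u})))$. Both $\vec{w}$ and $\vec{u}$ are non-empty since every $\LL(\phi)$ is initialized to a singleton of a single CHC and only extended via concatenation and Kleene plus over non-empty words (Def.~\ref{def:calc-reg}). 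It therefore suffices to prove, for arbitrary non-empty sequences $\vec{w},\vec{u}$ of CHCs, the pointwise statement $\ground(\resolve(\vec{w}::\vec{u})) = \ground(\resolve(\resolve(\vec{w}),\resolve(\vec{u})))$.

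I would prove this pointwise statement by induction on $|\vec{u}|$. The base case $|\vec{u}|=1$ is immediate from the iterated definition of $\resolve$ in Def.~\ref{def:resolution} together with $\resolve([\pi']) = \pi'$, since appending a single CHC just resolves it with the left-hand resolvent. For the inductive step, write $\vec{u}=\vec{u}'::\pi'$ with $\vec{u}'$ non-empty: first unfold to $\ground(\resolve(\resolve(\vec{w}::\vec{u}'),\pi'))$ via the base-case observation, then apply the induction hypothesis to the inner $\resolve(\vec{w}::\vec{u}')$ to obtain $\ground(\resolve(\resolve(\resolve(\vec{w}),\resolve(\vec{u}')),\pi'))$, then invoke Lemma~\ref{lem:assoc} on the triple $(\resolve(\vec{w}),\resolve(\vec{u}'),\pi')$ to re-parenthesize, and finally fold $\resolve(\vec{u}'::\pi')$ back in using the base-case observation again. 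Applying the IH inside the outer $\resolve(\cdot,\pi')$ is legitimate because Lemma~\ref{lem:distribute} guarantees that $\ground$-equivalence of a sub-resolvent propagates to the surrounding $\resolve$ call.

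The main obstacle is precisely that Lemma~\ref{lem:assoc} only gives equality of \emph{ground instances}, not syntactic equality, so every rewriting step must be carried out under an outer $\ground(\cdot)$. This forces the induction hypothesis to be stated at the level of $\ground$ as well, and makes Lemma~\ref{lem:distribute} essential as the lifting principle that turns ground-level equalities into usable rewrites inside larger expressions. No further reasoning about the ADCL state invariants is required; the hypotheses $\Phi \leadsto^* (\Pi,\vec{\phi},\vec{B},\LL)$, $\vec{\pi} \in \mbp(\Pi)^*$, and $\pi \in \mbp(\Pi)$ are used only to ensure that $\LL$ is defined on all relevant arguments.
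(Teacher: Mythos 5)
Your proof is correct and follows essentially the same route as the paper's: after unfolding \Cref{def:resolution-lifted} and $\LL(\vec{\pi}::\pi)=\LL(\vec{\pi})::\LL(\pi)$, both arguments reduce the claim to the pointwise fact $\ground(\resolve(\vec{w}::\vec{u}))=\ground(\resolve(\resolve(\vec{w}),\resolve(\vec{u})))$ obtained from \Cref{lem:assoc}. The only difference is that you make explicit the induction on $|\vec{u}|$ and the congruence step via \Cref{lem:distribute}, which the paper leaves implicit when it cites \Cref{lem:assoc} for whole sequences.
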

  \begin{proof}
     \begin{align*}
      & \ground(\resolve(\LL(\vec{\pi}),\LL(\pi))) \\
      {} = {} & \ground(\resolve(\resolve(\LL(\vec{\pi})),\resolve(\LL(\pi))))
      \tag{\Cref{def:resolution-lifted}} \\
      {} = {} & \ground(\{\resolve(\pi_1, \pi_2) \mid \pi_1 \in \resolve(\LL(\vec{\pi})), \pi_2 \in \resolve(\LL(\pi))\}) \tag{\Cref{def:resolution-lifted}} \\
      {} = {} & \ground(\{\resolve(\vec{\pi}_1::\vec{\pi}_2) \mid \vec{\pi}_1 \in \LL(\vec{\pi}), \vec{\pi}_2 \in \LL(\pi)\}) \tag{\Cref{lem:assoc}} \\
      {} = {} & \ground(\{\resolve(\vec{\pi}) \mid \vec{\pi} \in \LL(\vec{\pi}) :: \LL(\pi)\}) \\
      {} = {} & \ground(\{\resolve(\vec{\pi}) \mid \vec{\pi} \in \LL(\vec{\pi}::\pi)\} \tag{by definition of $\LL$}) \\
      {} = {} & \ground(\resolve(\LL(\vec{\pi}::\pi))) \tag{\Cref{def:resolution-lifted}}
    \end{align*}
    \qed
  \end{proof}

\end{appendix}

\appendixproofsection{Missing Proofs}\label{sec:MissingProofs}
\appendixproof*{thm:soundness}
\appendixproof*{thm:normal forms}
\appendixproof*{thm:refutational-complete}
\appendixproof*{lem:redundant}
\appendixproof*{thm:non-termination}
}

\end{document}